\documentclass[10pt,a4paper]{article}
\usepackage[T1]{fontenc}
\usepackage[utf8]{inputenc}
\usepackage{amsmath}
\usepackage{amsfonts}
\usepackage{amssymb}
\usepackage{amsthm}
\usepackage[square,numbers]{natbib}
\usepackage{hyperref} 
\usepackage[a4paper, total={6.5in, 8in}]{geometry}
\usepackage[dvipsnames, table]{xcolor}
\usepackage{graphicx}
\usepackage{float}
\usepackage{doi}
\hypersetup{hidelinks}
\usepackage{tikz}
\usepackage[export]{adjustbox}
\usepackage{layouts}
\usepackage{psfrag}
\usepackage{stmaryrd}
\usepackage{comment}
\usepackage{booktabs}
\usetikzlibrary{shapes.geometric, arrows}

\newcommand{\nn}{\mathcal{NN}}

\newcommand{\relu}{\textnormal{ReLu}}
\newtheorem{theorem}{Theorem}
\newtheorem{corollary}[theorem]{Corollary}
\newtheorem{lemma}[theorem]{Lemma}

\newtheorem{remark}{Remark}
\usepackage{mathtools}
\usepackage{multicol}
\usepackage{pgfplots}
\usetikzlibrary{arrows,shapes,positioning}
\usepgfplotslibrary{groupplots}
\usepackage{multirow}
\usepackage{caption}
\usepackage{subcaption}

\newcommand{\R}{\mathbb{R}}

\newcommand{\SO}{\mathcal{SO}}

\newcommand{\Perm}{\operatorname{Perm}}

\newcommand{\m}{m}
\newcommand{\bsym}{\boldsymbol}
\usepackage{colortbl}
\usepackage[hang,flushmargin,symbol]{footmisc}

\begin{document}

\begin{center}
\noindent\textbf{\Large{Modeling isotropic polyconvex hyperelasticity by neural networks -- sufficient and necessary criteria for compressible and incompressible materials}}
~\\
~\\
~\\
Gian-Luca Geuken$^1$, Patrick Kurzeja$^1$, David Wiedemann$^2$, Martin Zlati\'{c}$^3$, Marko \v{C}ana\dj{}ija$^3$, J\"orn Mosler$^{1,}$\footnote{\noindent Corresponding author\\Email adress: joern.mosler@tu-dortmund.de}
~\\
~\\
$^1$\textit{Institute of Mechanics, Department of Mechanical Engineering,
TU Dortmund University, Leonhard-Euler-Str. 5, 44227 Dortmund, Germany}\\
$^2$\textit{Applied Analysis, Faculty of Mathematics,
TU Dortmund University, Vogelpothsweg 87, 44227 Dortmund, Germany}\\
$^3$\textit{Faculty of Engineering, University of Rijeka, Vukovarska 58, 51000 Rijeka, Croatia}
~\\
~\\
\end{center}
Abstract:
This work investigates different sufficient and necessary criteria for hyperelastic, isotropic polyconvex material models, focusing on neural network implementations for compressible and incompressible materials. Furthermore, the expressiveness, accuracy, simplicity as well as the efficiency of those models is analyzed. This also enables an assessment of the practical applicability of the models. Convex Signed Singular Value Neural Networks (CSSV-NNs) are applied to compressible materials and tailored to incompressibility (inc-CSSV-NNs), resulting in a universal approximation for frame-indifferent, isotropic polyconvex energies for the compressible as well as incompressible case. While other existing approaches also guarantee frame-indifference, isotropy and polyconvexity, they impose too restrictive constraints and thus limit the expressiveness of the model. This is further substantiated by numerical examples of several, well-established classical models (Neo--Hooke, Mooney--Rivlin, Gent and Arruda--Boyce) and Treloar's experimental data. Moreover, the numerical examples include an explicitly constructed energy function that cannot be approximated by neural networks constrained by an improved version of Ball’s criterion for polyconvexity. This substantiates that Ball’s criterion, though sufficient, is not necessary for polyconvexity.
~\\
~\\
Keywords: artificial neural networks, hyperelasticity, polyconvexity, incompressibility, singular values, constitutive modeling
\thispagestyle{empty}
\clearpage

 \section{Introduction}

It is well known that embedding physical and mathematical principles -- such as frame-indifference, isotropy and polyconvexity -- into neural network architectures can significantly improve model robustness and generalization \citep{kumar2022,peng2021,moseley2022,geuken2024}. However, these constraints often come at the cost of expressiveness: enforcing physical properties too rigidly may restrict the representable function space more than necessary. Thus, the central challenge lies in designing machine learning models that incorporate physical knowledge without over-constraining the system.

In the context of hyperelastic materials, polyconvexity serves as a key condition to ensure material stability and the existence of solutions. Traditionally, frame-indifferent and isotropic polyconvex energies are formulated using the invariants of the right Cauchy--Green tensor or the principal stretches, following the seminal polyconvexity criterion by \citet{ball1976}. While this classical framework is mathematically sound and practically convenient, it is sufficient but not necessary for polyconvexity \citep{wiedemann2026}, which inherently limits model flexibility. This limitation directly transfers to neural network models building upon this framework, where polyconvexity improves the robustness, generalization and extrapolation behavior of the model.

Due to this observation, \citet{geuken2025} introduced the Convex Signed Singular Value Neural Network (CSSV-NN) architecture, which is based on a necessary and sufficient criterion for frame-indifferent, isotropic polyconvex energies derived by \citet{wiedemann2023, wiedemann2026}. They can universally approximate any such energy function to arbitrary precision. In contrast, many existing approaches \citep{shen2004,liang2008,linka2021,asad2022,chen2022,linden2023,klein2023,balazi2025,vijay2025,abdolazizi2025} either fail to strictly enforce frame-indifference, isotropy and polyconvexity or do so under overly restrictive sufficient criteria, thereby limiting their expressiveness as stated above.

This work systematically analyzes sufficient and necessary criteria for hyperelastic, frame-indifferent, isotropic and polyconvex energies and investigates their implications for neural network models. This also enables an assessment of the practical applicability of the models. The CSSV-NNs are further extended and adapted to the incompressible case (denoted by inc-CSSV-NN). This is particularly relevant for synthetic materials such as polymers \citep{treloar1944, arrudaboyce1993,steinmann2012} and biological materials such as the periodontal ligament \citep{rees1997,kurzeja2025}. As for the compressible case, existing formulations of frame-indifferent, isotropic polyconvex energies for incompressibility (e.g., \citep{pierre2023,zlatic2024,holthusen2024,dammass2025,kalina2025,holthusen2026}) are too restrictive or do not strictly enforce frame-indifference, isotropy and polyconvexity.

To guide the reader, the main contributions and focus points of this work are summarized as follows:
\begin{itemize}
    \item Analysis of sufficient and necessary criteria for frame-indifferent, isotropic polyconvex hyperelasticity in the compressible and incompressible case.
    \item Extension and adaptation of Convex Signed Singular Value Neural Networks (CSSV-NNs) to incompressible materials (inc-CSSV-NNs).
    \item Proof of a universal approximation theorem for inc-CSSV-NNs representing frame-indifferent, isotropic polyconvex energies of incompressible materials.
    \item Comparative evaluation of expressiveness, accuracy, simplicity as well as efficiency of the presented neural networks for several, well-established models (Neo--Hooke, Mooney--Rivlin, Gent, Arruda--Boyce) and Treloar’s experimental data.
\end{itemize}
In addition numerical examples are provided that
\begin{itemize}
    \item substantiate that Ball’s criterion is sufficient but not necessary for polyconvexity in three dimensions,
    \item indicate that the reduced criterion in terms of the elementary symmetric polynomials of the signed singular values from \citep{wiedemann2026} is also not necessary for polyconvexity in three dimensions,
\end{itemize}
motivating future research in those directions.

The paper is organized as follows. Section 2 constitutes physical and mathematical constraints for hyperelastic modeling and points out how to achieve frame-indifference, isotropy and incompressibility. Based on that, four different criteria for frame-indifferent, isotropic polyconvex energies of compressible and incompressible materials are presented in Section 3. Those are then used for designing appropriate neural networks for incompressible material behavior in Section 4. Section 5 provides a proof of the universal approximation theorem of inc-CSSV-NNs for frame-indifferent, isotropic polyconvex functions of incompressible materials. The algorithmic implementation in terms of network activation, hyperparameters and training is described in Section 6. Finally, numerical examples of several, well-established classical models and experimental data, such as Treloar's data, are discussed in Section 7. Moreover, the numerical examples include an explicitly constructed energy function that cannot be approximated by neural networks constrained by an improved version of Ball’s criterion for polyconvexity.

\section{Physical and mathematical constraints for hyperelastic modeling} \label{sec_fundamentals}
Hyperelastic material models should respect fundamental physical principles and mathematical restrictions. In this work, the focus is on compressible and incompressible, isotropic hyperelastic materials.
\subsection{The compressible case}
From a physical standpoint, a hyperelastic model should satisfy the following requirements \citep{truesdell1965, geuken2025}:
\begin{enumerate}
\item \textbf{Path Independence:}  
Stresses derive from a potential $\Psi$. For example, the first Piola--Kirchhoff stress tensor is given by
\begin{equation}
\boldsymbol{P} = \frac{\partial \Psi}{\partial \boldsymbol{F}}, \label{eq:definition_P}
\end{equation}
where $\boldsymbol{F}$ denotes the deformation gradient.

\item \textbf{Thermodynamic Consistency:}  
The second law of thermodynamics must hold (which is automatically satisfied in hyperelasticity if stresses derive from $\Psi$), reading here:
\begin{equation}
\boldsymbol{P} : \dot{\boldsymbol{F}} - \dot{\Psi} = 0.
\end{equation}

\item \textbf{Frame Indifference (Objectivity):}  
The energy must remain invariant under superimposed rigid body rotations of the current configuration:
\begin{subequations}\label{eq:def:objectivity}
\begin{align}
\Psi(\boldsymbol{F}) &= \Psi(\boldsymbol{Q} \cdot \boldsymbol{F}) 
\quad \forall \, \boldsymbol{Q} \in \SO(3), \label{eq:def:objectivity:a} \\
\intertext{Consequently, $\Psi$ depends implicitly on the right Cauchy--Green tensor $\boldsymbol{C} = \boldsymbol{F}^\mathsf{T} \cdot \boldsymbol{F}$. This leads to}
\boldsymbol{P} \cdot \boldsymbol{F}^\mathsf{T} 
&= 2 \boldsymbol{F} \cdot \frac{\partial \Psi}{\partial \boldsymbol{C}} \cdot \boldsymbol{F}^\mathsf{T} 
= \boldsymbol{F} \cdot \boldsymbol{P}^\mathsf{T}, \label{eq:def:objectivity:b}
\end{align}
\end{subequations}
ensuring balance of angular momentum.

\item \textbf{Isotropy:}  
The energy must be invariant under rotations of the reference configuration:
\begin{align}
\Psi(\boldsymbol{F}) = \Psi(\boldsymbol{F} \cdot \boldsymbol{Q}^\mathsf{T}) \quad \forall \, \boldsymbol{Q} \in \SO(3).
\label{eq:def:isotropic}
\end{align}

\item \textbf{Growth Conditions:}  
To prevent material self-penetration and ensure physically meaningful behavior, the energy must approach infinity under extreme deformations (also related to coerciveness):
\begin{subequations}
\begin{align}
&\Psi(\boldsymbol{F}) \to \infty \textnormal{ as } \det \boldsymbol{F} \to 0^+, \quad \Psi(\boldsymbol{F}) = \infty \textnormal{ if } \det \boldsymbol{F} \leq 0, \label{eq:growthcond_a}\\
&\Psi(\boldsymbol{F}) \to \infty \textnormal{ as } \{\|\boldsymbol{F}\| + \|\mathrm{cof}\,\boldsymbol{F}\| + \det \boldsymbol{F}\} \to \infty.
\label{eq:growthcond_b}
\end{align}
\end{subequations}
\end{enumerate}
Principles such as determinism, equipresence, and locality are automatically satisfied in this context. Additional constraints, e.g., energy and stress normalization, may be imposed if required \citep{geuken2025}.

From a mathematical perspective, an important property is the lower semicontinuity of the energy functional
\begin{equation*}
\boldsymbol{u} \mapsto I(\boldsymbol{u}) = \int_\Omega \Psi(\nabla \boldsymbol{u}) \, \mathrm{d}V,
\end{equation*}
which guarantees the existence of a minimizer in the set of admissible displacements $\boldsymbol{u}$. Foundational work by Morrey \cite{Mor52, Mor66} and later extensions by Meyers \cite{Mey65}, Acerbi and Fusco \cite{AF84} and Marcellini \cite{Mar85} established that $I$ is lower semicontinuous if $\Psi$ is quasiconvex and satisfies suitable growth and coercivity conditions (see \cite[Theorem 8.29]{Dac08}). However, these conditions exclude energies that take the value $\infty$, making them incompatible with \eqref{eq:growthcond_a} and \eqref{eq:growthcond_b}. To overcome this, Ball \cite{Bal76, ball1977} introduced polyconvexity as a weaker convexity notion that still guarantees lower semicontinuity and existence of minimizers under milder assumptions that are in particular compatible with the growth conditions \eqref{eq:growthcond_a} and \eqref{eq:growthcond_b} (see also \cite[Theorem 8.31]{Dac08}). While polyconvexity implies quasiconvexity and rank-one convexity, polyconvexity is easier to verify and implement. Moreover, it accommodates all physical constraints, making it particularly suitable for modeling elasticity.

\begin{itemize}
\item[7.] \textbf{Polyconvexity:}  
The energy function $\Psi \colon \mathbb{R}^{3\times3} \to \mathbb{R}_\infty \coloneqq \mathbb{R} \cup \{\infty\}$ should be polyconvex, meaning that there exists a convex and lower semicontinuous function $\widehat{\Psi} \colon \mathbb{R}^{3\times3} \times \mathbb{R}^{3\times3} \times \mathbb{R} \to \mathbb{R}_\infty$ such that
\begin{equation}\label{eq:def:Polyconvex}
\Psi(\boldsymbol{F}) = \widehat{\Psi}(\boldsymbol{F}, \mathrm{cof}\,\boldsymbol{F}, \det \boldsymbol{F}) \textnormal{ convex in } \boldsymbol{F}, \mathrm{cof}\,\boldsymbol{F} \textnormal{ and}\, \det \boldsymbol{F} \textnormal{ } \forall \, \boldsymbol{F} \in \mathbb{R}^{3\times3}.
\end{equation}
\end{itemize}

Another relevant material stability condition is the recently rediscovered true-stress–true-strain monotonicity \citep{wollner2026}. While a comparison of these two conditions and their implications for neural network material models would be highly interesting, such an analysis lies beyond the scope of the present work. 

\subsection{The incompressible case}\label{ssec:Incomp}
Although most of the requirements for compressible materials discussed above also apply to incompressible materials, certain modifications have to be introduced to account for the incompressibility constraint
\begin{align}
J = \det \boldsymbol{F} = 1. \label{eq_inc_constr}
\end{align}
First of all, the energy potential 
\begin{align}
\Psi^{\textnormal{inc}}(\bsym F) \quad \forall \textnormal{ } \bsym F \in \R^{3 \times 3} \textnormal{ with } J = \det \boldsymbol{F} = 1
\end{align}
needs to be defined only for $\bsym F$ satisfying Eq.~\eqref{eq_inc_constr}.
In line with \citep{ball1976}, the first Piola--Kirchhoff stress is extended to
\begin{align}
\bsym P = \dfrac{\partial \Psi^{\textnormal{inc}}(\bsym F)}{\partial \bsym F} - p \, \bsym F^{-T} \quad \forall \textnormal{ } \bsym F \in \R^{3 \times 3} \textnormal{ with } J = \det \boldsymbol{F} = 1, \label{eq_inc_stress_split}
\end{align}
wherein $p$ accounts for the surrounding, hydrostatic pressure of the system. The factor $p$ can be interpreted as a Lagrange multiplier for the incompressibility constraint. Often, the deformation gradient is decomposed into a volumetric and an isochoric part (Flory split)
\begin{align}
\bsym F = \bsym F^{\textnormal{vol}} \cdot \bsym{F}^{\textnormal{iso}} \quad \textnormal{with } \bsym F^{\textnormal{vol}} = J^{1/3} \bsym I \quad \textnormal{and } \bsym{F}^{\textnormal{iso}} = J^{-1/3} \bsym F \Rightarrow \det \bsym{F}^{\textnormal{iso}} = 1
\end{align}  
to formulate $\Psi^{\textnormal{inc}}$ only in terms of $\bsym{F}^{\textnormal{iso}}$ and thus directly account for the incompressibility constraint \citep{flory1961}. In this work, this split is omitted and $\Psi^{\textnormal{inc}}$ is evaluated only for deformation states $\bsym F = \bsym{F}^{\textnormal{iso}}$ ($J = \det \bsym F = 1$) as, e.g., in \citep{ball1976}. This holds for all sections and examples concerning incompressibility. 

To apply the concept of polyconvexity, $\Psi^{\textnormal{inc}}$ is identified with its extension by $\infty$, which is given by $\Psi^{\textnormal{inc}}_\infty \colon \R^{3\times 3} \to \R_\infty$ as
\begin{align}\label{eq:FullextensionInfty}
	\Psi^{\textnormal{inc}}_\infty(\bsym F) \coloneq \begin{cases} \Psi^{\textnormal{inc}}(\bsym F) &\textnormal{ if } J = \det \boldsymbol{F} = 1\,,\\
		\infty &\textnormal{ if } J = \det \boldsymbol{F} \neq 1\,.
	\end{cases}
\end{align}
For the existence of a minimizer of the energy functional $I$ in the incompressible case, one is interested in $\Psi^{\textnormal{inc}}_\infty$ to be polyconvex. This motivates the following definition:
\begin{center}
\textit{A function $\Psi^{\textnormal{inc}}$ defined on the subset $\{\bsym F \in \R^{3 \times }\mid\det(\bsym F) =1\}$ is called polyconvex if the corresponding function $\Psi^{\textnormal{inc}}_\infty$ given by Eq.~\eqref{eq:FullextensionInfty} is polyconvex.}
\end{center}
This extension resembles the infinite stiffness of incompressible materials against volumetric deformations. The practical relevance of this extension is two-fold. First, it allows to apply polyconvexity criteria to the incompressible case with minor modifications of the criteria themselves. Second, it is an important prerequisite to prove the necessity of the polyconvexity criterion based on signed singular values in Section \ref{sec_ssv_incompressible}.

Note that $\Psi^{\textnormal{inc}}$ is polyconvex if and only if there exists a convex and lower semicontinuous function $\widehat{\Psi}\colon \R^{3 \times 3} \times \R^{3 \times 3} \to \R_\infty$ such that
\begin{align}
\Psi^{\textnormal{inc}}(\bsym F) = \widehat{\Psi}(\bsym F, \textnormal{cof}\,\bsym F) \textnormal{ convex in } \bsym F \textnormal{ and } \textnormal{cof}\,\bsym F \quad \forall \textnormal{ } \bsym F \in \R^{3 \times 3} \textnormal{ with } J = \det \boldsymbol{F} = 1. \label{eq:psi_inc_polyconvexity}
\end{align}
In Section \ref{sec_ssv_incompressible}, an analogous result is shown for isotropic functions in a dimension-reduced setting. The derivation of Eq.~\ref{eq:psi_inc_polyconvexity} follows analogously.

\subsection{Energy parametrizations in principal stretches and signed singular values}
Reconciling the physical and mathematical constraints outlined above poses a significant challenge -- particularly when aiming to combine frame-indifference, isotropy and polyconvexity in an exact manner. To lay the groundwork for the criteria introduced in the next section, it is first described how frame-indifference and isotropy can be implemented using different parametrizations.

Frame-indifference can be guaranteed a priori by formulating the energy in terms of different arguments, e.g., in terms of eigenvalues of $\bsym C$, principal stretches $\lambda_i$ or signed singular values $\nu_i$ of $\bsym F$. Starting with the eigenvalues of $\bsym C$, the principal stretches  are obtained as their square roots, $\lambda_i = \sqrt{\operatorname{eig}(\bsym C)_i}$. The principal stretches are moreover equal to the singular values of the deformation gradient $\bsym F$. Eventually, the principal stretches (or singular values) are the absolute values of the signed singular values of $\bsym F$, $\lambda_i = |\nu_i|$. Note that the signed singular values $\nu_i$ contain most (physical) information, which is lost for the other arguments by transformation into absolute or quadratic values.

In contrast to the stretches, the signed singular values can become negative by rotation or -- to be more precise -- because the decomposition 
\begin{align}
\bsym F = \bsym{R}_1 \cdot \bsym{\textbf{diag}}(\bsym \nu) \cdot \bsym R_2 \quad \textnormal{with }\boldsymbol{\nu} = (\nu_1, \nu_2, \nu_3)
\end{align}
of $\bsym F$ into two rotations $\bsym R_1$, $\bsym R_2 \in \SO(3)$ and a diagonal matrix $\textbf{diag}(\bsym \nu)$ is only unique up to permutations and an even number of sign changes for the entries $\nu_1, \nu_2, \nu_3$ as also motivated by Fig.~\ref{fig_ssv_rotation}. However, the choice of their sign does not affect the determinant (orientation preserving), yielding $\nu_1 \, \nu_2 \, \nu_3 = \det \boldsymbol{F} = J > 0$. Signed singular values will be of particular importance for the following investigation. A key property for this is:
\begin{center}
\textit{For every frame-indifferent and isotropic function $\Psi(\bsym F)$ there exists a unique function $\widetilde{\Psi}(\nu_1, \nu_2, \nu_3)$ that characterizes $\Psi$ in terms of the signed singular values of $\bsym F$, cf.~\citep{wiedemann2026}.}
\end{center}
This implies that $\widetilde{\Psi}$ is $\Pi_3$-invariant with 
\begin{align}
\Pi_3 \coloneq \lbrace \boldsymbol B \cdot \textbf{diag}(\bsym \epsilon) \, \vert \, \boldsymbol B \in \mathrm{Perm}(3), \, \bsym \epsilon \in \lbrace -1,1 \rbrace^3, \, \epsilon_1 \, \epsilon_2 \, \epsilon_3 = 1 \rbrace,
\end{align} 
where Perm$(3)$ denotes the set of permutation matrices \citep{wiedemann2026}. In simpler terms, $\Pi_3$-invariance incorporates the four symmetries
\begin{align}
\widetilde{\Psi} (\nu_1, \nu_2, \nu_3) = \widetilde{\Psi} (-\nu_1, -\nu_2, \nu_3) = \widetilde{\Psi} (-\nu_1, \nu_2, -\nu_3) = \widetilde{\Psi} (\nu_1, -\nu_2, -\nu_3)
\end{align}
along with the six invariances of $\widetilde{\Psi}$ with respect to all permutations of the signed singular values, e.g., $\widetilde{\Psi} (\nu_1, \nu_2, \nu_3) = \widetilde{\Psi} (\nu_2, \nu_3, \nu_1)$ (see Tab.~\ref{tab:input_permutations} for all combinations). The condition $\epsilon_1 \, \epsilon_2 \, \epsilon_3 = 1$ enforces the local deformation (deformation gradient) to be orientation preserving. According to \citep[Lemma 2.4]{wiedemann2026} the statement above is even sharper, namely that the set of frame-indifferent and isotropic functions $\Psi(\bsym F)$ can be identified with the set of $\Pi_3$-invariant functions $\widetilde{\Psi}(\nu_1, \nu_2, \nu_3)$ that characterize $\Psi$ in terms of the signed singular values of $\bsym F$.
\begin{center}
\textit{Analogously, for every frame-indifferent and isotropic function $\Psi(\bsym F)$ there exists a unique function $\breve\Psi(\lambda_1, \lambda_2, \lambda_3)$ that characterizes $\Psi$ in terms of the principal stretches (singular values) of $\bsym F$, cf.~\citep{wiedemann2026}.} 
\end{center}
In that case, this implies $\breve\Psi$ to be invariant with respect to permutations given by Perm$(3)$. The symmetry regarding the sign is already included by the restricted domain $\boldsymbol{\lambda} = (\lambda_1, \lambda_2, \lambda_3) \in (0,\infty)^3$. Clearly, this parametrization includes the classical formulation in matrix invariants of $\bsym C$, which directly ensures frame-indifference and isotropy.

\begin{figure}[ht!]
\centering
\vspace{0.7em}
\includegraphics[width=0.5\textwidth]{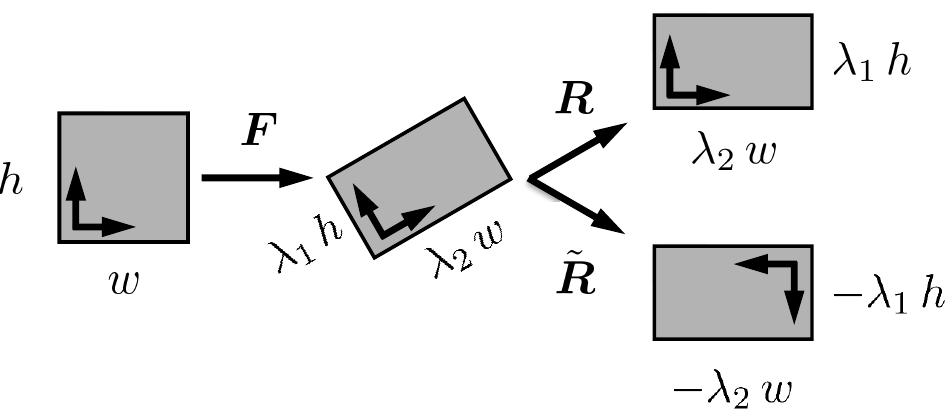}
\caption{Non-uniqueness of the sign of the singular values motivated by the deformation of a block.} \label{fig_ssv_rotation}
\end{figure}

\section{Sufficient and necessary criteria for frame-indifferent, isotropic polyconvex energies of compressible and incompressible materials}
Within this section, a sufficient and necessary criterion and different sufficient criteria for frame-indifferent, isotropic polyconvex energies are presented. Each of them serves as the basis of a neural network framework developed in the following section. A summary of all investigated criteria is provided in Fig.~\ref{fig_crit_overview}. The first two of them are based on signed singular values. While the first one provides a sufficient and necessary criterion, the second one is a reduced formulation in terms of the elementary symmetric polynomials of the signed singular values. The third one is an improved version of the well-known sufficient criterion of \citet{ball1976} based on principal stretches. Finally, a reduced formulation in terms of matrix invariants is provided. These criteria represent the state of the art for both -- polyconvexity itself as well as neural networks in constitutive modeling -- and thus allow a comprehensive analysis.
Each criterion is, first, given for the general, compressible case and then the associated criterion for the incompressible case is directly derived from that by removing the determinant as an argument of the convex representative of the energy density. This simple adjustment is possible for every criterion under investigation since their energy parametrizations allow a clear distinction regarding convexity in the deformation gradient $\bsym F$, its cofactor $\textnormal{cof}\,\boldsymbol{F}$ and its determinant $J = \det \boldsymbol{F}$.
 
\begin{figure}[ht!]
\centering
\includegraphics[width=\textwidth]{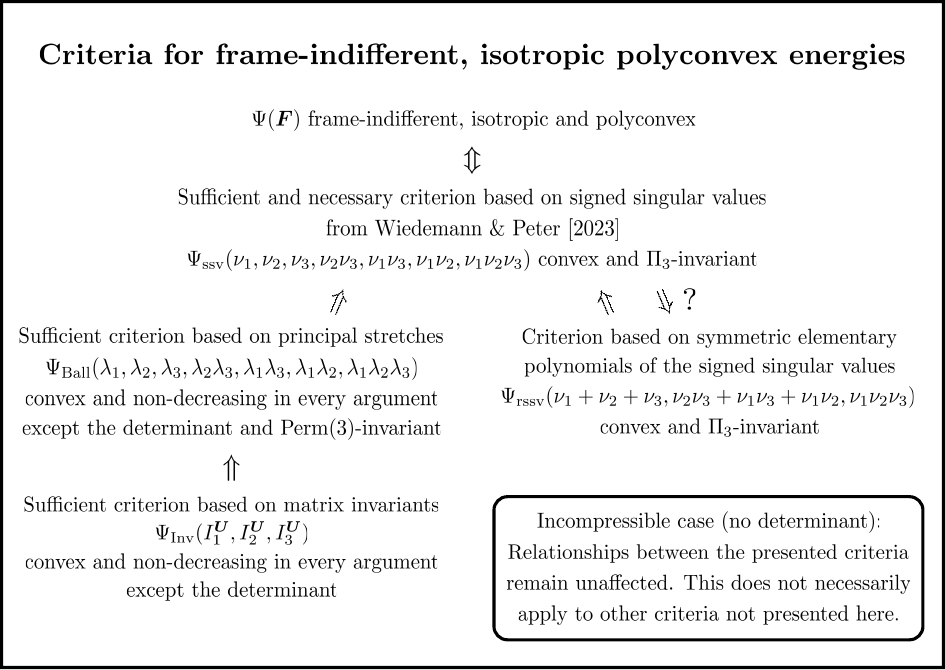}
\caption{Overview and relations between the investigated criteria for frame-indifferent, isotropic polyconvex energies of compressible materials. For incompressibility the determinant's influence via arguments $\nu_1 \nu_2 \nu_3, \lambda_1 \lambda_2 \lambda_3$ and $ I_3^{\boldsymbol U}$ is omitted. The open, ?-marked relation is indicated by the numerical results in Section \ref{sec_examples}.} \label{fig_crit_overview}
\end{figure} 
 
\subsection{Sufficient and necessary criterion based on signed singular values}
\subsubsection{The compressible case}
Following \citep{wiedemann2026,geuken2025}, $\widetilde{\Psi} (\nu_1, \nu_2, \nu_3)$ is called singular value polyconvex if the associated energy $\Psi$ is polyconvex, i.e.,~
\begin{equation}
\widetilde{\Psi} \textnormal{ singular value polyconvex} :\Leftrightarrow \Psi \textnormal{ polyconvex}.
\end{equation}
Then, Theorem 1.4 in \citep{wiedemann2026} provides a sufficient and necessary criterion for frame-indifferent, isotropic polyconvex energies as follows:
\begin{theorem}[\textbf{Singular value polyconvexity, cf.~\citep{wiedemann2023, wiedemann2026}}]\label{theo_singvalpolyconv}
A $\Pi_3$-invariant function $\widetilde{\Psi}\colon \, \R^3 \rightarrow \R_\infty$ is singular value polyconvex (and thus $\Psi$ is polyconvex) if and only if there exists a convex and lower semicontinuous function $\Psi_{\textnormal{ssv}} \colon \R^7 \rightarrow \R_\infty$ such that
\begin{align}
\widetilde{\Psi} (\nu_1, \nu_2, \nu_3) = \Psi_{\textnormal{ssv}} (\underbrace{\nu_1, \nu_2, \nu_3}_{\bsym F}, \underbrace{\nu_2 \nu_3, \nu_1 \nu_3, \nu_1 \nu_2}_{\textnormal{cof}\,\bsym F}, \underbrace{\nu_1 \nu_2 \nu_3}_{\textnormal{det}\,\bsym F}) \quad \forall \, \boldsymbol \nu \in \R^3 \text{ with }\boldsymbol{\nu} = (\nu_1, \nu_2, \nu_3). \label{eq_psiisoandconv}
\end{align}
\end{theorem}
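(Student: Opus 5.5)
The two implications are of very different difficulty. Sufficiency is an explicit construction; necessity requires turning a convex function of matrices into a convex function of the reduced arguments $(\bsym\nu,\,\nu_2\nu_3,\nu_1\nu_3,\nu_1\nu_2,\,\nu_1\nu_2\nu_3)$.

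\emph{Sufficiency.} Suppose $\Psi_{\textnormal{ssv}}$ is convex and lower semicontinuous. We need a convex lower semicontinuous $\widehat\Psi$ on $\R^{3\times3}\times\R^{3\times3}\times\R$ with $\widehat\Psi(\bsym F,\textnormal{cof}\,\bsym F,\det\bsym F)=\widetilde\Psi(\bsym\nu(\bsym F))$.

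1. First symmetrize $\Psi_{\textnormal{ssv}}$ under the induced action of $\Pi_3$ on $\R^7$. A group element acts on $(\bsym\nu,\bsym\mu,\delta)$ by permuting and sign-changing $\bsym\nu$. It acts compatibly on the cofactor slots, where the sign of $\nu_j\nu_k$ is $\epsilon_j\epsilon_k=\epsilon_i$, and it fixes $\delta$.
2. Replace $\Psi_{\textnormal{ssv}}$ by the maximum over this finite group of its transforms. The result stays convex and lower semicontinuous, is $\Pi_3$-invariant, and still reproduces $\widetilde\Psi$ on the image of $\bsym\nu\mapsto(\bsym\nu,\textnormal{cof},\det)$, because $\widetilde\Psi$ is $\Pi_3$-invariant.
3. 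Define
\[
\widehat\Psi(\bsym X,\bsym Y,\delta)=\sup_{\bsym R_1,\bsym R_2,\bsym R_3,\bsym R_4\in\SO(3)}\Psi_{\textnormal{ssv}}\bigl(\textbf{diag}(\bsym R_1^T\bsym X\bsym R_2),\ \textbf{diag}(\bsym R_3^T\bsym Y\bsym R_4),\ \delta\bigr).
\]
Each term is convex and lower semicontinuous in $(\bsym X,\bsym Y,\delta)$, since $\textbf{diag}$ of a linear map is linear. A supremum of such terms is again convex and lower semicontinuous.
4. To show that $\widehat\Psi$ reproduces $\widetilde\Psi$ on the lifted points, use a von Neumann type trace inequality for signed singular values. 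Specifically, the diagonal of $\bsym R_1^T\bsym F\bsym R_2$ lies in the convex hull of the $\Pi_3$-orbit of $\bsym\nu(\bsym F)$, and the analogous statement holds for $\textnormal{cof}\,\bsym F$.
5. The delicate point is that the two orbits must be chosen consistently. I would handle this by a joint convexity argument, using that $\bsym\nu\mapsto(\bsym\nu,\textnormal{cof}\,\bsym\nu)$ is $\Pi_3$-equivariant. If the joint version fails, I would fall back on the result of Wiedemann--Peter cited as Theorem~1.4, which the paper permits to invoke.

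\emph{Necessity.} Suppose $\Psi$ is polyconvex with representative $\widehat\Psi$.

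1. Restrict $\widehat\Psi$ to diagonal arguments and set $\Psi_{\textnormal{ssv}}(\bsym a,\bsym b,\delta):=\widehat\Psi(\textbf{diag}\,\bsym a,\textbf{diag}\,\bsym b,\delta)$.
2. This restriction is convex and lower semicontinuous, since it is the composition with a linear embedding.
3. Because $\textnormal{cof}\,\textbf{diag}(\bsym\nu)=\textbf{diag}(\nu_2\nu_3,\nu_1\nu_3,\nu_1\nu_2)$, it reproduces $\widetilde\Psi(\bsym\nu)=\Psi(\textbf{diag}\,\bsym\nu)$ directly.

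\emph{Main obstacle.} The hard part is the sufficiency direction. One must show that the lift dominates nothing extra, i.e.\ that $\widehat\Psi\le\widetilde\Psi$ on the set of matrix triples $(\bsym F,\textnormal{cof}\,\bsym F,\det\bsym F)$. This needs the joint majorization of the diagonals of $(\bsym F,\textnormal{cof}\,\bsym F)$ by the convex hull of the $\Pi_3$-orbit of $(\bsym\nu,\textnormal{cof}\,\bsym\nu)$. That step is exactly the nontrivial content of the cited theorem.
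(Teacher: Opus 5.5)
The paper gives no proof of this theorem; it defers entirely to Wiedemann--Peter. Your fallback in step 5 is therefore exactly the paper's route. Your necessity direction is correct and complete: restricting $\widehat\Psi$ to diagonal arguments and using $\textnormal{cof}\,\textbf{diag}(\bsym\nu)=\textbf{diag}(\nu_2\nu_3,\nu_1\nu_3,\nu_1\nu_2)$ suffices.

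\textbf{The gap is in your own sufficiency construction.} Step 3 takes the supremum over four \emph{independent} rotations. This decouples the $\bsym F$-slot from the cofactor slot, so the upper bound $\widehat\Psi\le\widetilde\Psi$ on the lifted set is false, and step 5 cannot repair it. Concretely, for Mielke's energy take $\Psi_{\textnormal{ssv}}(\bsym x,\bsym y,\delta)=\max_i|x_i-y_i|$. It is already invariant under the joint $\Pi_3$-action, so your steps 1--2 leave it unchanged. At $\bsym F=\bsym I$, choose $\bsym R_1=\bsym R_2=\bsym R_4=\bsym I$ and $\bsym R_3=\textbf{diag}(-1,-1,1)$. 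Then the two diagonals are $(1,1,1)$ and $(-1,-1,1)$, so $\widehat\Psi(\bsym I,\bsym I,1)\ge 2$, whereas $\Psi(\bsym I)=0$.

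\textbf{The repair is to couple the rotations}, $\bsym R_3=\bsym R_1$ and $\bsym R_4=\bsym R_2$. This is natural because $\textnormal{cof}(\bsym R_1^T\bsym F\bsym R_2)=\bsym R_1^T(\textnormal{cof}\,\bsym F)\bsym R_2$ for rotations. Write $\bsym F=\bsym U\,\textbf{diag}(\bsym\nu)\,\bsym V^T$ and set $\bsym R=\bsym R_1^T\bsym U$, $\bsym S=\bsym V^T\bsym R_2$. Then both diagonals are $(\bsym L\bsym\nu,\bsym L\check{\bsym\nu})$ with the \emph{same} matrix $\bsym L=\bsym R\circ\bsym S^T$ (Hadamard product), where $\check{\bsym\nu}=(\nu_2\nu_3,\nu_1\nu_3,\nu_1\nu_2)$.

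So the missing lemma is: $\bsym R\circ\bsym Q\in\operatorname{conv}\Pi_3$ for all $\bsym R,\bsym Q\in\SO(3)$. Given it, write $\bsym L=\sum\mu_{\bsym{\mathcal P}}\bsym{\mathcal P}$. For $\bsym{\mathcal P}\in\Pi_3$, $\bsym{\mathcal P}\check{\bsym\nu}$ is the cofactor vector of $\bsym{\mathcal P}\bsym\nu$, and $\nu_1\nu_2\nu_3$ is $\Pi_3$-invariant. Jensen's inequality then gives $\Psi_{\textnormal{ssv}}(\bsym L\bsym\nu,\bsym L\check{\bsym\nu},\det\bsym F)\le\sum\mu_{\bsym{\mathcal P}}\widetilde\Psi(\bsym{\mathcal P}\bsym\nu)=\widetilde\Psi(\bsym\nu)$. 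The SVD rotations give the reverse inequality, and your symmetrization steps 1--2 become unnecessary.

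\textbf{The lemma is true and short in 3D.} $\Pi_3$ is exactly the matrix group of $\operatorname{Perm}(4)$ acting on $\mathbf 1^\perp\subset\R^4$, written in the orthonormal basis $\tfrac12(1,1,-1,-1)$, $\tfrac12(1,-1,1,-1)$, $\tfrac12(1,-1,-1,1)$. Birkhoff's theorem then gives $\operatorname{conv}\Pi_3=\{\bsym M:\ \bsym t_a^T\bsym M\bsym t_b\ge -1\ \forall a,b\}$, where $\bsym t_a$ runs over $(1,1,1)$, $(1,-1,-1)$, $(-1,1,-1)$, $(-1,-1,1)$. Moreover $\bsym t_a^T(\bsym R\circ\bsym Q)\bsym t_b=\operatorname{tr}\bigl((\bsym T_a\bsym R\bsym T_b)^T\bsym Q\bigr)$ with $\bsym T_a=\textbf{diag}(\bsym t_a)\in\SO(3)$. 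This is $\ge -1$ because every rotation has trace $\ge -1$. With this lemma your route becomes a self-contained proof of the hard direction, in place of the citation the paper relies on.
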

The proof of Theorem \ref{theo_singvalpolyconv} was given by \citet{wiedemann2023, wiedemann2026}.
The first three arguments of $\Psi_{\textnormal{ssv}}$ are connected to the convexity with respect to $\bsym F$, the next three to the convexity with respect to $\textnormal{cof}\,\bsym F$ and the last argument is the determinant $\textnormal{det}\,\bsym F$. Using
\begin{equation}
m(\boldsymbol{\nu}) \coloneqq ( \nu_1, \nu_2, \nu_3, \nu_2 \nu_3, \nu_1 \nu_3, \nu_1 \nu_2, \nu_1 \nu_2 \nu_3)
\end{equation}
one can rewrite Eq.~\eqref{eq_psiisoandconv} in the more compact form
\begin{align}
\widetilde{\Psi} = \Psi_{\textnormal{ssv}}\circ m.
\end{align}

This criterion was already used in \citep{geuken2025} in conjunction with neural networks and has proven to be very powerful for constitutive modeling and the approximation of polyconvex hulls, see also \citep{neumeier2024, balazi2025} for further numerical approximations of polyconvex hulls. Even further, the associated networks can approximate any frame-indifferent, isotropic polyconvex energy up to arbitrary precision.

\subsubsection{The incompressible case} \label{sec_ssv_incompressible}
In this section, incompressible energies $\Psi^{\textnormal{inc}}$ are considered, as described in Section \ref{ssec:Incomp}. With \citep[Lemma 2.4]{wiedemann2026} one can identify incompressible functions $\Psi^{\textnormal{inc}}$ that are defined on $\{\bsym F \in \R^{3 \times 3} \mid J = \det(\bsym F) =1\}$ with $\Pi_3$-invariant functions $\widetilde{\Psi}^{\textnormal{inc}}$ defined on $\{\bsym \nu \in \R^3 \mid \nu_1 \nu_2 \nu_3 = 1\}$. 
In line with the compressible case, a function $\widetilde{\Psi}^{\textnormal{inc}}$ defined on $\bsym \nu\in \R^3$ with $\nu_1 \nu_2 \nu_3= 1$ is called singular value polyconvex if the corresponding function $\Psi^{\textnormal{inc}}$ is polyconvex. Recall that the polyconvexity of $\Psi^{\textnormal{inc}}$ is defined by the polyconvexity of $\Psi^{\textnormal{inc}}_\infty$. Thus, $\widetilde{\Psi}^{\textnormal{inc}}$ is singular value polyconvex if and only if its extension by $\infty$, which is given by $\widetilde{\Psi}^{\textnormal{inc}}_\infty \colon \R^3 \to \R_\infty$
\begin{equation}\label{eq:Psiinfty=Psi}
 \widetilde{\Psi}^{\textnormal{inc}}_\infty(\bsym \nu) \coloneq \begin{cases} \widetilde{\Psi}^{\textnormal{inc}}(\bsym \nu) &\textnormal{ if } \nu_1 \nu_2 \nu_3 = 1\,,\\
 \infty &\textnormal{ if } \nu_1 \nu_2 \nu_3 \neq 1\,,
 \end{cases}
\end{equation}
is singular value polyconvex.
The following corollary is a direct consequence of Theorem \ref{theo_singvalpolyconv}.
\begin{corollary}[\textbf{Singular value polyconvexity for incompressibility}]\label{theo_singvalpolyconv_incompr}
A $\Pi_3$-invariant function $\widetilde{\Psi}^{\textnormal{inc}}\colon \{\bsym \nu \in \R^3 \mid \nu_1 \nu_2 \nu_3 = 1\} \rightarrow \R_\infty$ is singular value polyconvex (and thus $\Psi^{\textnormal{inc}}$ is polyconvex) under the incompressibility constraint \eqref{eq_inc_constr} if and only if there exists a convex and lower semicontinuous function $\Psi_{\textnormal{ssv}}^{\textnormal{inc}}\colon \R^6 \rightarrow \R_\infty$ such that
\begin{align}
\widetilde{\Psi}^{\textnormal{inc}} (\nu_1, \nu_2, \nu_3) = \Psi_{\textnormal{ssv}}^{\textnormal{inc}} (\nu_1, \nu_2, \nu_3, \nu_2 \nu_3, \nu_1 \nu_3, \nu_1 \nu_2) \quad \forall \, \boldsymbol \nu \in \R^3 \text{ with }\boldsymbol{\nu} = (\nu_1, \nu_2, \nu_3) \text{ and } \nu_1 \nu_2 \nu_3 = 1. \label{eq_psiisoandconv_incompr}
\end{align}
\end{corollary}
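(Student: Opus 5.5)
The plan is to reduce Corollary~\ref{theo_singvalpolyconv_incompr} to Theorem~\ref{theo_singvalpolyconv}, applied to the extension $\widetilde{\Psi}^{\textnormal{inc}}_\infty$ from Eq.~\eqref{eq:Psiinfty=Psi}. By definition, $\widetilde{\Psi}^{\textnormal{inc}}$ is singular value polyconvex if and only if $\widetilde{\Psi}^{\textnormal{inc}}_\infty$ is. Moreover, $\widetilde{\Psi}^{\textnormal{inc}}_\infty$ is $\Pi_3$-invariant, because $\Pi_3$ preserves the product $\nu_1\nu_2\nu_3$. So Theorem~\ref{theo_singvalpolyconv} applies to it, and the only remaining task is to pass between a convex lower semicontinuous function on $\R^7$ and one on $\R^6$.

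\emph{Sufficiency.} Let $\Psi^{\textnormal{inc}}_{\textnormal{ssv}}\colon\R^6\to\R_\infty$ be convex and lower semicontinuous with \eqref{eq_psiisoandconv_incompr}. Define
\begin{align*}
\Psi_{\textnormal{ssv}}(x,y,\delta) \coloneqq \Psi^{\textnormal{inc}}_{\textnormal{ssv}}(x,y) + \chi_{\{1\}}(\delta), \qquad x,y\in\R^3,\ \delta\in\R,
\end{align*}
where $\chi_{\{1\}}$ is the indicator function: it equals $0$ at $\delta=1$ and $\infty$ otherwise. The indicator of the closed convex set $\{\delta=1\}$ is convex and lower semicontinuous. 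Hence $\Psi_{\textnormal{ssv}}$ is convex and lower semicontinuous as a sum of two such functions, neither of which takes the value $-\infty$. By construction $\Psi_{\textnormal{ssv}}\circ m = \widetilde{\Psi}^{\textnormal{inc}}_\infty$ on all of $\R^3$. Theorem~\ref{theo_singvalpolyconv} then gives singular value polyconvexity of $\widetilde{\Psi}^{\textnormal{inc}}_\infty$, and hence of $\widetilde{\Psi}^{\textnormal{inc}}$.

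\emph{Necessity.} If $\widetilde{\Psi}^{\textnormal{inc}}$ is singular value polyconvex, Theorem~\ref{theo_singvalpolyconv} yields a convex lower semicontinuous $\Psi_{\textnormal{ssv}}\colon\R^7\to\R_\infty$ with $\widetilde{\Psi}^{\textnormal{inc}}_\infty=\Psi_{\textnormal{ssv}}\circ m$. Define the restriction to the affine slice $\delta=1$:
\begin{align*}
\Psi^{\textnormal{inc}}_{\textnormal{ssv}}(x,y)\coloneqq\Psi_{\textnormal{ssv}}(x,y,1).
\end{align*}
Composing with the affine embedding $(x,y)\mapsto(x,y,1)$ preserves both convexity and lower semicontinuity. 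For $\nu_1\nu_2\nu_3=1$ we have $m(\bsym\nu)=(\bsym\nu,\nu_2\nu_3,\nu_1\nu_3,\nu_1\nu_2,1)$, so \eqref{eq_psiisoandconv_incompr} holds. The values of $\Psi_{\textnormal{ssv}}$ off the slice never enter.

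No step is a real obstacle. The only point needing care is that the definitions fit together: the extension by $\infty$ must make the indicator construction admissible, and Theorem~\ref{theo_singvalpolyconv} must allow values in $\R_\infty$. Both are guaranteed by how the incompressible polyconvexity was defined in Section~\ref{ssec:Incomp}. The same two-line argument gives Eq.~\eqref{eq:psi_inc_polyconvexity} at the matrix level.
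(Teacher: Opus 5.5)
Your proposal is correct and matches the paper's proof essentially step for step. Both directions apply Theorem~\ref{theo_singvalpolyconv} to $\widetilde{\Psi}^{\textnormal{inc}}_\infty$: necessity restricts $\Psi_{\textnormal{ssv}}$ to the slice $x_7 = 1$, and sufficiency extends $\Psi_{\textnormal{ssv}}^{\textnormal{inc}}$ by $\infty$ off that slice (your sum with $\chi_{\{1\}}$ is exactly the paper's extension), while your remark that $\widetilde{\Psi}^{\textnormal{inc}}_\infty$ is $\Pi_3$-invariant because $\Pi_3$ preserves $\nu_1\nu_2\nu_3$ makes explicit a hypothesis check the paper leaves implicit.
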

Using
\begin{equation}
m^{\textnormal{inc}}(\boldsymbol{\nu}) \coloneqq ( \nu_1, \nu_2, \nu_3, \nu_2 \nu_3, \nu_1 \nu_3, \nu_1 \nu_2)
\label{eq:elementary_polynomials}
\end{equation}
one can rewrite Eq.~\eqref{eq_psiisoandconv_incompr} in the more compact form
\begin{align}
\widetilde{\Psi}^{\textnormal{inc}} = \Psi_{\textnormal{ssv}}^{\textnormal{inc}} \circ m^{\textnormal{inc}}.
\end{align}
\begin{proof}
Let $\widetilde{\Psi}^{\textnormal{inc}}$ be $\Pi_3$-invariant and singular value polyconvex, then the function $\Psi_{\textnormal{ssv}}^{\textnormal{inc}}$ for \eqref{eq_psiisoandconv_incompr} can be constructed with Theorem \ref{theo_singvalpolyconv}: Since $\widetilde{\Psi}^{\textnormal{inc}}$ is singular value polyconvex, there exists a convex and lower semicontinuous function $\Psi_{\textnormal{ssv}} \colon \R^7 \rightarrow \R_\infty$ that satisfies Eq.~\eqref{eq_psiisoandconv} for $\widetilde{\Psi}= \widetilde{\Psi}^{\textnormal{inc}}_\infty$. The function $\Psi_{\textnormal{ssv}}^{\textnormal{inc}} \colon \R^6\to \R_\infty$ defined by $\Psi_{\textnormal{ssv}}^{\textnormal{inc}}(\bsym x) \coloneqq \Psi_{\textnormal{ssv}}(\bsym x,1)$ for $\bsym x \in \R^6$ is convex and lower semicontinuous as it is the restriction of a convex and lower semicontinuous function. It satisfies Eq.~\eqref{eq_psiisoandconv_incompr} for $\boldsymbol{\nu} = (\nu_1, \nu_2, \nu_3)$ with $\nu_1 \nu_2 \nu_3 = 1$:
\begin{equation}
\widetilde{\Psi}^{\textnormal{inc}} (\boldsymbol{\nu}) \overset{\eqref{eq:Psiinfty=Psi}}{=} \widetilde{\Psi}^{\textnormal{inc}}_\infty (\boldsymbol{\nu}) \overset{\textnormal{Theorem } \ref{theo_singvalpolyconv}}{=} \Psi_{\textnormal{ssv}} (m(\boldsymbol{\nu})) 
=
\Psi_{\textnormal{ssv}} (m^{\textnormal{inc}}(\boldsymbol{\nu}),1) 
=
\Psi_{\textnormal{ssv}}^{\textnormal{inc}} (m^{\textnormal{inc}}(\boldsymbol{\nu})) \,.
\end{equation}

For the other direction, let $\Psi_{\textnormal{ssv}}^{\textnormal{inc}} \colon \R^6\to \R_\infty$
be a convex and lower semicontinuous function that satisfies Eq.~\eqref{eq_psiisoandconv_incompr}. 
The idea is to construct a convex and lower semicontinuous function $\Psi_{\textnormal{ssv}} \colon \R^7 \rightarrow \R_\infty$ that satisfies Eq.~\eqref{eq_psiisoandconv} for
$\widetilde{\Psi}= \widetilde{\Psi}^{\textnormal{inc}}_\infty$. Then, the singular value polyconvexity of $\widetilde{\Psi}^{\textnormal{inc}}$ follows from Theorem \ref{theo_singvalpolyconv}.
Define $\Psi_{\textnormal{ssv}}$ by $\Psi_{\textnormal{ssv}}(\bsym x) \coloneqq 
\Psi_{\textnormal{ssv}}^{\textnormal{inc}}(\bsym x)$ for $x_7 = 1$ and $\Psi_{\textnormal{ssv}}(\bsym x) \coloneqq \infty$ for $x_7 \neq 1$. This function is the extension of a convex function outside the convex domain $\{\bsym x \in \R^7 \mid x_7 = 1 \}$ by $\infty$ and, thus, it is convex.
Since $\Psi_{\textnormal{ssv}}^{\textnormal{inc}}$ is lower semicontinuous and the set $\{\bsym x \in \R^7 \mid x_7 = 1 \}$ is closed, the extension by $\infty$ is lower semicontinuous. It remains to check Eq.~\eqref{eq_psiisoandconv} for $\widetilde{\Psi}= \widetilde{\Psi}^{\textnormal{inc}}_\infty$. It holds, since:
\begin{equation*}
 \widetilde{\Psi}^{\textnormal{inc}}_\infty (\boldsymbol{\nu}) 
 \overset{\eqref{eq:Psiinfty=Psi}}{=}
 \begin{cases}
\widetilde{\Psi}^{\textnormal{inc}}(\boldsymbol{\nu}) &\text{if } \nu_1\nu_2\nu_3=1\,,
\\
\infty&\text{else}
 \end{cases}
 \overset{\eqref{eq_psiisoandconv_incompr}}{=}
 \begin{cases}
\Psi^{\textnormal{inc}}_{\textnormal{ssv}}(m^{\textnormal{inc}}(\boldsymbol{\nu})) &\text{if } m(\boldsymbol{\nu})_7=1 \,,
\\
\infty &\text{else}
 \end{cases}
 =\Psi_{\textnormal{ssv}}(m(\boldsymbol{\nu})).
\end{equation*}
\end{proof}

As will be shown later, neural network models based on this criterion can approximate any frame-indifferent, isotropic polyconvex energy under the incompressibility constraint.

\subsection{Reduced criterion based on elementary symmetric polynomials of the signed singular values}
\subsubsection{The compressible case}
\citet{wiedemann2026} additionally derived a new criterion in terms of the elementary symmetric polynomials of the signed singular values ($\nu_1 + \nu_2 + \nu_3, \nu_2 \nu_3 + \nu_1 \nu_3 + \nu_1 \nu_2, \nu_1 \nu_2 \nu_3$). Therein, a $\Pi_3$-invariant function $\widetilde{\Psi}$ is singular value polyconvex if it is given by
\begin{equation}
\widetilde{\Psi} (\nu_1, \nu_2, \nu_3) = \Psi_{\textnormal{rssv}} (\nu_1 + \nu_2 + \nu_3, \nu_2 \nu_3 + \nu_1 \nu_3 + \nu_1 \nu_2, \nu_1 \nu_2 \nu_3) \quad \forall \, \boldsymbol \nu \in \R^3 \textnormal{ with }\boldsymbol{\nu} = (\nu_1, \nu_2, \nu_3) \label{eq_redpolycrit}
\end{equation}
with $\Psi_{\textnormal{rssv}}$ being convex. In this setting, the Perm$(3)$-invariance is a priori fulfilled, however, the symmetries regarding the sign have to be accounted for when formulating $\Psi_{\textnormal{rssv}}$, cf. Tab.~\ref{tab:input_permutations_rssv}. It is not yet known whether this criterion is also necessary or which frame-indifferent, isotropic polyconvex energy functions may be unattainable by this formulation.

\subsubsection{The incompressible case}
The corresponding criterion for the incompressible case is again obtained by simply dropping the determinant. In that case, a $\Pi_3$-invariant function $\widetilde{\Psi}^{\textnormal{inc}}$ is singular value polyconvex (and thus $\Psi^{\textnormal{inc}}$ is polyconvex) under the incompressibility constraint if it is given by
\begin{equation}
\widetilde{\Psi}^{\textnormal{inc}} (\nu_1, \nu_2, \nu_3) = \Psi_{\textnormal{rssv}}^{\textnormal{inc}} (\nu_1 + \nu_2 + \nu_3, \nu_2 \nu_3 + \nu_1 \nu_3 + \nu_1 \nu_2) \quad \forall \, \boldsymbol \nu \in \R^3 \textnormal{ with }\boldsymbol{\nu} = (\nu_1, \nu_2, \nu_3) \textnormal{ and } \nu_1 \nu_2 \nu_3 = 1 \label{eq_redpolycrit_incompr}
\end{equation}
with $\Psi_{\textnormal{rssv}}^{\textnormal{inc}}$ being convex. As for the general case, the Perm$(3)$-invariance is a priori fulfilled within this setting. However, the symmetries regarding the sign have to be accounted for when formulating $\Psi_{\textnormal{rssv}}^{\textnormal{inc}}$, cf. Tab.~\ref{tab:input_permutations_rssv}.

\subsection{Sufficient criterion based on principal stretches}
\subsubsection{The compressible case}
A new, sufficient criterion for polyconvexity formulated in principal stretches $\lambda_i$ can be derived as a special case of Theorem~\ref{theo_singvalpolyconv} through the relationship $\lambda_i = \vert \nu_i \vert$.
\begin{theorem}[\textbf{Sufficient polyconvexity criterion based on principal stretches}]\label{theo_princstretchpolyconv}
Let $\Psi \colon \R^{3 \times 3} \to \R_\infty$ be given by
\begin{align}\label{eq:6785}
\Psi(\bsym F) =
\begin{cases}
\breve\Psi(\lambda_1, \lambda_2, \lambda_3) \quad \forall \, \boldsymbol \lambda \in \R_+^3 \text{ with }\boldsymbol{\lambda} = (\lambda_1, \lambda_2, \lambda_3) &\text{if } J = \det \bsym F \geq 0\,,
\\
\infty &\text{if } J= \det \bsym F < 0
\end{cases}
\end{align}
where $\lambda_i$ are the principal stretches of $\bsym F$
and $\breve\Psi \colon [0, \infty)^3\to \R_\infty$ is $\operatorname{Perm}(3)$-invariant and given by
\begin{equation}
\breve\Psi(\lambda_1, \lambda_2, \lambda_3) = \Psi_{\textnormal{Ball}}(\lambda_1, \lambda_2, \lambda_3, \lambda_2 \lambda_3, \lambda_1 \lambda_3, \lambda_1 \lambda_2, \lambda_1 \lambda_2 \lambda_3)
\end{equation}
for a convex and lower semicontinuous function
$\Psi_{\textnormal{Ball}}\colon \R^6 \times [0, \infty) \to \R_\infty$ that is non-decreasing in every argument except the last argument ($J = \det \boldsymbol{F} = \lambda_1 \lambda_2 \lambda_3$).
Then, $\Psi$ is polyconvex.
\end{theorem}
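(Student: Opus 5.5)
The plan is to derive the statement from Theorem~\ref{theo_singvalpolyconv}. First, express $\Psi$ through a $\Pi_3$-invariant function $\widetilde\Psi$ of the signed singular values. Then construct an explicit convex, lower semicontinuous $\Psi_{\textnormal{ssv}}\colon\R^7\to\R_\infty$ with $\widetilde\Psi=\Psi_{\textnormal{ssv}}\circ m$.

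\emph{Step 1: the representation $\widetilde\Psi$.} For any signed singular values $\boldsymbol\nu$ of $\bsym F$ we have $\lambda_i=|\nu_i|$ (up to permutation) and $\nu_1\nu_2\nu_3=\det\bsym F$. The function defined by \eqref{eq:6785} is therefore represented by
\begin{equation*}
\widetilde\Psi(\boldsymbol\nu)=\begin{cases}\breve\Psi(|\nu_1|,|\nu_2|,|\nu_3|) & \text{if } \nu_1\nu_2\nu_3\ge 0\,,\\ \infty & \text{if } \nu_1\nu_2\nu_3<0\,.\end{cases}
\end{equation*}
This function is $\Pi_3$-invariant for two reasons. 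Permutations are absorbed by the $\operatorname{Perm}(3)$-invariance of $\breve\Psi$. An even number of sign changes leaves both the absolute values and the product $\nu_1\nu_2\nu_3$ unchanged. By \citep[Lemma 2.4]{wiedemann2026}, $\widetilde\Psi$ is thus the singular value representation of $\Psi$. Hence it suffices to show that $\widetilde\Psi$ is singular value polyconvex.

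\emph{Step 2: construction of $\Psi_{\textnormal{ssv}}$.} For $\bsym x=(x_1,\dots,x_7)\in\R^7$ we set
\begin{equation*}
\Psi_{\textnormal{ssv}}(\bsym x)\coloneqq\begin{cases}\Psi_{\textnormal{Ball}}(|x_1|,\dots,|x_6|,x_7) & \text{if } x_7\ge 0\,,\\ \infty & \text{if } x_7<0\,.\end{cases}
\end{equation*}
We then check $\Psi_{\textnormal{ssv}}\circ m=\widetilde\Psi$. If $\nu_1\nu_2\nu_3<0$, both sides equal $\infty$. If $\nu_1\nu_2\nu_3\ge0$, then $|\nu_i|=\lambda_i$, $|\nu_j\nu_k|=\lambda_j\lambda_k$ and $\nu_1\nu_2\nu_3=\lambda_1\lambda_2\lambda_3$. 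Hence both sides equal $\Psi_{\textnormal{Ball}}(\lambda_1,\lambda_2,\lambda_3,\lambda_2\lambda_3,\lambda_1\lambda_3,\lambda_1\lambda_2,\lambda_1\lambda_2\lambda_3)=\breve\Psi(\boldsymbol\lambda)$.

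\emph{Step 3: convexity and lower semicontinuity.} Lower semicontinuity is the easy part. The map $\bsym x\mapsto(|x_1|,\dots,|x_6|,x_7)$ is continuous, $\Psi_{\textnormal{Ball}}$ is lower semicontinuous, and the set $\{x_7\ge0\}$ is closed, so the extension by $\infty$ stays lower semicontinuous.

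Convexity is the main obstacle, because $\Psi_{\textnormal{Ball}}$ may take the value $\infty$. Take $\bsym x,\bsym y$ with $x_7,y_7\ge0$ (otherwise the convexity inequality is trivial) and $t\in[0,1]$. Convexity of the absolute value gives, componentwise for $i\le 6$,
\begin{equation*}
|t x_i+(1-t)y_i|\le t|x_i|+(1-t)|y_i|\,.
\end{equation*}
The seventh component is affine, $tx_7+(1-t)y_7$, and is still nonnegative. We then use that $\Psi_{\textnormal{Ball}}$ is non-decreasing in its first six arguments. Lowering those arguments can only decrease the value, including when values are $\infty$: the monotonicity hypothesis is read on all of $\R^6\times[0,\infty)$, where the right-hand side being finite forces the left-hand side to be finite. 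This bounds $\Psi_{\textnormal{ssv}}(t\bsym x+(1-t)\bsym y)$ by $\Psi_{\textnormal{Ball}}$ evaluated at the convex combination of $(|x_1|,\dots,x_7)$ and $(|y_1|,\dots,y_7)$. Convexity of $\Psi_{\textnormal{Ball}}$ then yields the bound $t\Psi_{\textnormal{ssv}}(\bsym x)+(1-t)\Psi_{\textnormal{ssv}}(\bsym y)$.

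This is exactly where the monotonicity assumption is essential, and why the last argument $J$ must be excluded from it: $x_7$ enters linearly, not through an absolute value. Theorem~\ref{theo_singvalpolyconv} then gives singular value polyconvexity of $\widetilde\Psi$, and hence polyconvexity of $\Psi$.
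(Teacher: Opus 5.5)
Your proposal is correct and follows essentially the same route as the paper's appendix proof. It uses the same extension $\Psi_{\textnormal{ssv}}(\bsym x)=\Psi_{\textnormal{Ball}}(|x_1|,\dots,|x_6|,x_7)$ for $x_7\ge 0$ (and $\infty$ otherwise), convexity via the convexity of $|\cdot|$ together with monotonicity in the first six arguments, lower semicontinuity via the closed half-space, and finally Theorem~\ref{theo_singvalpolyconv}. The only difference is the order: you define $\widetilde\Psi$ directly from $\breve\Psi$ and verify $\Pi_3$-invariance before checking $\Psi_{\textnormal{ssv}}\circ m=\widetilde\Psi$, whereas the paper takes $\widetilde\Psi=\Psi_{\textnormal{ssv}}\circ m$ and derives its invariance from the $\operatorname{Perm}(3)$-invariance of $\breve\Psi$.
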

The proof is provided in Appendix \ref{proof_ball}. This criterion bears similarities to Ball’s well-known criterion for polyconvexity. In recognition of Ball’s work, the energy function is therefore denoted by $\Psi_{\textnormal{Ball}}$. However, Ball's original criterion requires a stricter permutation invariance: 
\begin{theorem}[\textbf{Ball's sufficient polyconvexity criterion based on principal stretches \citep{ball1976}}] \label{theo_ball_original}
Consider energy functions $\Psi(\bsym F)$ defined on sets of the form $U = \{ \bsym F \in M^{3 \times 3}\colon \det \bsym F \in K \}$, where $K\subseteq \R_+$ is convex. Let
\begin{equation}
\Psi(\bsym F) = \psi(\lambda_1, \lambda_2, \lambda_3, \lambda_2 \lambda_3, \lambda_1 \lambda_3, \lambda_1 \lambda_2, \lambda_1 \lambda_2 \lambda_3),
\end{equation}
where $\lambda_i$ are the singular values (principal stretches) of $\bsym F \in U$ and where $\psi\colon \R_+^6 \times K \to \R$ is convex and satisfies
\begin{align}
&\textnormal{(a) } \psi(\bsym B_1 \cdot \bsym x, \bsym B_2 \cdot \bsym{y}, \delta) \quad \forall \, \bsym B_1, \bsym B_2 \in \operatorname{Perm}(3) \text{ and all } \bsym x, \bsym y \in R_+^3, \, \delta\in K,\\
&\textnormal{(b) } \psi(x_1, x_2, x_3, y_1, y_2, y_3, \delta) \text{ is non-decreasing in each } x_i, y_j,
\end{align}
wherein $\bsym x = (x_1,x_2,x_3)$ and $\bsym y = (y_1,y_2,y_3)$ denote the first three and the fourth through sixth arguments of $\psi$, respectively. Then $\Psi$ is polyconvex on $U$.
\end{theorem}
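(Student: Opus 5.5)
The plan is to verify Ball's theorem directly, by constructing an explicit convex representative on $\R^{3\times3}\times\R^{3\times3}\times K$. I deliberately avoid going through Theorem~\ref{theo_singvalpolyconv}. That route would require a lower semicontinuous representative, and taking the lower semicontinuous hull of $\psi$ can change its values at relative-boundary points of a half-open $K$. Ball's setting only asks for convexity on $U$.

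\textbf{Step 1: the representative.} For $\bsym F,\bsym H\in\R^{3\times3}$ and $\delta\in K$ I define
\begin{equation*}
\Phi(\bsym F,\bsym H,\delta)\coloneqq\psi\bigl(\sigma(\bsym F),\sigma(\bsym H),\delta\bigr),
\end{equation*}
where $\sigma(\cdot)\in\R_+^3$ is the vector of singular values, in any order; this is well defined by hypothesis (a). For $\bsym F\in U$, the singular values of $\textnormal{cof}\,\bsym F$ are $(\lambda_2\lambda_3,\lambda_1\lambda_3,\lambda_1\lambda_2)$ up to permutation. This holds because $\textnormal{cof}\,\bsym F=\det(\bsym F)\,\bsym F^{-\mathsf T}$ and $\det\bsym F>0$; the case of a singular $\bsym F$ is excluded since $K\subseteq\R_+$. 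Again by (a), it follows that $\Psi(\bsym F)=\Phi(\bsym F,\textnormal{cof}\,\bsym F,\det\bsym F)$ on $U$. It remains to show that $\Phi$ is convex.

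\textbf{Step 2: two classical ingredients.}
\begin{itemize}
\item[(i)] The Ky Fan inequality: for $t\in[0,1]$, $\sigma(t\bsym F+(1-t)\bsym G)$ is weakly submajorized by $t\sigma(\bsym F)+(1-t)\sigma(\bsym G)$, with both vectors sorted decreasingly. This holds because the sum of the $k$ largest singular values is a norm, namely the Ky Fan $k$-norm.
\item[(ii)] If $g\colon\R_+^3\to\R$ is permutation-symmetric, convex and non-decreasing in each argument, and $\bsym a\prec_w\bsym b$ for $\bsym a,\bsym b\in\R_+^3$, then $g(\bsym a)\le g(\bsym b)$. To show this, I choose $\bsym c$ with $\bsym a\le\bsym c$ componentwise and $\bsym c\prec\bsym b$ (strong majorization), which is standard for nonnegative vectors. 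By Rado's/Birkhoff's theorem, $\bsym c$ is a convex combination of permutations of $\bsym b$. Monotonicity then gives $g(\bsym a)\le g(\bsym c)$, and convexity together with symmetry gives $g(\bsym c)\le g(\bsym b)$.
\end{itemize}

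\textbf{Step 3: convexity of $\Phi$.} I take two points $(\bsym F,\bsym H,\delta)$ and $(\bsym G,\bsym L,\eta)$ and a weight $t\in[0,1]$. First I apply (ii) in the first block, with $\bsym y=\sigma(t\bsym H+(1-t)\bsym L)$ and $\delta'=t\delta+(1-t)\eta\in K$ frozen; the section $\bsym x\mapsto\psi(\bsym x,\bsym y,\delta')$ is symmetric, convex and non-decreasing by (a) and (b). Combined with (i), this replaces $\sigma(t\bsym F+(1-t)\bsym G)$ by $t\sigma(\bsym F)+(1-t)\sigma(\bsym G)$. The same argument in the second block replaces $\sigma(t\bsym H+(1-t)\bsym L)$ by $t\sigma(\bsym H)+(1-t)\sigma(\bsym L)$. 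Finally, joint convexity of $\psi$ yields
\begin{equation*}
\Phi\le t\,\Phi(\bsym F,\bsym H,\delta)+(1-t)\,\Phi(\bsym G,\bsym L,\eta).
\end{equation*}
The step I expect to need the most care is the ordering issue in this chain. The majorization arguments require the sorted singular-value vectors, so I must check that $t\sigma(\bsym F)+(1-t)\sigma(\bsym G)$ is formed with both vectors sorted decreasingly. That sorted combination is exactly the one $\psi$ must be evaluated at for the final convexity step to give the desired bound. Hypothesis (a), which allows independent permutations of $\bsym x$ and $\bsym y$, is what makes this consistent across the two blocks.
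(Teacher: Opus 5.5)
Your proof is correct, and it takes a genuinely different route from the paper. The paper does not prove Theorem~\ref{theo_ball_original} itself; it quotes it from Ball. The closest argument in the paper is the proof of the improved criterion, Theorem~\ref{theo_princstretchpolyconv}, in Appendix~\ref{proof_ball}, which outsources all matrix analysis to the signed-singular-value characterization, Theorem~\ref{theo_singvalpolyconv}. There one sets $\Psi_{\textnormal{ssv}}(\bsym x)=\Psi_{\textnormal{Ball}}(|x_1|,\dots,|x_6|,x_7)$, with $\infty$ for $x_7<0$, and obtains convexity on $\R^7$ from convexity of $|\cdot|$ plus monotonicity. One then checks lower semicontinuity and $\Pi_3$-invariance of $\Psi_{\textnormal{ssv}}\circ m$ and invokes the theorem.

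You instead do the matrix analysis by hand. Ky Fan $k$-norms give $\sigma^\downarrow(t\bsym F+(1-t)\bsym G)\prec_w t\,\sigma^\downarrow(\bsym F)+(1-t)\,\sigma^\downarrow(\bsym G)$, and Schur-convexity of symmetric, convex, non-decreasing sections turns this into the convexity inequality. Your bookkeeping with decreasingly sorted vectors in both blocks is right.

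Your route is self-contained and matches Ball's setting exactly: a real-valued convex $\psi$, an arbitrary convex $K$, no lower semicontinuity, and polyconvexity in Ball's sense on $U$. Your caveat about the lower semicontinuity that Theorem~\ref{theo_singvalpolyconv} would impose is well taken.

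The price is generality. Step (ii) needs each section $\bsym x\mapsto\psi(\bsym x,\bsym y,\delta')$ and $\bsym y\mapsto\psi(\bsym x,\bsym y,\delta')$ to be symmetric on the whole orthant, so the separate invariance (a) is used essentially. The paper's route needs only permutation invariance of the composite $\breve\Psi$, that is, the joint invariance $\Psi_{\textnormal{Ball}}(\bsym B\cdot\bsym\lambda,\bsym B\cdot\check{\bsym\lambda},\lambda_1\lambda_2\lambda_3)=\Psi_{\textnormal{Ball}}(\bsym\lambda,\check{\bsym\lambda},\lambda_1\lambda_2\lambda_3)$. This is because Theorem~\ref{theo_singvalpolyconv} asks for symmetry of $\Psi_{\textnormal{ssv}}\circ m$ only, not of $\Psi_{\textnormal{ssv}}$ itself. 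Your argument therefore does not directly yield the paper's relaxed criterion.

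One small slip: you justify the singular values of $\textnormal{cof}\,\bsym F$ via $\det\bsym F>0$. But your $\Phi$ has to be defined at singular matrices anyway, so $\R_+$ must include $0$ in the first six arguments, and under the same reading $0\in K$ is possible. The identity holds for every $\bsym F$ directly from the SVD, since $\textnormal{cof}$ is multiplicative and $\textnormal{cof}\,\bsym R=(\det\bsym R)\,\bsym R$ for orthogonal $\bsym R$. Nothing in the proof breaks.
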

The improvement of Theorem \ref{theo_princstretchpolyconv} over Ball's Theorem \ref{theo_ball_original} is the more relaxed permutation invariance
\begin{align}
\Psi_{\textnormal{Ball}}(\bsym B \cdot \bsym \lambda, \bsym B \cdot \bsym{\check{\lambda}}, \lambda_1 \lambda_2 \lambda_3) = \Psi_{\textnormal{Ball}}(\bsym \lambda, \bsym{\check{\lambda}}, \lambda_1 \lambda_2 \lambda_3) \quad \forall \, \bsym B \in \textnormal{Perm}(3) &\text{ with }\boldsymbol{\lambda} = (\lambda_1, \lambda_2, \lambda_3)\\ &\text{ and } \bsym{\check{\lambda}} = (\lambda_2 \lambda_3, \lambda_1 \lambda_3, \lambda_1 \lambda_2) \nonumber
\end{align}
since $\bsym \lambda$ and $\bsym{\check{\lambda}}$ as well as their permutations are connected, which can be derived from Theorem~\ref{theo_singvalpolyconv}. Nevertheless, this criterion is still only sufficient but not necessary and rules out some frame-indifferent, isotropic polyconvex functions as numerical examples in Section \ref{sec_mielke_energy} will show.
\subsubsection{The incompressible case}
The derivation of the above criterion for the incompressible case is also straightforward and reads:
\begin{corollary}[\textbf{Sufficient polyconvexity criterion for incompressibility based on principal stretches}]\label{theo_princstretchpolyconv_incompr}
Let $\Psi^{\mathrm{inc}}$ be given by
\begin{align}
\Psi^{\textnormal{inc}}(\bsym F) = \breve\Psi^{\textnormal{inc}}(\lambda_1, \lambda_2, \lambda_3) = \Psi_{\textnormal{Ball}}^{\textnormal{inc}}(\lambda_1, \lambda_2, \lambda_3, \lambda_2 \lambda_3, \lambda_1 \lambda_3, \lambda_1 \lambda_2) \quad \forall \, \boldsymbol \lambda \in \R_+^3 &\text{ with }\boldsymbol{\lambda} = (\lambda_1, \lambda_2, \lambda_3) \\ &\text{ and } J = \lambda_1 \lambda_2 \lambda_3 = 1 \nonumber
\end{align}
with $\Psi_{\textnormal{Ball}}^{\textnormal{inc}}$ convex and non-decreasing and chosen such that the associated $\breve\Psi$ is Perm$(3)$-invariant, then $\Psi^{\textnormal{inc}}$ is polyconvex.
\end{corollary}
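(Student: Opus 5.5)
The plan is to reduce Corollary \ref{theo_princstretchpolyconv_incompr} to the compressible criterion, Theorem \ref{theo_princstretchpolyconv}, in the same way Corollary \ref{theo_singvalpolyconv_incompr} was reduced to Theorem \ref{theo_singvalpolyconv}. By the definition in Section \ref{ssec:Incomp}, it suffices to show that the extension $\Psi^{\textnormal{inc}}_\infty$ from Eq.~\eqref{eq:FullextensionInfty} is polyconvex.

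\textbf{Step 1: extend the convex representative.} Given the convex, non-decreasing $\Psi_{\textnormal{Ball}}^{\textnormal{inc}}\colon \R^6\to\R_\infty$, define $\Psi_{\textnormal{Ball}}\colon \R^6\times[0,\infty)\to\R_\infty$ by
\begin{equation*}
\Psi_{\textnormal{Ball}}(\bsym x,\delta)\coloneqq\Psi_{\textnormal{Ball}}^{\textnormal{inc}}(\bsym x) \text{ if } \delta=1, \qquad \Psi_{\textnormal{Ball}}(\bsym x,\delta)\coloneqq\infty \text{ otherwise.}
\end{equation*}
This function has the following properties.
\begin{itemize}
\item It is convex, since it is the extension by $\infty$ of a convex function off the convex affine set $\{\delta=1\}$.
\item It is non-decreasing in each of the first six arguments. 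For $\delta=1$ this is inherited from $\Psi_{\textnormal{Ball}}^{\textnormal{inc}}$, and for $\delta\neq 1$ the value is constantly $\infty$. No monotonicity in $\delta$ is required.
\item It is lower semicontinuous, provided $\Psi_{\textnormal{Ball}}^{\textnormal{inc}}$ is. The set $\{\delta=1\}$ is closed, so the same argument as in the proof of Corollary \ref{theo_singvalpolyconv_incompr} applies.
\end{itemize}
Lower semicontinuity of $\Psi_{\textnormal{Ball}}^{\textnormal{inc}}$ is not stated explicitly in the corollary. I would either assume it tacitly, since Theorem \ref{theo_princstretchpolyconv} requires it, or replace $\Psi_{\textnormal{Ball}}^{\textnormal{inc}}$ by its lower semicontinuous envelope. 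The envelope stays convex and non-decreasing, and it agrees with the original in the interior of the effective domain. If the function is finite-valued, convexity already gives continuity, so nothing needs to be done.

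\textbf{Step 2: apply Theorem \ref{theo_princstretchpolyconv}.} Set $\breve\Psi(\bsym\lambda)\coloneqq\Psi_{\textnormal{Ball}}(\bsym\lambda,\check{\bsym\lambda},\lambda_1\lambda_2\lambda_3)$ on $[0,\infty)^3$. This equals $\breve\Psi^{\textnormal{inc}}(\bsym\lambda)$ when $\lambda_1\lambda_2\lambda_3=1$ and equals $\infty$ otherwise. It is Perm$(3)$-invariant: $\breve\Psi^{\textnormal{inc}}$ is invariant by assumption, and the constraint $\lambda_1\lambda_2\lambda_3=1$ is itself permutation invariant. Theorem \ref{theo_princstretchpolyconv} then shows that the associated $\Psi$ from Eq.~\eqref{eq:6785} is polyconvex.

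\textbf{Step 3: identify $\Psi$ with $\Psi^{\textnormal{inc}}_\infty$.} Consider the three cases for $J=\det\bsym F$.
\begin{itemize}
\item If $J<0$, both $\Psi$ and $\Psi^{\textnormal{inc}}_\infty$ equal $\infty$.
\item If $J\ge 0$ and $J\neq 1$, then $\lambda_1\lambda_2\lambda_3=|J|=J\neq1$, so $\Psi=\infty=\Psi^{\textnormal{inc}}_\infty$.
\item If $J=1$, then $\Psi(\bsym F)=\breve\Psi^{\textnormal{inc}}(\bsym\lambda)=\Psi^{\textnormal{inc}}(\bsym F)=\Psi^{\textnormal{inc}}_\infty(\bsym F)$.
\end{itemize}
Hence $\Psi^{\textnormal{inc}}_\infty=\Psi$ is polyconvex, which by definition means $\Psi^{\textnormal{inc}}$ is polyconvex.

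The only delicate point is the lower semicontinuity bookkeeping in Step 1. All other steps are routine verifications once the extension by $\infty$ is in place.
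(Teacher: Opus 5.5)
Your proof is correct and follows the route the paper intends when it calls the argument analogous to Corollary~\ref{theo_singvalpolyconv_incompr}: extend $\Psi_{\textnormal{Ball}}^{\textnormal{inc}}$ by $\infty$ off the closed hyperplane $\{x_7=1\}$, check convexity, lower semicontinuity and monotonicity in the first six arguments, apply Theorem~\ref{theo_princstretchpolyconv}, and identify the result with $\Psi^{\textnormal{inc}}_\infty$. The one thing to drop is the lower-semicontinuous-envelope alternative, because it does not repair the missing hypothesis. The envelope can differ from $\Psi_{\textnormal{Ball}}^{\textnormal{inc}}$ on the relative boundary of its effective domain, so you would prove polyconvexity of a possibly different energy. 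Moreover, for extended-valued data lower semicontinuity is genuinely needed: $g(\max(x_1,x_2,x_3))$ with $g=0$ on $(-\infty,2)$, $g(2)=1$ and $g=\infty$ on $(2,\infty)$ is convex and non-decreasing, yet it yields an energy that is not lower semicontinuous in $\bsym F$ and hence not polyconvex. So keep lower semicontinuity (or finiteness on $\R^6$) as a tacit hypothesis, as the paper implicitly does.
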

The proof is analogous to the one of Corollary~\ref{theo_singvalpolyconv_incompr}. It should be emphasized once more that this criterion is sufficient but not necessary as will be shown later.

\subsection{Sufficient criterion based on matrix invariants}
\subsubsection{The compressible case}
A criterion formulated in matrix invariants can be directly motivated from Ball's criterion by using coupled combinations of the principal stretches. Within this work, the matrix invariants ($I_1^{\boldsymbol U} = \lambda_1 + \lambda_2 + \lambda_3$, $I_2^{\boldsymbol U} = \lambda_2 \lambda_3 + \lambda_1 \lambda_3 + \lambda_1 \lambda_2, I_3^{\boldsymbol U} = J = \lambda_1 \lambda_2 \lambda_3$) of the right stretch tensor $\bsym U = \sqrt{\bsym C} = \sqrt{\bsym{F^T} \cdot \bsym F}$ are used; see, for example, \citep{steigmann2003} for more details. Let $\Psi$ be given by
\begin{align}
\Psi(\bsym F) = \Psi_{\textnormal{Inv}}(I_1^{\boldsymbol U}, I_2^{\boldsymbol U}, I_3^{\boldsymbol U}) \quad \forall \, \bsym F \in \R^{3 \times 3} \textnormal{ with } J = \det \bsym F \geq 0
\end{align}
with $\Psi_{\textnormal{Inv}}$ convex and non-decreasing in the first two arguments, then $\Psi$ is polyconvex. In that setting frame-indifference and isotropy of $\Psi$ is automatically ensured by working with invariants which in turn implies Perm$(3)$-invariance of $\Psi_{\textnormal{Inv}}$ if one draws the connection to the other criteria.\\
Certainly, the invariants of the right Cauchy--Green tensor $\bsym C$ could also be used. However, it was shown in \citep{geuken2025} that this criterion is not capable of approximating, e.g., energy functions exhibiting linear growth with respect to the stretch. Such linear energy growth corresponds to stress plateaus, which are observed, e.g., for phase transitions, polymers and biological materials such as the periodontal ligament \citep{rees1997,kurzeja2025}. 

\subsubsection{The incompressible case}
The criterion reduces to the arguments $I_1^{\boldsymbol U} = \lambda_1 + \lambda_2 + \lambda_3$ and $I_2^{\boldsymbol U} = \lambda_2 \lambda_3 + \lambda_1 \lambda_3 + \lambda_1 \lambda_2$ for incompressibility. Let $\Psi^{\textnormal{inc}}$ be given by
\begin{align}
\Psi^{\textnormal{inc}}(\bsym F) = \Psi_{\textnormal{Inv}}^{\textnormal{inc}}(I_1^{\boldsymbol U}, I_2^{\boldsymbol U}) \quad \forall \, \bsym F \in \R^{3 \times 3} \textnormal{ with } J = \det \bsym F = 1
\end{align}
with $\Psi_{\textnormal{Inv}}^{\textnormal{inc}}$ convex and non-decreasing, then $\Psi^{\textnormal{inc}}$ is polyconvex. As before, frame-indifference and isotropy is guaranteed by working with invariants.

\begin{table}[h!]
\caption{Input permutations of the signed singular values for the neural network employed in Eq.~\eqref{nn_energy}.}
\renewcommand{\arraystretch}{2}
\resizebox{\textwidth}{!}{%
\begin{tabular}{|l|l|l|}
\hline
$\boldsymbol x^{(1)} = \left( \nu_1, \nu_2, \nu_3, \nu_2 \nu_3, \nu_1 \nu_3, \nu_1 \nu_2 \right)$ & $\boldsymbol x^{(2)} = \left( -\nu_1, -\nu_2, \nu_3, -\nu_2 \nu_3, -\nu_1 \nu_3, \nu_1 \nu_2 \right)$ & $\boldsymbol x^{(3)} = \left( -\nu_1, \nu_2, -\nu_3, -\nu_2 \nu_3, \nu_1 \nu_3, -\nu_1 \nu_2 \right)$ \\\hline
$\boldsymbol x^{(4)} = \left( \nu_1, -\nu_2, -\nu_3, \nu_2 \nu_3, -\nu_1 \nu_3, -\nu_1 \nu_2 \right)$ & $\boldsymbol x^{(5)} = \left( \nu_1, \nu_3, \nu_2, \nu_2 \nu_3, \nu_1 \nu_2, \nu_1 \nu_3 \right)$ & $\boldsymbol x^{(6)} = \left( -\nu_1, -\nu_3, \nu_2, -\nu_2 \nu_3, -\nu_1 \nu_2, \nu_1 \nu_3 \right)$ \\ \hline
$\boldsymbol x^{(7)} = \left( -\nu_1, \nu_3, -\nu_2, -\nu_2 \nu_3, \nu_1 \nu_2, -\nu_1 \nu_3 \right)$ & $\boldsymbol x^{(8)} = \left( \nu_1, -\nu_3, -\nu_2, \nu_2 \nu_3, -\nu_1 \nu_2, -\nu_1 \nu_3 \right)$ & $\boldsymbol x^{(9)} = \left( \nu_2, \nu_1, \nu_3, \nu_1 \nu_3, \nu_2 \nu_3, \nu_1 \nu_2 \right)$ \\ \hline
$\boldsymbol x^{(10)} = \left( -\nu_2, -\nu_1, \nu_3, -\nu_1 \nu_3, -\nu_2 \nu_3, \nu_1 \nu_2 \right)$ & $\boldsymbol x^{(11)} = \left( -\nu_2, \nu_1, -\nu_3, -\nu_1 \nu_3, \nu_2 \nu_3, -\nu_1 \nu_2 \right)$ & $\boldsymbol x^{(12)} = \left( \nu_2, -\nu_1, -\nu_3, \nu_1 \nu_3, -\nu_2 \nu_3, -\nu_1 \nu_2 \right)$ \\ \hline
$\boldsymbol x^{(13)} = \left( \nu_3, \nu_1, \nu_2, \nu_1 \nu_2, \nu_2 \nu_3, \nu_1 \nu_3 \right)$ & $\boldsymbol x^{(14)} = \left( -\nu_3, -\nu_1, \nu_2, -\nu_1 \nu_2, -\nu_2 \nu_3, \nu_1 \nu_3 \right)$ & $\boldsymbol x^{(15)} = \left( -\nu_3, \nu_1, -\nu_2, -\nu_1 \nu_2, \nu_2 \nu_3, -\nu_1 \nu_3 \right)$ \\ \hline
$\boldsymbol x^{(16)} = \left( \nu_3, -\nu_1, -\nu_2, \nu_1 \nu_2, -\nu_2 \nu_3, -\nu_1 \nu_3 \right)$ & $\boldsymbol x^{(17)} = \left( \nu_2, \nu_3, \nu_1, \nu_1 \nu_3, \nu_1 \nu_2, \nu_2 \nu_3 \right)$ & $\boldsymbol x^{(18)} = \left( -\nu_2, -\nu_3, \nu_1, -\nu_1 \nu_3, -\nu_1 \nu_2, \nu_2 \nu_3 \right)$ \\ \hline
$\boldsymbol x^{(19)} = \left( -\nu_2, \nu_3, -\nu_1, -\nu_1 \nu_3, \nu_1 \nu_2, -\nu_2 \nu_3 \right)$ & $\boldsymbol x^{(20)} = \left( \nu_2, -\nu_3, -\nu_1, \nu_1 \nu_3, -\nu_1 \nu_2, -\nu_2 \nu_3 \right)$ & $\boldsymbol x^{(21)} = \left( \nu_3, \nu_2, \nu_1, \nu_1 \nu_2, \nu_1 \nu_3, \nu_2 \nu_3 \right)$ \\ \hline
$\boldsymbol x^{(22)} = \left( -\nu_3, -\nu_2, \nu_1, -\nu_1 \nu_2, -\nu_1 \nu_3, \nu_2 \nu_3 \right)$ & $\boldsymbol x^{(23)} = \left( -\nu_3, \nu_2, -\nu_1, -\nu_1 \nu_2, \nu_1 \nu_3, -\nu_2 \nu_3 \right)$ & $\boldsymbol x^{(24)} = \left( \nu_3, -\nu_2, -\nu_1, \nu_1 \nu_2, -\nu_1 \nu_3, -\nu_2 \nu_3 \right)$ \\ \hline
\end{tabular}%
}
\label{tab:input_permutations}
\end{table}

\section{Isotropic polyconvex neural networks for incompressibility} \label{sec_iso_poly_nns}
Within this section suitable neural network approximations $\Psi^\nn$ for $\Psi^{\textnormal{inc}}$ are proposed based on the different criteria for polyconvexity from the previous section. To ensure convexity in the input arguments that every criterion requires, all subsequent models are based on input convex neural networks (ICNNs) \citep{amos2017}. ICNNs are defined by
\begin{align}
\boldsymbol z_{i+1} = g_i \left(\boldsymbol W_i^{(\boldsymbol z)} \cdot \boldsymbol z_i + \boldsymbol W_i^{(\boldsymbol x)} \cdot \boldsymbol x + \boldsymbol b_i \right) \textnormal{ for } i = 0,...,k-1 \textnormal{ and } \nn(\boldsymbol x;\theta) = \boldsymbol z_k, \label{eq:nn_architecture}
\end{align}
where $\boldsymbol x$ is the input vector, $\boldsymbol z_i$ are the layer activations ($\boldsymbol z_0 \equiv 0,$ $\boldsymbol W_0^{(\boldsymbol z)} \equiv 0$), $\theta = \lbrace \boldsymbol W_{0:k-1}^{(\boldsymbol x)}, \boldsymbol W_{1:k-1}^{(\boldsymbol z)}, \boldsymbol b_{0:k-1} \rbrace$ are the weights, $g_i$ are non-linear activation functions and $k$ is the number of layers. Given this architecture, $\nn$ is convex in $\boldsymbol x$ if all $\boldsymbol W_{1:k-1}^{(\boldsymbol z)}$ are non-negative and all activation functions $g_i$ are convex and non-decreasing \citep{amos2017}. ICNNs are employed in this work because of their representational power. A universal approximation theorem for convex functions was proven by \citet{huang2021} and \citet{chen2019} when ReLU or Softplus activation functions are used (see also Lemma \ref{lem:uat_conv_func}). Of course, other neural network architectures ensuring convexity in the input arguments could be employed as long as the universal approximation property can be proven, see Remark \ref{remark_kans_deep_sets}.

\subsection{Convex Signed Singular Value Neural Networks}
To approximate any frame-indifferent, isotropic polyconvex function under the incompressibility constraint, the same CSSV-NN framwork as in \citep{geuken2025}, but reduced by the determinant, is proposed and hence labeled as inc-CSSV-NN. This corresponds to the polyconvexity criterion given by Corollary~\ref{theo_singvalpolyconv_incompr}. The modified inc-CSSV-NN framework is summarized briefly below.\\
The basis is an ICNN that works with the elementary polynomials of the signed singular values as input. In order to account for the $\Pi_3$-invariance of the potential, $\Psi^{\textnormal{inc}}$ is given by a neural network potential reading
\begin{align}
\Psi^\nn_{\textnormal{ssv}} &= \dfrac{1}{24} \sum_{j = 1}^{24} \nn(\boldsymbol x^{(j)}), \label{nn_energy}\\
\textnormal{where} \quad \boldsymbol x^{(j)} & \in \textnormal{six permutations times four reflections of } \boldsymbol x^{(1)}, \textnormal{ cf.~Tab.~\ref{tab:input_permutations}}.
\end{align}
See Tab.~\ref{tab:input_permutations} for all combinations of $\boldsymbol x^{(j)}$. It should be emphasized that the same $\nn$ and therefore the same weights have to be used for every input permutation in order to guarantee $\Pi_3$-invariance. The overall inc-CSSV-NN architecture and all following neural network architectures are sketched in Fig.~\ref{fig_cssv_nn}. A universal approximation theorem for the underlying inc-CSSV-NN is proven in Section~\ref{sec_uat}. Such a universal approximation theorem cannot be derived for the other neural networks since they are only based on sufficient criteria.
The first Piola--Kirchhoff stress tensor defined by the network potential then reads
\begin{align}
\boldsymbol{P}^\nn_{\textnormal{ssv}} = \dfrac{\partial \Psi^\nn_{\textnormal{ssv}}}{\partial \boldsymbol F} = \dfrac{1}{24} \sum_{j = 1}^{24} \sum_{k = 1}^{3} \dfrac{\partial \nn(\boldsymbol x^{(j)})}{\partial \boldsymbol x^{(j)}} \cdot \dfrac{\partial \boldsymbol x^{(j)}}{\partial \nu_k} \, \dfrac{\partial \nu_k}{\partial \boldsymbol F}. \label{eq:stress}
\end{align}

Overall, the proposed $\Psi^\nn_{\textnormal{ssv}}$ represents a hyperelastic potential, is frame-indifferent, isotropic, polyconvex, accounts for incompressibility, guarantees a symmetric Cauchy stress $\boldsymbol \sigma$ and is thermodynamically consistent. Hence, the framework fulfills all desired constraints 1--5 and 7 given in Section~\ref{sec_fundamentals} exactly. Further constraints, such as the growth conditions \eqref{eq:growthcond_a} and \eqref{eq:growthcond_b} or energy normalization, can be easily enforced by adding respective terms to the potential
\begin{align}
\Psi = \Psi^\nn_{\textnormal{ssv}} + \Psi^{\textnormal{energy}} + \Psi^{\textnormal{growth}},
\end{align}
cf. \citep{geuken2025}. However, the growth conditions are not the focus of this study and furthermore, they are usually implicitly included in the training data. They are thus not considered further at this point and the interested reader is referred to \citep{linden2023, geuken2025} for details on how to design such terms.
Following \citep{linden2023, geuken2025}, the energy normalization is just a constant shift with respect to the undeformed configuration $\boldsymbol F = \boldsymbol I = \delta_{ij} \, \boldsymbol e_i \otimes \boldsymbol e_j$, i.e.,
\begin{align}
\Psi^{\textnormal{energy}} = -\Psi^\nn_{\textnormal{ssv}}(\boldsymbol I). \label{eq:energy_normalization}
\end{align}
The stress normalization is already included through the Lagrange multiplier in the extended stress, see Eq.~\eqref{eq_inc_stress_split}.

\begin{figure}[htbp!]
\centering
\includegraphics[width=\textwidth]{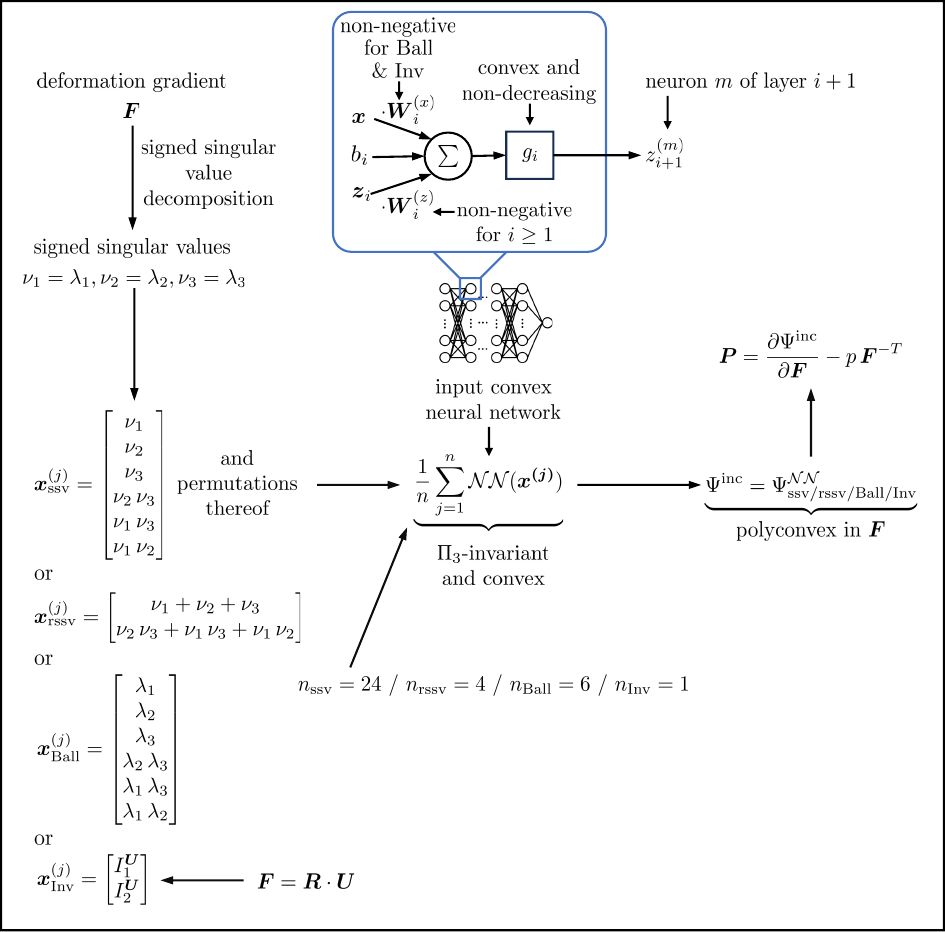}
\caption{Illustration of the incompressible convex signed singular value neural network (inc-CSSV-NN), the reduced inc-CSSV-NN, the inc-Ball NN and the inc-UInvar NN framework as a modification of the CSSV-NN illustration in \citep{geuken2025}. In a first step the signed singular values / principal stretches of the deformation gradient are computed. Depending on the network variant, the elementary polynomials of the signed singular values ($\nu_1, \nu_2, \nu_3$), the elementary symmetric polynomials ($\nu_1 + \nu_2 + \nu_3, \nu_2 \nu_3 + \nu_1 \nu_3 + \nu_1 \nu_2$), the elementary polynomials of the principal stretches ($\lambda_1, \lambda_2, \lambda_3$) or the invariants of $\bsym U$ ($I_1^{\bsym U}$, $I_2^{\bsym U}$) represent the input $\boldsymbol x^{(j)}$ of an input convex neural network $\nn$. The energy $\Psi^\nn_{\textnormal{ssv/rssv/Ball/Inv}} = \Psi^{\textnormal{inc}}$ is the average of $n$ network evaluations accounting for the required invariance according to Tab.~\ref{tab:input_permutations}, Eq.~\ref{nn_svals_red} or Tab.~\ref{tab:input_permutations_ball}. The stresses follow by derivation with respect to $\boldsymbol F$, cf. Eq.~\eqref{eq:stress}, and $p$ is determined from the surrounding pressure.} \label{fig_cssv_nn}
\end{figure}

\begin{remark} \label{remark_kans_deep_sets}
Kolmogorov--Arnold Networks (KANs) \citep{liu2025kan} currently emerge as promising alternatives to ICNNs. They were first applied to constitutive modeling by \citet{abdolazizi2025} and afterwards extended to account for polyconvexity in \citep{thako2025}. While the Kolmogorov--Arnold representation theorem provides a strong theoretical background, the approximation capabilities of KANs, especially under further constraints such as convexity, and limitations are yet to be explored more deeply and universal approximation theorems (for those specific cases) are to be developed. While this states an interesting, future challenge, it is beyond the scope of this work and not necessary for the present purpose since ICNNs already guarantee universal approximation.
Furthermore, the required Perm$(3)$-invariance could also be ensured via Deep Sets \citep{zaheer2018} to eventually save some computational time. However, it has not yet been investigated whether universal approximation can be achieved for the underlying constraints of $\Pi_3$-invariance and convexity. The proposed averaging technique in Eq.~\ref{nn_energy} is simple, effective, does not limit the network expressiveness and is computationally efficient since the network evaluations can be performed simultaneously.
\end{remark}

\subsection{Reduced Convex Signed Singular Value Neural Networks}
For neural networks based on the criterion in terms of the elementary symmetric polynomials (Eq.~\eqref{eq_redpolycrit_incompr}) the exact same architecture as for the inc-CSSV-NNs can be used except for the adapted input. The proposed energy then reads
\begin{align}
\Psi^\nn_{\textnormal{rssv}} &= \dfrac{1}{4} \sum_{j = 1}^{4} \nn(\boldsymbol x_{I,\nu}^{(j)}), \label{nn_svals_red}
\end{align}
with $\boldsymbol x_{I,\nu}^{(j)}$ according to Tab.~\ref{tab:input_permutations_rssv}. The stresses follow in the same manner as above and again further constraints can be easily enforced if desired. Due to the reduced input size and the smaller number of network evaluations, the model is computationally more efficient and less susceptible to local minima during optimization compared to the inc-/CSSV-NN. The expressiveness of this framework will be investigated by the numerical examples in section \ref{sec_examples}.

\begin{table}[h!]
\caption{Input permutations of the elementary symmetric polynomials of the signed singular values for the neural network employed in Eq.~\eqref{nn_svals_red}.}
\centering
\renewcommand{\arraystretch}{2}
\begin{tabular}{|l|l|l|l|}
\hline
$\boldsymbol x_{I,\nu}^{(1)} = (\nu_1 + \nu_2 + \nu_3, \nu_2 \nu_3 + \nu_1 \nu_3 + \nu_1 \nu_2)$ & $\boldsymbol x_{I,\nu}^{(2)} = (-\nu_1 - \nu_2 + \nu_3, -\nu_2 \nu_3 - \nu_1 \nu_3 + \nu_1 \nu_2)$ \\ \hline
$\boldsymbol x_{I,\nu}^{(3)} = (-\nu_1 + \nu_2 - \nu_3, -\nu_2 \nu_3 + \nu_1 \nu_3 -\nu_1 \nu_2)$ &
$\boldsymbol x_{I,\nu}^{(4)} = (\nu_1 - \nu_2 - \nu_3, \nu_2 \nu_3 - \nu_1 \nu_3 - \nu_1 \nu_2)$ \\ \hline
\end{tabular}%
\label{tab:input_permutations_rssv}
\end{table}

\subsection{Neural networks based on the improved Ball criterion formulated in principal stretches}
The general framework for neural network models based on Theorem \ref{theo_princstretchpolyconv_incompr} for polyconvexity (here referred to as Ball NN and inc-Ball NN in the incompressible case) is very similar to the one shown in the previous sections. The elementary polynomials of the principal stretches are the network input instead of the signed singular values. Then, only the respective invariance has to be changed and the non-decreasing property of Corollary \ref{theo_princstretchpolyconv_incompr} needs to be taken into account. The Perm$(3)$-invariance is guaranteed by setting
\begin{align}
\Psi^\nn_{\textnormal{Ball}} &= \dfrac{1}{6} \sum_{j = 1}^{6} \nn(\boldsymbol x_\lambda^{(j)}), \label{nn_ball}\\
\textnormal{where } \boldsymbol x_\lambda^{(j)} &= (\bsym B_j \cdot \bsym \lambda, \bsym B_j \cdot \bsym{\check{\lambda}}), \, \bsym B_j \in \textnormal{Perm}(3).
\end{align}
See Tab.~\ref{tab:input_permutations_ball} for all combinations of $\boldsymbol x_\lambda^{(j)}$. To account for the additional non-decreasing property -- in addition to $\boldsymbol W_{1:k-1}^{(\boldsymbol z)}$ -- also all $\boldsymbol W_{0:k-1}^{(\boldsymbol x)}$ have to be non-negative. This is quite restrictive for the expressiveness of ICNNs and comes on top of the fact that the improved Ball criterion is only sufficient for polyconvexity. However and for the same reasons as the reduced inc-/CSSV-NN, the model is computationally more efficient and less susceptible to local minima during optimization than the inc-/CSSV-NN but less efficient compared to the reduced inc-/CSSV-NN.

\begin{table}[h!]
\caption{Input permutations of the principal stretches for the neural network employed in Eq.~\eqref{nn_ball} based on the improved Ball criterion for polyconvexity.}
\centering
\renewcommand{\arraystretch}{2}
\resizebox{\textwidth}{!}{%
\begin{tabular}{|l|l|l|}
\hline
$\boldsymbol x_\lambda^{(1)} = \left( \lambda_1, \lambda_2, \lambda_3, \lambda_2 \lambda_3, \lambda_1 \lambda_3, \lambda_1 \lambda_2 \right)$ & $\boldsymbol x_\lambda^{(2)} = \left( \lambda_1, \lambda_3, \lambda_2, \lambda_2 \lambda_3, \lambda_1 \lambda_2, \lambda_1 \lambda_3 \right)$ & $\boldsymbol x_\lambda^{(3)} = \left( \lambda_2, \lambda_1, \lambda_3, \lambda_1 \lambda_3, \lambda_2 \lambda_3, \lambda_1 \lambda_2 \right)$ \\ \hline
$\boldsymbol x_\lambda^{(4)} = \left( \lambda_3, \lambda_1, \lambda_2, \lambda_1 \lambda_2, \lambda_2 \lambda_3, \lambda_1 \lambda_3 \right)$ & $\boldsymbol x_\lambda^{(5)} = \left( \lambda_2, \lambda_3, \lambda_1, \lambda_1 \lambda_3, \lambda_1 \lambda_2, \lambda_2 \lambda_3 \right)$ & $\boldsymbol x_\lambda^{(6)} = \left( \lambda_3, \lambda_2, \lambda_1, \lambda_1 \lambda_2, \lambda_1 \lambda_3, \lambda_2 \lambda_3 \right)$ \\ \hline
\end{tabular}%
}
\label{tab:input_permutations_ball}
\end{table}

\subsection{Polyconvex neural networks based on matrix invariants}
The formulation in terms of the matrix invariants already fulfills all neccessary invariances and thus the energy is simply given by
\begin{align}
\Psi^\nn_{\textnormal{Inv}} = \nn(I_1^{\boldsymbol U}, I_2^{\boldsymbol U}). \label{nn_invariants}
\end{align}
However, as for the neural network models based on the improved Ball criterion, all $\boldsymbol W_{0:k-1}^{(\boldsymbol x)}$ have to be non-negative to ensure the non-decreasing property. While this model, henceforth referred to as inc-UInvar NN, is the simplest and most efficient computationally, it inherently has the least expressive power.

\section{Universal approximation theorem for inc-CSSV-NNs for frame-indifferent, isotropic polyconvex functions under incompressibility}\label{sec_uat}

This section establishes a universal approximation result for the incompressible, frame-indifferent, isotropic polyconvex inc-CSSV-NN. Specifically, it is proven that any frame-indifferent, isotropic polyconvex function $\Psi$ under the incompressibility constraint can be approximated to arbitrary precision by an incompressible inc-CSSV-NN. The proof proceeds in three steps: First, the universal approximation of ICNNs for convex functions is recalled. Second, this result is extended to polyconvex functions expressed in terms of singular values under incompressibility and finally, by invoking the equivalence stated in Corollary \ref{theo_singvalpolyconv_incompr}, the proof for the incompressible, frame-indifferent, isotropic polyconvex case is completed.

\begin{lemma}[\textbf{Universal approximation theorem for convex functions \citep{geuken2025}}]\label{lem:uat_conv_func}
Let 
$\Omega 
 \subset \R^n$ be a convex and compact set and $f \colon \Omega \to \R$ be convex and, in particular, continuous. Then, for every $\varepsilon >0$ there exists an ICNN such that
\begin{align}
\sup\limits_{\boldsymbol{x} \in \Omega}\, 
\left| \nn(\boldsymbol{x}) - f (\boldsymbol{x})\right| < \varepsilon\,.
\end{align}
\end{lemma}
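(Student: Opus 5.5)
The plan is to reduce the statement to a classical approximation of convex functions by finite maxima of affine functions, and then realize such a maximum exactly as an ICNN.

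\emph{Step 1 (continuity).} A convex function on a compact convex set need not be continuous at the boundary. The statement, however, includes continuity as part of the hypothesis ("and, in particular, continuous"). So I take $f$ continuous on $\Omega$. By compactness, $f$ is then uniformly continuous and bounded.

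\emph{Step 2 (affine minorants).} For every $\boldsymbol{y}$ in the relative interior of $\Omega$, convexity gives a subgradient $\boldsymbol{g}_{\boldsymbol y}$. The affine function $a_{\boldsymbol y}(\boldsymbol x) = f(\boldsymbol y) + \boldsymbol g_{\boldsymbol y}\cdot(\boldsymbol x-\boldsymbol y)$ then satisfies $a_{\boldsymbol y}\le f$ on $\Omega$, with equality at $\boldsymbol y$.

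This is where I expect the main obstacle. Subgradients can blow up as $\boldsymbol y$ approaches the relative boundary, so the family $a_{\boldsymbol y}$ is not equicontinuous near $\partial\Omega$. I would handle this with a shrinking argument. Fix a relative interior point $\boldsymbol c$ and set $\Omega_t = \boldsymbol c + t(\Omega-\boldsymbol c)$ for $t<1$. By uniform continuity, $|f(\boldsymbol x) - f(\boldsymbol c + t(\boldsymbol x-\boldsymbol c))| < \varepsilon/3$ for $t$ close to $1$. On the compact set $\Omega_t$, which lies in the relative interior, subgradients are uniformly bounded, say by $L_t$. Choose a finite $\delta$-net $\{\boldsymbol y_1,\dots,\boldsymbol y_N\}\subset\Omega_t$ with $\delta$ small relative to $L_t$ and the modulus of continuity. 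Set $h(\boldsymbol x)=\max_i a_{\boldsymbol y_i}(\boldsymbol x)$. Then $h\le f$ on all of $\Omega$, and $f-h<2\varepsilon/3$ on $\Omega_t$.

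For a point $\boldsymbol x\in\Omega\setminus\Omega_t$, write $\boldsymbol x'=\boldsymbol c+t(\boldsymbol x-\boldsymbol c)\in\Omega_t$. By convexity of $h$,
\[
h(\boldsymbol x)\ \ge\ \tfrac1t\,h(\boldsymbol x') - \tfrac{1-t}{t}\,h(\boldsymbol c).
\]
Since $h$ is bounded on $\Omega$ (it lies between $\min_\Omega f - 2\varepsilon/3$ and $\max_\Omega f$), the right-hand side is within $O(1-t)$ of $h(\boldsymbol x')$. Choosing $t$ first and then $\delta$ yields $\sup_\Omega|f-h|<\varepsilon$.

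\emph{Step 3 (realization as an ICNN).} With ReLU activations, the maximum of $N$ affine functions is produced exactly by the architecture \eqref{eq:nn_architecture}. Set $\boldsymbol z_1=\operatorname{ReLU}(a_1(\boldsymbol x)-a_2(\boldsymbol x)) $ through skip weights $\boldsymbol W^{(\boldsymbol x)}$. Then use
\[
\max(u,v)=v+\operatorname{ReLU}(u-v)
\]
and iterate $m_{k+1}=a_{k+1}+\operatorname{ReLU}(m_k-a_{k+1})$. Here $m_k$ enters with weight $+1$, which is non-negative, and the affine terms $a_{k+1}$ enter through the unconstrained skip connections $\boldsymbol W^{(\boldsymbol x)}$.

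Two technical points remain. First, a pass-through of an affine term across a ReLU layer is not allowed directly. I handle it by carrying $m_k - a_{k+1}$ inside the ReLU and re-adding $a_{k+1}$ via the skip connection in the next layer. Second, the final layer should be linear or again a ReLU of a shifted argument. Positivity can be ensured by adding a large constant bias and subtracting it afterwards through the bias.

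For Softplus activations, I would scale, using $\beta^{-1}\operatorname{softplus}(\beta s)\to\operatorname{ReLU}(s)$ uniformly on compacts, to transfer the result with an additional $\varepsilon$-margin.
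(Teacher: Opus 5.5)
Your proof is correct, but it takes a different and more self-contained route than the paper. The paper does not actually prove the lemma. It cites Huang et al.\ (2021), who establish ICNN universality for ReLU/Softplus activations on $[0,1]^n$, and asserts that the result ``can be easily extended'' to arbitrary convex compact $\Omega$.

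You instead work directly on $\Omega$, in three steps:
\begin{itemize}
\item You approximate $f$ uniformly from below by a finite maximum $h$ of affine minorants. The shrunken set $\Omega_t=\boldsymbol c+t(\Omega-\boldsymbol c)\subset\operatorname{ri}\Omega$ gives uniformly bounded subgradients. The estimate $h(\boldsymbol x)\ge t^{-1}h(\boldsymbol x')-\tfrac{1-t}{t}h(\boldsymbol c)$ controls the boundary layer where subgradients may blow up.
\item You realize $h$ exactly by a width-one ICNN, $z_j=\operatorname{ReLU}(z_{j-1}+a_j-a_{j+1})$, with $\boldsymbol W^{(\boldsymbol z)}=1\ge 0$ and a linear output layer that adds $a_N$.
\item You pass to Softplus via $\beta^{-1}\ln(1+e^{\beta s})$. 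The factors $\beta$ and $\beta^{-1}$ are absorbed into the weights without violating $\boldsymbol W^{(\boldsymbol z)}\ge 0$. Since this map is $1$-Lipschitz and within $\ln 2/\beta$ of ReLU, the total error is at most $(N-1)\ln 2/\beta$.
\end{itemize}

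This is essentially the mechanism behind the cited literature: max-affine approximation plus an explicit ICNN construction, as in Chen et al.\ 2019. What your route buys is completeness. It makes explicit the step the paper leaves implicit: the passage to general $\Omega$ must go through approximation on $\Omega$ itself. It cannot go through a convex extension of $f$ to a surrounding cube, because such an extension need not exist. For example, $f(\boldsymbol x)=-\sqrt{1-\|\boldsymbol x\|^2}$ on the closed unit disk has unbounded slopes at the boundary and admits no finite convex extension to any square containing the disk. Your max-affine $h$, by contrast, is defined on all of $\R^n$. You are also right to treat continuity as an assumption, since convexity alone does not give it at $\partial\Omega$. The paper's route buys only brevity.

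Three minor repairs:
\begin{itemize}
\item Your error budget $\varepsilon/3+2\varepsilon/3+O(1-t)$ overshoots $\varepsilon$. Allot, e.g., $\varepsilon/4$ each to the uniform-continuity term, to $\tfrac{1-t}{t}(\max_\Omega f-\min_\Omega f+\varepsilon)$, and to $\omega_f(\delta)+L_t\delta$ on $\Omega_t$.
\item For lower-dimensional $\Omega$, take the subgradients in the direction space of $\operatorname{aff}\Omega$; otherwise they are not uniformly bounded.
\item Drop the remark about a ReLU output with a bias ``subtracted afterwards''. It would not work, since a ReLU output is nonnegative and nothing follows the last layer, and it is unnecessary because the output layer is linear.
\end{itemize}
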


\begin{proof}
The proposed network is an ICNN by design and a proof for the universal approximation theorem is provided in \citep{huang2021} for ReLU and Softplus activation functions for $\Omega = [0, 1]^n$. Their result can be easily extended to arbitrary convex and compact sets.
\end{proof}

\begin{theorem}[\textbf{Universal approximation theorem for singular value polyconvex functions under incompressibility}]\label{prop:uat-psi}

Let $\Psi_{\textnormal{ssv}}^{\textnormal{inc}} \colon \R^6 \to \R_\infty$ be convex such that $\Psi_{\textnormal{ssv}}^{\textnormal{inc}}\circ \m^{\textnormal{inc}}$ is $\Pi_3$-invariant. Let $V\subset \R^3$ be a compact set where $\Psi_{\textnormal{ssv}}^{\textnormal{inc}}$ attains only finite values and $\boldsymbol{\nu} \in V$ fulfills $\nu_1 \nu_2 \nu_3 = 1$. Then, for every $\varepsilon > 0$, there exists a neural network $\Psi^\nn_{\textnormal{ssv}} \colon \R^6 \rightarrow \R_\infty$ of the form \eqref{nn_energy} such that
\begin{align}\label{eq:poly-Approximation-result}
    \sup\limits_{\boldsymbol{\nu} \in V} |\Psi^\nn_{\textnormal{ssv}}(m^{\textnormal{inc}}(\boldsymbol{\nu})) - \Psi_{\textnormal{ssv}}^{\textnormal{inc}}(m^{\textnormal{inc}}(\boldsymbol{\nu}))|
    &< \varepsilon\,.
\end{align}
\end{theorem}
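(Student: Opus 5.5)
The strategy is to reduce the statement to Lemma~\ref{lem:uat_conv_func}, applied on a convex compact set containing all symmetrized inputs. The key is the group action. For each of the 24 elements $\bsym A_j \in \Pi_3$ (permutations combined with even sign changes), there is a linear map $\bsym T_j\colon\R^6\to\R^6$ such that $\bsym x^{(j)} = \bsym T_j\, m^{\textnormal{inc}}(\bsym\nu) = m^{\textnormal{inc}}(\bsym A_j\bsym\nu)$. Concretely, $\bsym T_j$ acts on the first three components by $\bsym A_j$. On the last three components it acts by the induced cofactor action: the same permutation, with the sign of each pairwise product equal to the product of the corresponding two signs. This is exactly the content of Tab.~\ref{tab:input_permutations}. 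Since $V$ may not itself be $\Pi_3$-invariant, I first replace it by $\widetilde V \coloneqq \bigcup_j \bsym A_j V$. This set is again compact and lies on $\{\nu_1\nu_2\nu_3=1\}$ because $\det \bsym A_j = 1$. Moreover, $\Psi_{\textnormal{ssv}}^{\textnormal{inc}}\circ m^{\textnormal{inc}}$ is finite on $\widetilde V$ by the assumed $\Pi_3$-invariance.

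Next I set $K \coloneqq \operatorname{conv}\, m^{\textnormal{inc}}(\widetilde V) \subset \R^6$. This set is compact and convex, being the convex hull of a compact set in finite dimensions. Because $\Psi_{\textnormal{ssv}}^{\textnormal{inc}}$ is convex and finite on $m^{\textnormal{inc}}(\widetilde V)$, it is finite on $K$.

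Here lies the main obstacle: Lemma~\ref{lem:uat_conv_func} requires a function that is continuous on a compact convex set. A finite convex function is continuous only on the relative interior of its domain and may jump at the boundary of $K$. I would handle this as follows. If needed, pass to the lower semicontinuous version; the paper's setting, via Corollary~\ref{theo_singvalpolyconv_incompr}, already assumes lower semicontinuity. Then use that a convex function which is finite on a polytope-like compact hull is upper semicontinuous there (Gale--Klee--Rockafellar). This applies if $K$ is a polytope; for a general compact $\widetilde V$ I would instead approximate $\Psi_{\textnormal{ssv}}^{\textnormal{inc}}$ from below by its Moreau--Yosida or inf-convolution regularizations. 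These are convex and Lipschitz on $\R^6$ and converge pointwise monotonically to the lsc function. On the compact set $m^{\textnormal{inc}}(\widetilde V)$, where the composition with $m^{\textnormal{inc}}$ is continuous (being finite and convex along the relevant set), Dini's theorem yields uniform convergence. I expect this regularity step to need the most care. A cleaner alternative is to simply assume $\Psi_{\textnormal{ssv}}^{\textnormal{inc}}$ is finite on an open neighbourhood of $K$, which gives continuity on $K$ directly.

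With a continuous convex $f$ on $K$ that is $\varepsilon/2$-close to $\Psi_{\textnormal{ssv}}^{\textnormal{inc}}$ on $m^{\textnormal{inc}}(\widetilde V)$, Lemma~\ref{lem:uat_conv_func} gives an ICNN $\nn$ with $\sup_K|\nn - f| < \varepsilon/2$. Now form $\Psi^\nn_{\textnormal{ssv}}$ as in \eqref{nn_energy}. For $\bsym\nu\in V$, each $\bsym x^{(j)} = m^{\textnormal{inc}}(\bsym A_j\bsym\nu)$ lies in $m^{\textnormal{inc}}(\widetilde V)\subset K$. By $\Pi_3$-invariance, $\Psi_{\textnormal{ssv}}^{\textnormal{inc}}(\bsym x^{(j)}) = \Psi_{\textnormal{ssv}}^{\textnormal{inc}}(m^{\textnormal{inc}}(\bsym\nu))$ for all $j$. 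Hence
\begin{equation*}
\Bigl|\tfrac{1}{24}\textstyle\sum_j \nn(\bsym x^{(j)}) - \Psi_{\textnormal{ssv}}^{\textnormal{inc}}(m^{\textnormal{inc}}(\bsym\nu))\Bigr| \le \tfrac{1}{24}\textstyle\sum_j \bigl|\nn(\bsym x^{(j)}) - \Psi_{\textnormal{ssv}}^{\textnormal{inc}}(\bsym x^{(j)})\bigr| < \varepsilon,
\end{equation*}
which is \eqref{eq:poly-Approximation-result}. The averaged network is convex in each $\bsym x^{(j)}$ and is $\Pi_3$-invariant by construction, so it has the required form.
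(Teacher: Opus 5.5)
Your argument follows the paper's proof step by step. You symmetrize $V$, pass to $K=\operatorname{conv} m^{\textnormal{inc}}(\Pi_3(V))$ where $\Psi_{\textnormal{ssv}}^{\textnormal{inc}}$ is finite, approximate by an ICNN via Lemma~\ref{lem:uat_conv_func}, and average using the triangle inequality and $\Pi_3$-invariance. Your direct comparison of $\nn(\bsym x^{(j)})$ with $\Psi_{\textnormal{ssv}}^{\textnormal{inc}}(\bsym x^{(j)})$ is a tidier form of the paper's change of variables. One minor slip: $\nu_1\nu_2\nu_3$ is preserved because $\epsilon_1\epsilon_2\epsilon_3=1$, not because $\det\bsym A_j=1$, which fails for odd permutations.

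The paper invokes the lemma on $K$ having established only finiteness there. You are right that continuity is the real issue, but none of your repairs supplies it.
Passing to the lower semicontinuous hull changes the values being approximated on $m^{\textnormal{inc}}(V)$ unless $\Psi_{\textnormal{ssv}}^{\textnormal{inc}}$ is already lsc, and the statement assumes only convexity.
Gale--Klee--Rockafellar needs a polyhedral set, which $K$ generally is not.
The Moreau--Yosida/Dini step needs $\Psi_{\textnormal{ssv}}^{\textnormal{inc}}$ to be continuous on $m^{\textnormal{inc}}(\widetilde V)$. Your justification, ``finite and convex along the relevant set'', is exactly the unproved claim, so that step is circular.

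Moreover, the gap cannot be closed under the stated hypotheses. Since $\Psi^\nn_{\textnormal{ssv}}\circ m^{\textnormal{inc}}$ is continuous, the conclusion for every $\varepsilon$ forces $\Psi_{\textnormal{ssv}}^{\textnormal{inc}}\circ m^{\textnormal{inc}}$ to be continuous on $V$. Convexity plus finiteness does not give this.

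Here is a counterexample. Take $V$ compact, $\Pi_3$-invariant, contained in $\{\nu_1\nu_2\nu_3=1\}$ and without isolated points, and set $C=\operatorname{conv} m^{\textnormal{inc}}(V)$. Let $e=m^{\textnormal{inc}}(\bsym\nu^*)$ be an extreme point of $C$ (extreme points of the hull of a compact set lie in the set). The signed permutations $\bsym x^{(1)}\mapsto\bsym x^{(j)}$ of Tab.~\ref{tab:input_permutations} map $C$ onto itself, so the orbit $\mathcal E$ of $e$ consists of extreme points. Define $\Psi=1$ on $\mathcal E$, $\Psi=0$ on $C\setminus\mathcal E$, and $\Psi=\infty$ off $C$. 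This is convex, since a nontrivial convex combination landing on an extreme point has both endpoints equal to it. Also $\Psi\circ m^{\textnormal{inc}}$ is $\Pi_3$-invariant and finite on $V$. But it jumps at $\bsym\nu^*$, so no continuous network comes within $1/2$ of it.

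Lower semicontinuity does not help either. Let $V$ be the $\Pi_3$-orbit of a smooth closed disk. Take an exposed point $e=m^{\textnormal{inc}}(\bsym\nu^*)$ of $C$, an affine $u\ge 0$ on $C$ vanishing only at $e$, and an affine $w$ with $w(e)=0$. The closed perspective $w^2/u$, summed over the 24 signed permutations, is convex, lsc and finite on $C$. Along a curve in $V$ through $\bsym\nu^*$, $u\circ m^{\textnormal{inc}}$ vanishes to second order. Meanwhile $w\circ m^{\textnormal{inc}}$ can be chosen to vanish only to first order, so again there is a jump.

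Hence an extra hypothesis is needed, in your proof and in the paper's. Your ``cleaner alternative'' works: finiteness on an open neighbourhood of $K$ gives continuity on $K$ and completes your argument directly.

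Alternatively, it suffices that $\Psi_{\textnormal{ssv}}^{\textnormal{inc}}$ is lsc (as supplied by Corollary~\ref{theo_singvalpolyconv_incompr}) and that $\Psi_{\textnormal{ssv}}^{\textnormal{inc}}\circ m^{\textnormal{inc}}$ is continuous on $V$; the latter is necessary anyway. Under these assumptions your regularization route is sound:
the envelopes $\Psi_k(\bsym x)=\inf_{\bsym y}\{\Psi_{\textnormal{ssv}}^{\textnormal{inc}}(\bsym y)+k|\bsym x-\bsym y|\}$ are, for large $k$, finite, convex and Lipschitz, and they increase pointwise to $\Psi_{\textnormal{ssv}}^{\textnormal{inc}}$.
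Dini's theorem on the compact set $m^{\textnormal{inc}}(\widetilde V)$ makes this convergence uniform.
Lemma~\ref{lem:uat_conv_func} then applies to $\Psi_k|_K$, and your averaging estimate finishes the proof.
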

\begin{proof}
Since $\Psi_{\textnormal{ssv}}^{\textnormal{inc}} \circ \m^{\textnormal{inc}}$ is finite on $V$ and $\Pi_3$-invariant, it is finite on the symmetric extension $\Pi_3(V) \coloneqq \{\bsym{\mathcal{P}}_j \cdot \bsym \nu \mid \bsym{\mathcal{P}}_j \in \Pi_3 \,,\, \bsym \nu \in V \}$ of $V$.
Moreover, since $\Psi_{\textnormal{ssv}}^{\textnormal{inc}}$ is convex and finite on 
$\m^{\textnormal{inc}} (\Pi_3(V)) = \{\m^{\textnormal{inc}}(\bsym \nu) \in \R^6 \mid \bsym \nu \in\Pi_3(V) \}$, it is also finite on the convex hull $\operatorname{conv}\m^{\textnormal{inc}}( \Pi_3(V))$ of $\m^{\textnormal{inc}}( \Pi_3(V))$.
Using Lemma \ref{lem:uat_conv_func}, $\Psi_{\textnormal{ssv}}^{\textnormal{inc}}$ is approximated on $\operatorname{conv}\m^{\textnormal{inc}}( \Pi_3(V))$ with $\nn$ by accuracy $\varepsilon$.
The $\Pi_3$-invariance of $\Psi_{\textnormal{ssv}}^{\textnormal{inc}} \circ \m^{\textnormal{inc}}$ implies $\Psi_{\textnormal{ssv}}^{\textnormal{inc}}(\m^{\textnormal{inc}}(\boldsymbol{\nu})) = \Psi_{\textnormal{ssv}}^{\textnormal{inc}}(\bsym{\mathcal{P}}_j \cdot \m^{\textnormal{inc}}(\boldsymbol{\nu}))$ for $\boldsymbol x^{(j)} = \bsym{\mathcal{P}}_j \cdot \boldsymbol x^{(1)}$ according to Tab.~\ref{tab:input_permutations}. Due to the group structure of $\Pi_3$, the equality holds also for $\bsym{\mathcal{P}}_j^{-1}$. Now, utilizing the $\triangle$-inequality for subadditivity, $\Pi_3$-invariance and the approximation by Lemma \ref{lem:uat_conv_func}, one can conclude:
\begin{align*}
\sup\limits_{\boldsymbol{\nu} \in V} |\Psi^\nn_{\textnormal{ssv}}(m^{\textnormal{inc}}(\boldsymbol{\nu})) - \Psi_{\textnormal{ssv}}^{\textnormal{inc}}(m^{\textnormal{inc}}(\boldsymbol{\nu}))|
=
&\sup\limits_{\boldsymbol{\nu} \in V} \dfrac{1}{24} \left| \sum_{j = 1}^{24} \left(\nn (\underbrace{\boldsymbol x^{(j)}}_{\bsym{\mathcal{P}}_j \cdot \boldsymbol x^{(1)})}(\boldsymbol\nu) - \Psi_{\textnormal{ssv}}^{\textnormal{inc}}(\m^{\textnormal{inc}}(\boldsymbol\nu))\right) \right| \\
\stackrel{\triangle}{\le}
&\dfrac{1}{24} \sum_{j = 1}^{24} \sup\limits_{\boldsymbol{\nu} \in V} \left| \nn (\boldsymbol x^{(j)}(\boldsymbol\nu)) - \Psi_{\textnormal{ssv}}^{\textnormal{inc}}(\m^{\textnormal{inc}}(\boldsymbol\nu)) \right| \\
\stackrel{}{=}
&\dfrac{1}{24} \sum_{j = 1}^{24} \sup\limits_{\m^{\textnormal{inc}}(\boldsymbol{\nu}) \in \bsym{\mathcal{P}}_j \cdot \m^{\textnormal{inc}}(V)} \left| \nn (m^{\textnormal{inc}}(\boldsymbol\nu)) - \Psi_{\textnormal{ssv}}^{\textnormal{inc}}(\bsym{\mathcal{P}}_j^{-1} \cdot \m^{\textnormal{inc}}(\boldsymbol\nu)) \right| \\
\stackrel{\Pi_3}{=}
&\dfrac{1}{24} \sum_{j = 1}^{24} \sup\limits_{\boldsymbol{\nu} \in V} \left| \nn (m^{\textnormal{inc}}(\boldsymbol\nu)) - \Psi_{\textnormal{ssv}}^{\textnormal{inc}}(\m^{\textnormal{inc}}(\boldsymbol\nu)) \right| \\
\stackrel{\textnormal{Lemma \ref{lem:uat_conv_func}}}{<}\hspace{-0.43cm}
&\hspace{0.43cm} \dfrac{1}{24} \sum_{j = 1}^{24} \varepsilon = \varepsilon\,.
\end{align*}
\end{proof}
The universal approximation for frame-indifferent, isotropic polyconvex functions $\Psi^{\textnormal{inc}}$ under incompressibility then follows from combining Theorem \ref{prop:uat-psi} with the equivalence in Corollary \ref{theo_singvalpolyconv_incompr}.
\begin{remark}
Universal approximation can also be guaranteed in the case of energy normalization~\eqref{eq:energy_normalization} since $\Psi^{\textnormal{energy}}$ is just a constant, see also \citep{geuken2025}.
\end{remark}

\section{Algorithmic implementation} \label{sec_implementation}
All neural network models were implemented in \textit{Python} using \textit{TensorFlow} and \textit{SciPy} based on the implementation of CSSV-NNs \citep{geuken2025}. 

\subsection{Network architecture} \label{sec_architecture}
For all network variants, the Softplus function $g^{\mathrm{SP}}(x) = \textnormal{ln}(1+\textnormal{exp}(x))$ was selected as the activation function for all layers except the output layer, which employs linear activation. Both functions are convex, non-decreasing and smooth and therefore suitable for ICNNs and the polyconvexity criteria. Beyond that, this choice guarantees universal approximation. For inc-/CSSV-NNs the weights $\boldsymbol W_{k-1}^{(\boldsymbol x)}$ were set to zero, since the input arguments cancel out due the required symmetry in the case of a last linear layer. This modification does not diminish the expressiveness of CSSV-NNs or its universal approximation ability.

Within the numerical examples, eight network configurations varying in width and depth were conducted. Their size is denoted by "size of input vector -- number of neurons of hidden layer 1 -- ... number of neurons of hidden layer $(k-1)$ -- size of final layer $k$": 
\begin{multicols}{4}
\begin{enumerate}
\item $a$--4--1
\item $a$--8--1
\item $a$--12--1
\item $a$--4--2--1
\item $a$--8--4--1
\item $a$--12--8--1
\item $a$--8--4--4--1
\item $a$--12--8--4--1
\end{enumerate} 
\end{multicols} 
\noindent wherein $a$ is the input vector size of each neural network architecture given in Tab.~\ref{tab:input_size}. The selection of these architectures was guided by preceding hyperparameter studies, the work of \citet{geuken2025} and the demonstrated effectiveness of deeper ICNNs in \citep{chen2019}.

\begin{table}[h!]
\caption{Input vector size $a$ of all the different neural network architectures. The input size is reduced by one in the incompressible case compared to the compressible one due to the removed determinant.}\label{tab:input_size}
\centering
\begin{tabular}{|l|c|l|c|}
\hline
& \phantom{x}a\phantom{x} &  & \phantom{x}a\phantom{x} \\ \hline
CSSV-NN & 7 & inc-CSSV-NN & 6 \\ \hline
Reduced CSSV-NN & 3 & Reduced inc-CSSV-NN & 2 \\ \hline
Ball NN & 7 & inc-Ball NN & 6 \\ \hline
UInvar NN & 3 & inc-UInvar NN & 2 \\ \hline 
\end{tabular}%
\end{table}

\subsection{Sobolev training}\label{sec_sobolev_training}
The networks are trained in Sobolev space by strain--stress tuples
\begin{align}
\mathcal{D} = \left\{ \left(\boldsymbol{F}^{(1)},\boldsymbol{P}^{(1)}\right), \, \left(\boldsymbol F^{(2)}, \boldsymbol{P}^{(2)}\right), \, ... \, , \, \left(\boldsymbol F^{(n)}, \boldsymbol{P}^{(n)}\right) \right\}
\end{align}
based on the mean squared error 
\begin{align}
\textnormal{MSE} = \dfrac{1}{n} \sum_{i = 1}^{n} \left\Vert \boldsymbol{P}^{(i)}-\boldsymbol{P}^\nn \left(\boldsymbol{F}^{(i)}; \, \theta\right) \right\Vert^2 \label{eq_msestress}
\end{align}
with the Froebenius norm $\Vert \bullet \Vert$ as it is nowadays common practice in constitutive modeling. This formulation measures deviations in stress space, thus offering a natural physical interpretation. While alternative loss definitions are possible (e.g., based on strain--energy data or a mixture of strain--energy and strain--stress), training exclusively on stress data produced the most reliable outcomes and aligns well with the availability of experimental measurements. Note that the inc-CSSV-NN framework can also be applied to approximate polyconvex hulls in the same manner as described in \citep{geuken2025}.

\section{Numerical examples} \label{sec_examples}
In the following, several numerical examples are conducted to analyze the expressiveness, accuracy, simplicity and efficiency of the proposed models. In a first step, the models are trained on synthetic data from four well-established, classical models. Then, they are trained on Treloar's experimental data \citep{treloar1944} to highlight their practical applicability. Finally, an explicitly constructed energy function is provided that cannot be approximated by neural networks based on the improved Ball criterion for polyconvexity.

\subsection{Classical models}
All proposed, incompressible neural network models were trained on synthetic data obtained from the following classical, incompressible, frame-indifferent, isotropic polyconvex material models:
\begin{enumerate}
\item Neo--Hooke: \begin{align}
\Psi^{\textnormal{inc}}_{\textnormal{NH}} = c_1 \, (I_1^{\bsym C} - 3) \label{eq:neo_hooke}
\end{align}
\item Mooney--Rivlin: \begin{align}
\Psi^{\textnormal{inc}}_{\textnormal{MR}} = c_1 \, (I_1^{\bsym C} - 3) + c_2 \, (I_2^{\bsym C} - 3) \label{eq:mooney_rivlin}
\end{align}
\item Gent: \begin{align}
\Psi^{\textnormal{inc}}_{\textnormal{Gent}} = - c_1 \, (I_m - 3) \, \textnormal{ln}\bigg(1-\dfrac{I_1^{\bsym C} - 3}{I_m - 3}\bigg) \label{eq:gent}
\end{align}
\item Arruda--Boyce: \begin{align}
\Psi^{\textnormal{inc}}_{\textnormal{AB}} = \, &c_1 \, \bigg(\dfrac{1}{2} \, (I_1^{\bsym C} - 3) + \dfrac{1}{20 \, N} \, ((I_1^{\bsym C})^2 - 9) + \dfrac{11}{1050 \, N^2} \, ((I_1^{\bsym C})^3 - 27) \label{eq:arruda}\\
&+ \dfrac{19}{7000 \, N^3} \, ((I_1^{\bsym C})^4 - 81) + \dfrac{519}{673750 N^4} \, ((I_1^{\bsym C})^5 - 243)\bigg) \nonumber
\end{align}
\end{enumerate}
The material parameters were chosen such that all models linearize to the same Hooke model for small strains and are given in Tab.~\ref{tab:mat_params}. The Arruda--Boyce energy here is a fifth-order approximation in $I_1^{\bsym C}$ of the original energy as described in \citep{arrudaboyce1993}. The following load cases have been used for training (load direction $F_{ij}$, format $[$start value : increment : end value$]$):
\begin{itemize}
\item no load $F_{ij} = \delta_{ij}$ (= $1$ if $i=j$, 0 else) 
\item uniaxial compression/tension $F_{11} \in [0.5:0.1:2.5]$
\item biaxial compression/tension $F_{11} = F_{22} \in [0.7:0.1:2.0]$
\item pure shear (planar compression/tension) $F_{11} \in [0.5:0.1:2.0], F_{22} = 1$
\end{itemize}
leading to 52 data points (with uniform distribution within each load case). $F_{33}$ follows from incompressibility and the Lagrange multiplier $p$ from the constraint $P_{33} = 0$. A separate evaluation with validation and test data sets is omitted for the noise-free synthetic data base that is directly extracted from the ground truth reference. This does not limit any of the following statements. For each combination of material model, neural network model (inc-CSSV-NN, reduced inc-CSSV-NN, inc-Ball NN and inc-UInvar NN) and architecture (1--8), 30 instances with different random weight initializations were trained using \textit{SciPy’s minimize} function, which performs a constrained L-BFGS-B optimization. The number of maximum iterations was set to 1000. However, nearly every optimization stopped earlier based on \textit{SciPy’s} default criteria.

\begin{table}[h!]
\caption{Employed material parameters for the classical models corresponding to a shear modulus of $\mu = 2\,\textnormal{MPa}$ and a Young's Modulus of $E = 6\,\textnormal{MPa}$ for the linearized model.}
\centering
\begin{tabular}{|l|c|c|}
\hline
Neo--Hooke & $c_1 = 1.0 \,\textnormal{MPa}$ & - \\ \hline Mooney--Rivlin & $c_1 = 0.8 \,\textnormal{MPa}$ & $c_2 = 0.2 \,\textnormal{MPa}$ \\ \hline
Gent                                & $c_1 = 1.0 \,\textnormal{MPa}$ & $I_m = 30$                      \\ \hline
Arruda--Boyce                        & $c_1 = 1.7 \,\textnormal{MPa}$ & $N = 4$ \\ \hline
\end{tabular}%
\label{tab:mat_params}
\end{table}

The inc-CSSV-NN, the reduced inc-CSSV-NN and the inc-Ball NN capture all four energies perfectly as can be seen in Figs.~\ref{fig_neohooke}--\ref{fig_arruda} and further supported by the low mean squared errors of the best-performing instances given in Tab.~\ref{tab:mses_classical}. Interestingly, the models also extrapolate most of the trained load paths \eqref{eq:neo_hooke}--\eqref{eq:arruda} perfectly. This indicates that they indeed learned the underlying analytic relations for the moderately extended strain domains in the present settings. In contrast, the inc-UInvar NN shows much higher mean squared errors (Tab.~\ref{tab:mses_classical}) and fails to learn the uniaxial tension behavior of all material models except the Mooney--Rivlin model accurately. In addition, the extrapolation behavior is worse compared to the other three neural network models. From that it can be concluded that a neural network based on invariants of $\bsym U$ does not approximate $I_1^{\bsym C}$ very well and thus has limited expressiveness and accuracy. Clearly, a neural network based on invariants of $\bsym C$ would perform better in that case, however and as stated before, it was already demonstrated in \citep{geuken2025} that other energy functions are unattainable by such neural networks. Thus, it is not clear which criterion is stronger. In the 2D case, the criterion connected to $\bsym U$ includes the one of $\bsym C$ since the invariants of $\bsym C$ can be expressed as convex and partly non-decreasing functions of the invariants of $\bsym U$ and therefore is more general. However, this does not apply for the 3D case. These results show that formulations via invariants -- though being easy to implement and efficient computational wise -- are not optimal due to their restricted expressiveness compared to the other approaches. 
\begin{figure}[htbp!]
\centering
\caption*{\textbf{Neo--Hooke}}
\includegraphics[width=\textwidth]{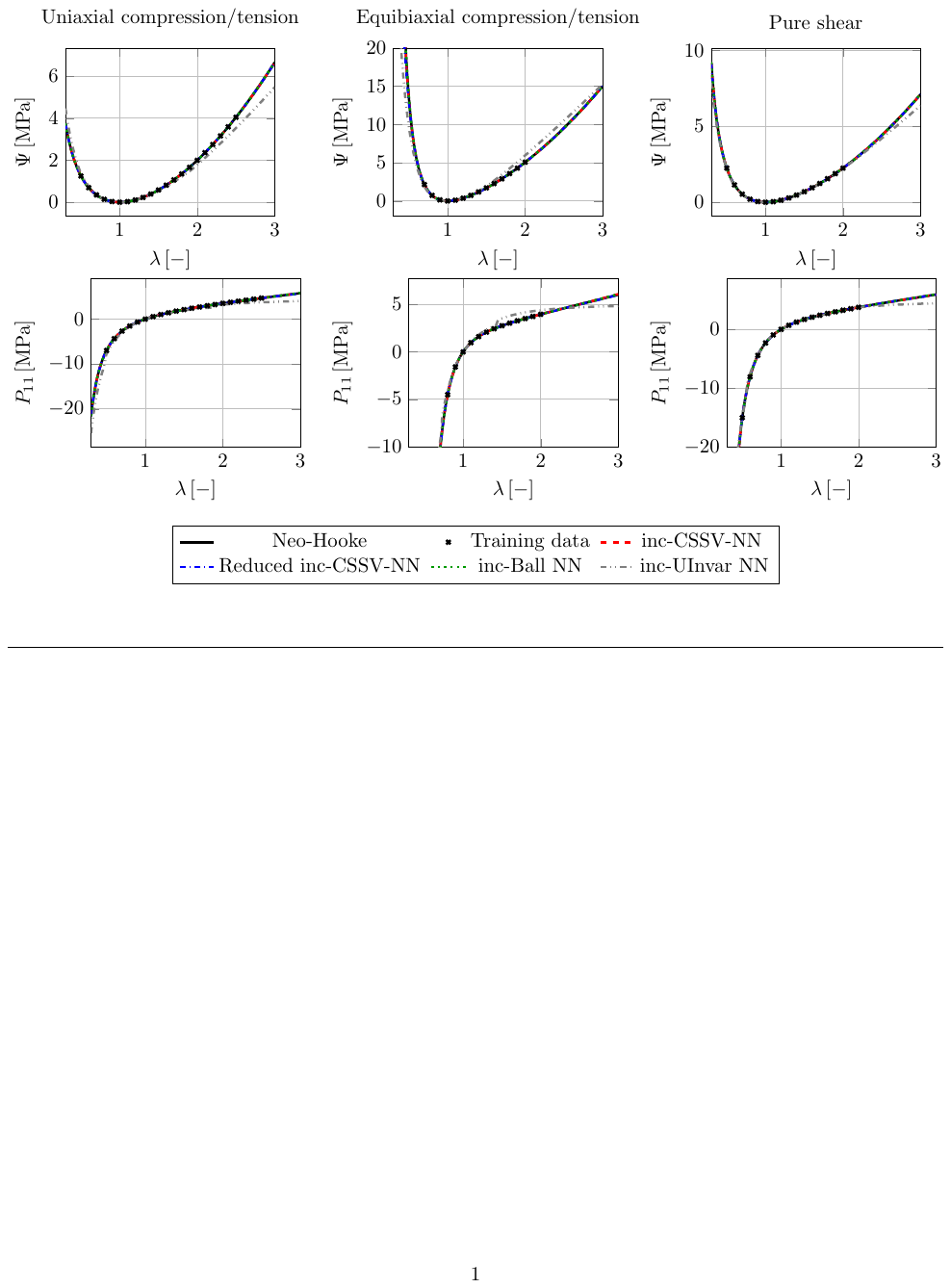}
\caption{Approximation of the Neo--Hooke model \eqref{eq:neo_hooke}: Energy and stress predictions of the inc-CSSV-NN, reduced inc-CSSV-NN, inc-Ball NN and inc-UInvar NN for uniaxial ($F_{11} = \lambda$) and equibiaxial ($F_{11} = F_{22} = \lambda$) compression and tension as well as pure shear/planar compression/tension ($F_{11} = \lambda$, $F_{22}= 1$).}
\label{fig_neohooke}
\end{figure}

\begin{figure}[htbp!]
\centering
\caption*{\textbf{Mooney--Rivlin}}
\includegraphics[width=\textwidth]{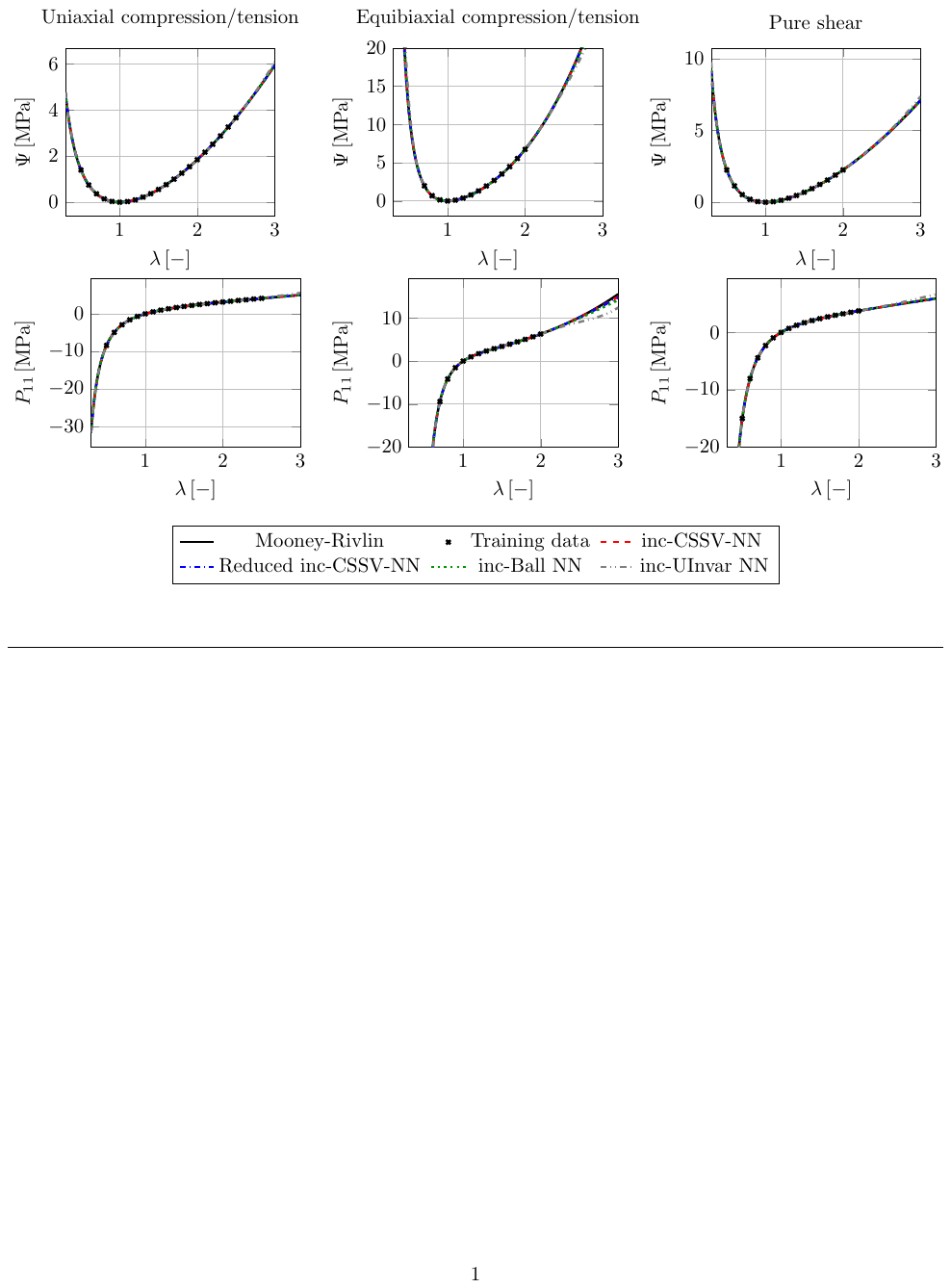}
\caption{Approximation of the Mooney--Rivlin model \eqref{eq:mooney_rivlin}: Energy and stress predictions of the inc-CSSV-NN, reduced inc-CSSV-NN, inc-Ball NN and inc-UInvar NN for uniaxial ($F_{11} = \lambda$) and equibiaxial ($F_{11} = F_{22} = \lambda$) compression and tension as well as pure shear/planar compression/tension ($F_{11} = \lambda$, $F_{22}= 1$).}
\label{fig_mooneyrivlin}
\end{figure}

\begin{figure}[htbp!]
\centering
\caption*{\textbf{Gent}}
\includegraphics[width=\textwidth]{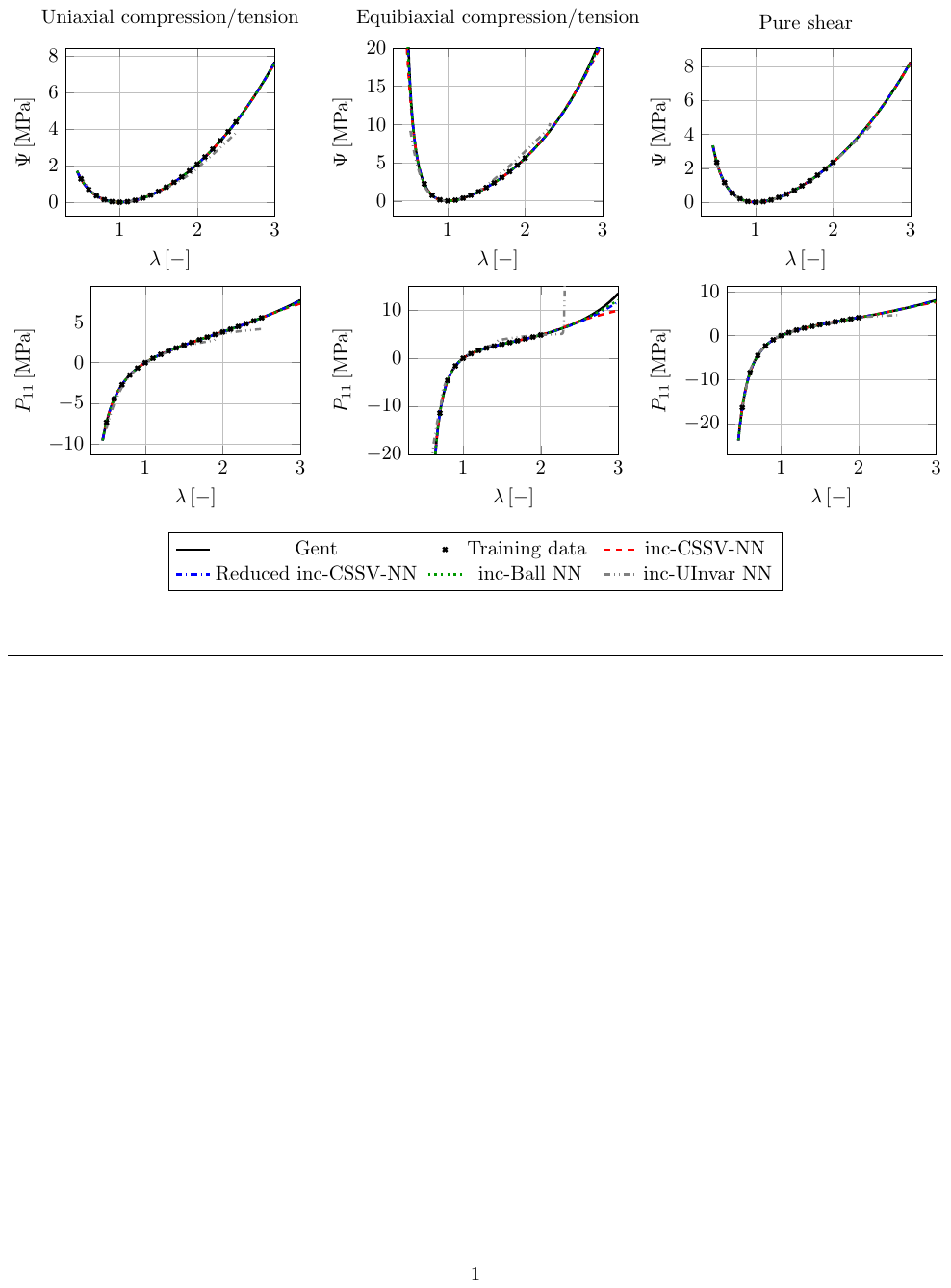}
\caption{Approximation of the Gent model \eqref{eq:gent}: Energy and stress predictions of the inc-CSSV-NN, reduced inc-CSSV-NN, inc-Ball NN and inc-UInvar NN for uniaxial ($F_{11} = \lambda$) and equibiaxial ($F_{11} = F_{22} = \lambda$) compression and tension as well as pure shear/planar compression/tension ($F_{11} = \lambda$, $F_{22}= 1$).}
\label{fig_gent}
\end{figure}

\begin{figure}[htbp!]
\centering
\caption*{\textbf{Arruda--Boyce}}
\includegraphics[width=\textwidth]{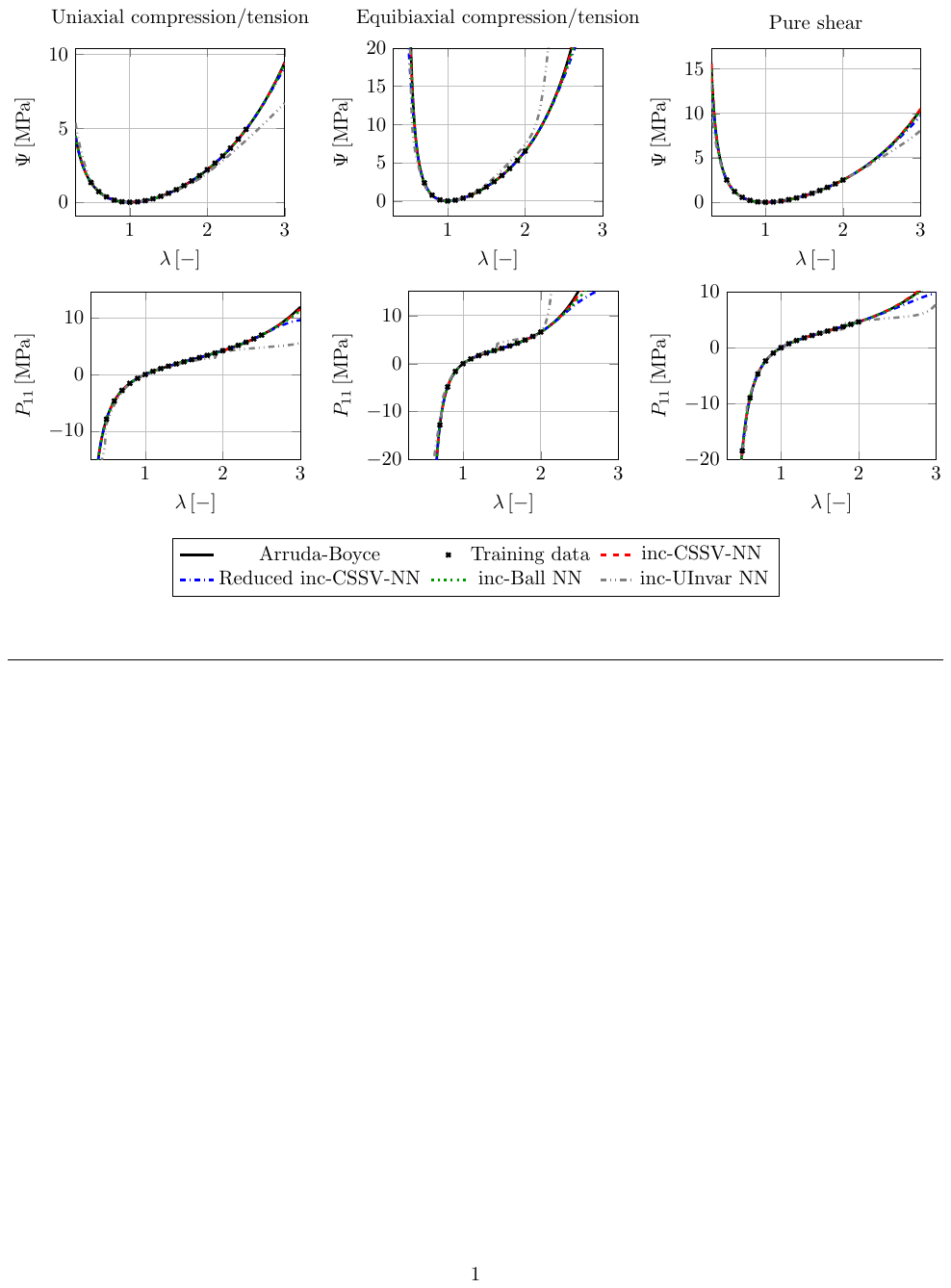}
\caption{Approximation of the Arruda--Boyce model \eqref{eq:arruda}: Energy and stress predictions of the inc-CSSV-NN, reduced inc-CSSV-NN, inc-Ball NN and inc-UInvar NN for uniaxial ($F_{11} = \lambda$) and equibiaxial ($F_{11} = F_{22} = \lambda$) compression and tension as well as pure shear/planar compression/tension ($F_{11} = \lambda$, $F_{22}= 1$).}
\label{fig_arruda}
\end{figure}

\begin{table}[h!]
\caption{Achieved mean squared errors of the inc-CSSV-NN, reduced inc-CSSV-NN, inc-Ball NN and inc-UInvar NN for the training data sets of the classical models as well as the inc-Mielke model.}\label{tab:mses_classical}
\centering
\begin{tabular}{|l|c|c|c|c|}
\hline
              & inc-CSSV-NN & Reduced inc-CSSV-NN & inc-Ball NN & inc-UInvar NN\\ \hline
Neo--Hooke     & 5.14e-09 & 4.94e-09 & 1.42e-07 & 3.57e-02 \\ \hline
Mooney--Rivlin & 5.66e-09 & 9.15e-09 & 9.74e-08 & 2.95e-04 \\ \hline
Gent          & 8.39e-09 & 1.25e-08 & 1.46e-07 & 4.16e-02 \\ \hline
Arruda--Boyce  & 1.06e-07 & 2.88e-08 & 4.11e-07 & 5.98e-02 \\ \hline
inc-Mielke    & 9.73e-05 & 6.63e-05 & 6.95e-03 & 7.44e-02 \\ \hline
\end{tabular}%
\end{table}

\subsection{Treloar data}
From a practical perspective, training with experimental data is also very relevant. Therefore, all neural networks were trained on Treloar's well-known experimental data of (nearly incompressible) vulcanized rubber \citep{treloar1944} to investigate their practical applicability. The data set includes uniaxial tension, equibiaxial tension and pure shear loading. As in \citep{geuken2025}, the values of the 56 strain--stress tuples were taken from \citep{steinmann2012} and the tuples were split into a training, validation and test data set. The training data set comprises 41, the validation set 10 and the test set 5 randomly selected data points. However, since the pure shear state is only an assumption based on a uniaxial tensile test of a wide specimen, the stress component in width direction is unknown and therefore removed from the data set. Once more, for each combination of neural network model and architecture, 30 instances with different random weight initializations were trained using \textit{SciPy’s minimize} function. In the end, the model with the lowest mean squared error in the validation data was chosen.

Again, the inc-CSSV-NN, the reduced inc-CSSV-NN and the inc-Ball NN fit the material behavior perfectly, while the inc-UInvar NN fails to do so, cf. Fig.~\ref{fig_treloar}. This is also evident in the mean squared errors shown in Tab.~\ref{tab:mses_treloar}. It should be mentioned, that the first three neural network models perform magnitudes better than all 14 classical models investigated in \citep{steinmann2012} highlighting the superior expressiveness and accuracy of neural network models. Unfortunately, an exact
error comparison is not possible because the stress components included in the error \citep{steinmann2012} are not explicitly accessible. Interestingly, the test set mean squared errors are lower than the training and validation errors. This can be justified by the data distribution: the randomly selected, few test data points lie all in low stretch and stress regimes, which inherently show lower squared errors. In these regimes even the inc-UInvar NN provides a really good approximation.

So far, no significant difference between the inc-CSSV-NN, the reduced inc-CSSV-NN and the inc-Ball NN regarding the accuracy and expressiveness can be identified. For this reason, a final example is examined.
\begin{figure}[htbp!]
\centering
\caption*{\textbf{Treloar's experimental data}}
\includegraphics[width=\textwidth]{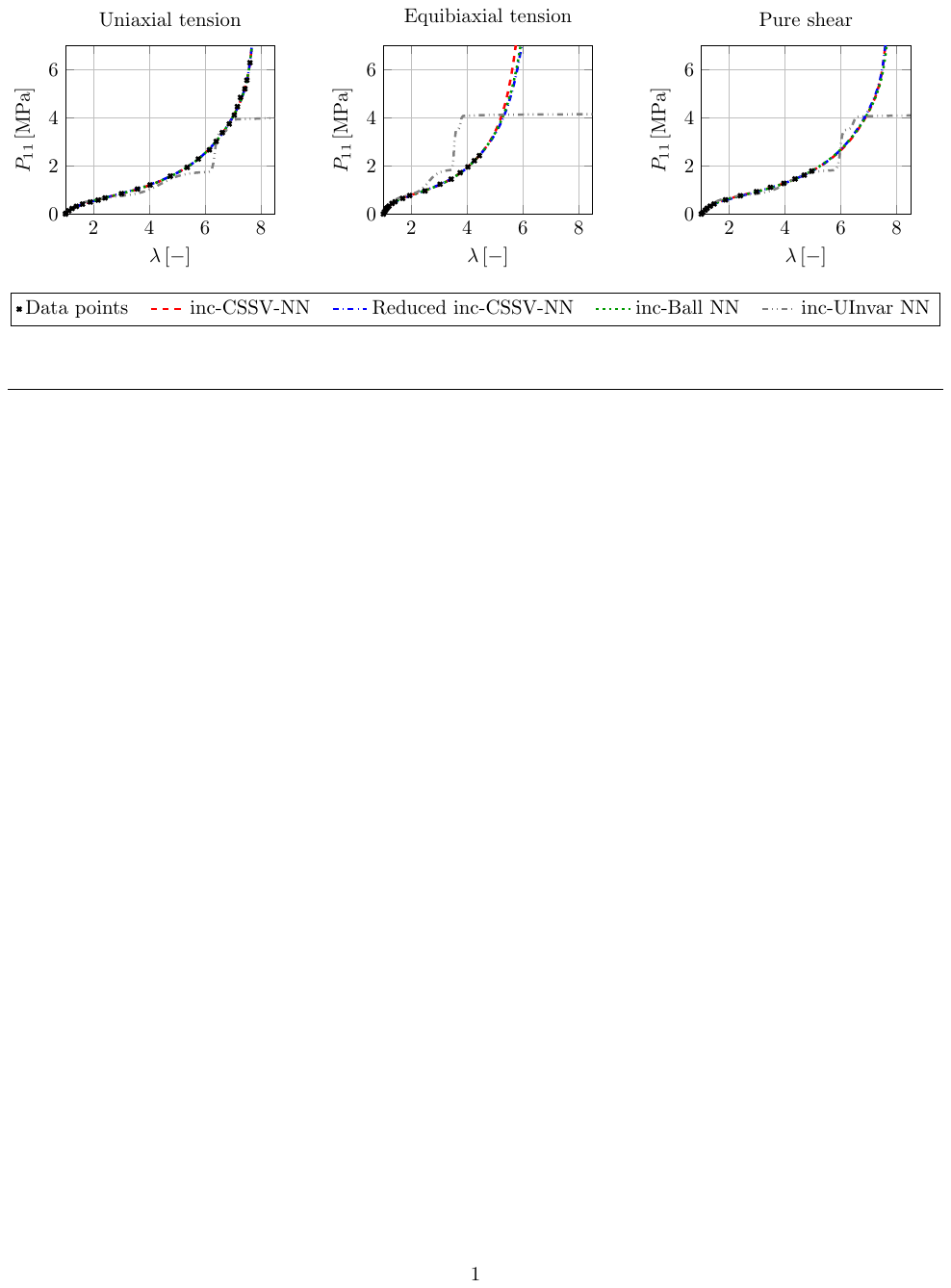}
\caption{Approximation of Treloar's experimental data: Stress predictions of the inc-CSSV-NN, reduced inc-CSSV-NN, inc-Ball NN and inc-UInvar NN for uniaxial ($F_{11} = \lambda$) and equibiaxial ($F_{11} = F_{22} = \lambda$) tension as well as pure shear/planar tension ($F_{11} = \lambda$, $F_{22}= 1$).}
\label{fig_treloar}
\end{figure}

\begin{table}[h!]
\caption{Achieved mean squared errors of the inc-CSSV-NN, reduced inc-CSSV-NN, inc-Ball NN and inc-UInvar NN for the training, validation and test data set of Treloar's experimental data.}
\label{tab:mses_treloar}
\centering
\begin{tabular}{|l|c|c|c|c|}
\hline
                    & inc-CSSV-NN & Reduced inc-CSSV-NN & inc-Ball NN & inc-UInvar NN \\ \hline
Training data set   & 3.65e-05 & 3.98e-05 & 3.89e-05 & 6.51e-02 \\ \hline
Validation data set & 3.35e-04 & 3.56e-04 & 3.70e-04 & 7.07e-02 \\ \hline
Test data set       & 1.07e-05 & 1.30e-05 & 1.35e-05 & 2.84e-04 \\ \hline
\end{tabular}%
\end{table}

\subsection{Mielke energy} \label{sec_mielke_energy}
Together with necessary and sufficient conditions for polyconvexity of frame-indifferent isotropic energy functions, \citet{mielke2005} proposed the energy
\begin{align}
\Psi_{\textnormal{Mielke}} = \textnormal{max} \left( \vert \nu_1 - \nu_2 \nu_3 \vert, \vert \nu_2 - \nu_1 \nu_3 \vert, \vert \nu_3 - \nu_1 \nu_2 \vert \right), \label{eq:mielke_energy}
\end{align}
and showed that it is polyconvex, despite being non-convex in $\bsym \nu$ and, in particular, not monotone in any $\nu_i$. Especially due to the latter, it seems reasonable to assume that this energy cannot be approximated by neural networks that are based on the improved Ball criterion for polyconvexity, which is investigated in this section. Since the energy \eqref{eq:mielke_energy} is not differentiable, a smooth, polyconvex approximation given by
\begin{align}
\Psi^{\textnormal{approx}}_{\textnormal{Mielke}} = \dfrac{1}{a} \, \textnormal{log} \left( \textnormal{cosh} \left( a \, (\nu_1 - \nu_2 \nu_3) \right) + \textnormal{cosh} \left( a \, (\nu_2 - \nu_1 \nu_3) \right) + \textnormal{cosh} \left( a \, (\nu_3 - \nu_1 \nu_2) \right) \right), \label{eq:mielke_approx2} 
\end{align}
wherein $a$ is a numerical parameter controlling the smoothness, is applied instead. It will be first used as an incompressible model (only deformation gradients $\bsym F$ satisfying $J = \det \bsym{F} = 1$ and adapted stresses) and then as originally proposed by \citet{mielke2005} as a compressible model.

\subsubsection{Incompressible Mielke energy} \label{sec_mielke_incompr}

In the following
\begin{align}
\Psi^{\textnormal{approx, inc}}_{\textnormal{Mielke}} = \dfrac{1}{a} \, \textnormal{log} \left( \textnormal{cosh} \left( a \, (\nu_1 - \nu_2 \nu_3) \right) + \textnormal{cosh} \left( a \, (\nu_2 - \nu_1 \nu_3) \right) + \textnormal{cosh} \left( a \, (\nu_3 - \nu_1 \nu_2) \right) \right) \label{eq:mielke_approx}\\ \forall \, \bsym \nu \textnormal{ with } \nu_1 \, \nu_2 \, \nu_3 = 1, \nonumber
\end{align}
is referred to as the inc-Mielke model. The incompressibility constraint allows a parametrization of the energy and stresses in two independent principal stretches and thus the neural network models were trained on states $[F_{11},F_{22}] \in [0.5:0.075:2]^2$, leading to 441 data points, with $a=10$ and $F_{33}$ following from incompressibility. The training procedure was the same as for the classical models.

This time, only the inc-CSSV-NN and the reduced inc-CSSV-NN are able to reproduce the material model perfectly, cf. Fig.~\ref{fig_mielke} and Tab.~\ref{tab:mses_classical}. Their lowest mean squared errors are 9.73e-05 for the inc-CSSV-NN and 6.63e-05 for the reduced version. In contrast, the lowest mean squared error achieved by the inc-Ball NNs (6.95e-03) is 71 times higher than the lowest inc-CSSV-NN error and 104 times higher than the lowest reduced inc-CSSV-NN error. The inc-Ball NN is, e.g., not able to predict the stress component $P_{11}$ for compressive strains in $11$-direction while stretched in $22$-direction, see Fig.~\ref{fig_mielke} right, or decreasing stresses in the pure shear case ($\lambda_2 = 1$, Fig.~\ref{fig_mielke} middle and right) accurately. This behavior is expected from the respective polyconvexity criterion. The inc-UInvar NN performs even worse (lowest mean squared error 7.44e-02) and shows only a smooth, but far off approximation of the stresses (Fig.~\ref{fig_mielke}). The details of the material behavior prescribed by Eq.~\eqref{eq:mielke_approx} are not captured.

Finally, it can be concluded from this example that the expressivity of the inc-Ball NN is indeed limited. This indicates that the improved version of Ball's polyconvexity criterion and thus also Ball's original criterion is not necessary in the incompressible case. In contrast, the reduced inc-CSSV-NN approximated all incompressible energies perfectly. It hence does not contradict the ?-marked relation in Fig.~\ref{fig_crit_overview} regarding the necessity of the reduced criterion based on the elementary symmetric polynomials of the signed singular values in the incompressible case. However, further mathematical investigations are required and rigorous proofs of these statements remain an open challenge.

\begin{figure}[htbp!]
\centering
\caption*{\textbf{inc-Mielke}}
\vspace{-1em}
\includegraphics[width=\textwidth]{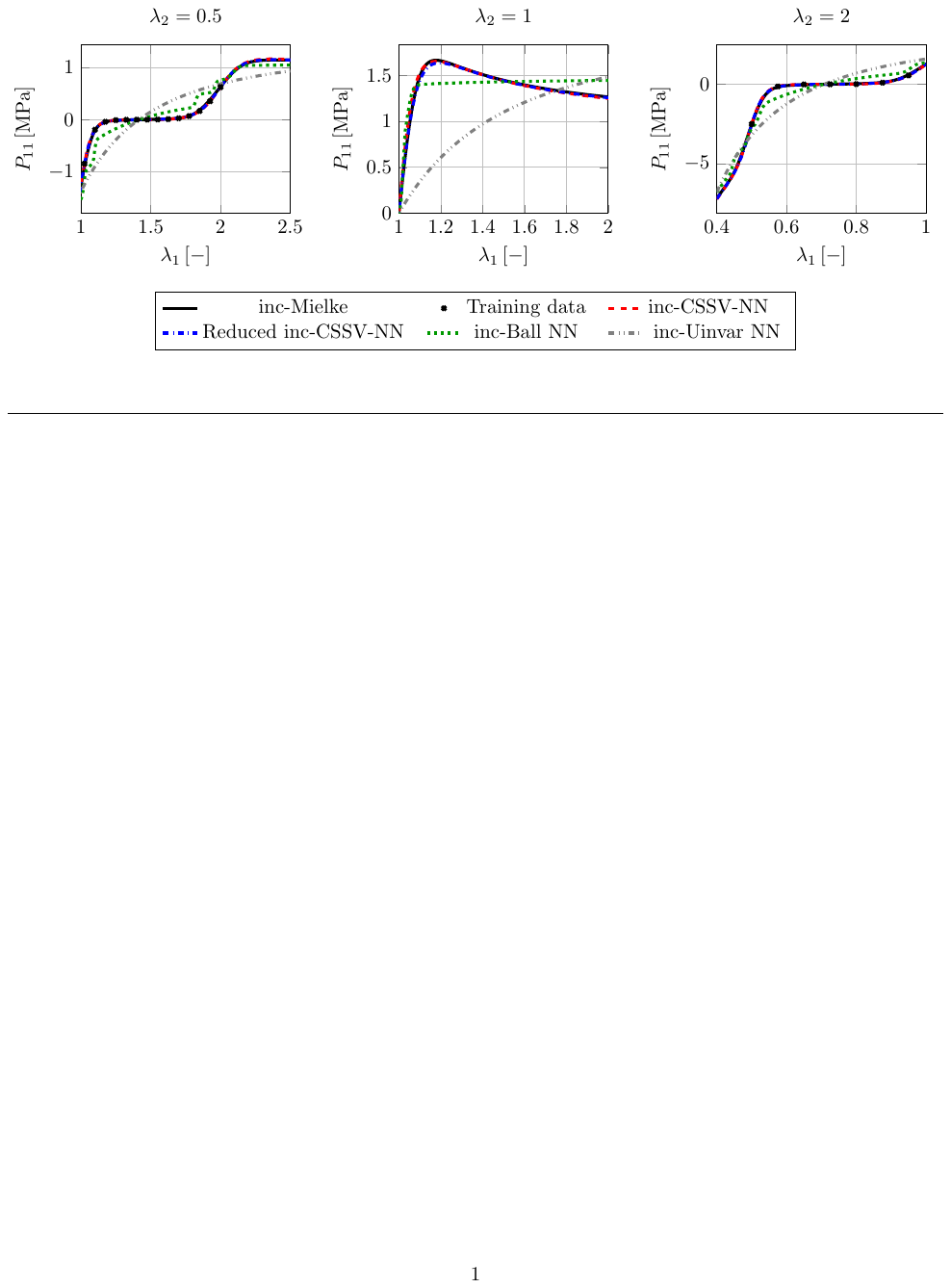}
\caption{Approximation of the inc-Mielke model \eqref{eq:mielke_approx}: Representative stress over stretch ($\lambda_1 = F_{11}$) predictions of the inc-CSSV-NN, reduced inc-CSSV-NN, inc-Ball NN and inc-UInvar NN for different, fixed stretches in the perpendicular direction $\lambda_2 = F_{22}$. The inc-Ball NN is, e.g., not able to predict the decreasing stress for planar tension ($\lambda_2 = 1$) and only provides a constant approximation (middle).}
\label{fig_mielke}
\end{figure}

\subsubsection{Compressible Mielke energy} \label{sec_mielke_compr}
This section particularly aims to compare the performance of the compressible CSSV-NN, the reduced CSSV-NN and the Ball NN using the smooth approximation of the compressible energy proposed by \citet{mielke2005}. The UInvar NN was not investigated here, as this framework already showed poor performance for the previous incompressible energies. Notably, the original energy \eqref{eq:mielke_energy} can be expressed in terms of ReLu functions and thus by a six layer CSSV-NN with ReLu activation. To achieve this, six linear input combinations are defined first
\begin{align*}
h_1 = \nu_1 - \nu_2 \, \nu_3 \quad\quad\quad
h_2 = \nu_2 \, \nu_3 - \nu_1 \quad\quad\quad
h_3 = \nu_2 - \nu_1 \, \nu_3& \\
h_4 = \nu_1 \, \nu_3 - \nu_2 \quad\quad\quad
h_5 = \nu_3 - \nu_1 \, \nu_2 \quad\quad\quad
h_6 = \nu_1 \, \nu_2 - \nu_3&.
\end{align*}
Then the CSSV-NN exactly representing \eqref{eq:mielke_energy} is given by 
\begin{equation*}
L_0 = \boldsymbol x^{(j)} 
\end{equation*}
\begin{equation*}
L_1 = \relu(h_1 - h_2) + h_2
\end{equation*}
\begin{equation*}
L_2 = \relu(L_1 - h_3) + h_3
\end{equation*}
\begin{equation*}
L_3 = \relu(L_2 - h_4) + h_4
\end{equation*}
\begin{equation*}
L_4 = \relu(L_3 - h_5) + h_5
\end{equation*}
\begin{equation*}
L_5 = \relu(L_4 - h_6) + h_6
\end{equation*}
\begin{equation}
\Psi^\nn_{\textnormal{ssv}} = \dfrac{1}{24} \sum_{j = 1}^{24} L_5.
\end{equation}
Motivated by this, various deep (up to ten layers) and wide (up to 500 neurons per layer) ReLu networks with random weight initialization were trained on strain--energy tuples generated from the original energy \eqref{eq:mielke_energy} as an additional study. Surprisingly, the optimizer consistently failed to recover the analytical solution. Accordingly, even if the space spanned by the neural network includes the analytical solution, the applied optimization algorithm might not find it. This observation is also relevant for interpreting the results of the neural networks trained on the smooth approximation \eqref{eq:mielke_approx2} presented below.

As in the previous section, $a$ was set to 10 for the smooth approximation. The dataset was generated by sampling diagonal deformation gradients of the form
\begin{equation}
\bsym F= \textbf{diag}(\lambda_1,\lambda_2,\lambda_3),
\end{equation}
where each principal stretch $\lambda_i$ was selected from the uniformly discretized range $[0.6:0.2:2.0]$. To avoid redundant permutations, only unique stretch combinations satisfying $\lambda_1 \geq \lambda_2 \geq \lambda_3$ were considered, resulting in 120 distinct deformation states. The training procedure was the same as for the classical models except that the number of random weight initializations per architecture was increased to 60 and the number of maximum iterations was set to 10000 since more instances showed a poor performance and slower convergence compared to all other energies presented in \citep{geuken2025} and this work. For this reason, all models were fine-tuned by \textit{SciPy’s} SLSQP method after training with the L-BFGS-B. In addition to the eight networks sizes presented in Section~\ref{sec_architecture} and motivated by the analytical construction above, various deeper (up to ten layers) and wider (up to 500 neurons per layer) networks were trained with the same settings but did not lead to better results.

The best CSSV-NN achieved a mean squared error of 1.67e-04, while the best reduced CSSV-NN led to a mean squared error of 2.80e-03 and the best Ball NN to 1.39e-02. Most of the stress predictions made by the Ball NN are more than 10\% off from the respective training data reference and so are many reduced CSSV-NN predictions. Even the CSSV-NN shows some predictions that are more than 10\% off, see Fig.~\ref{fig_mielke_prediction} (gray areas). Nevertheless, the stress over stretch behavior -- here $P_{11}$ (Fig.~\ref{fig_mielke_stress11}) and $P_{22}$ (Fig.~\ref{fig_mielke_stress22}) over $F_{11}$ for various, but fixed $F_{22}$ and $F_{33}$ combinations -- is captured really well by the CSSV-NN. The larger errors occur at the boundary of the training data regime (Fig.~\ref{fig_mielke_stress22} bottom middle for $F_{11}$=0.6) or as a smooth approximation of the rapid stress change of the reference model (Fig.~\ref{fig_mielke_stress22} top middle). The $P_{11}$ stresses are captured perfectly by the CSSV-NN (Fig.~\ref{fig_mielke_stress22}). The reduced CSSV-NN and the Ball NN show some larger deviations, especially for the $P_{22}$ stresses (Fig.~\ref{fig_mielke_stress22}). Overall, the remaining error of the CSSV-NN is most likely attributable to the optimizer converging to local minima, since the smooth target energy \eqref{eq:mielke_approx2} can, in principle, be approximated arbitrarily well by the chosen network architecture and the original energy itself admits an exact representation by a CSSV-NN.

To get an even better distinction of the expressivity of the three neural network models in the compressible case, one final example is conducted in the following section. 

\begin{table}[h!]
\caption{Achieved mean squared errors of the CSSV-NN, reduced CSSV-NN and Ball NN for the training data sets of the compressible Mielke as well as the additive Mielke-type model.}\label{tab:mses_appendix}
\centering
\begin{tabular}{|l|c|c|c|c|}
\hline
              & CSSV-NN & Reduced CSSV-NN & Ball NN \\ \hline
Mielke & 1.67e-04 & 2.80e-03 & 1.39e-02 \\ \hline
Additive Mielke-type & 3.40e-10 & 1.23e-03 & 1.09e-02 \\ \hline
\end{tabular}%
\end{table}

\begin{figure}[htbp!]
\centering
\caption*{\textbf{Mielke}}
\includegraphics[width=\textwidth]{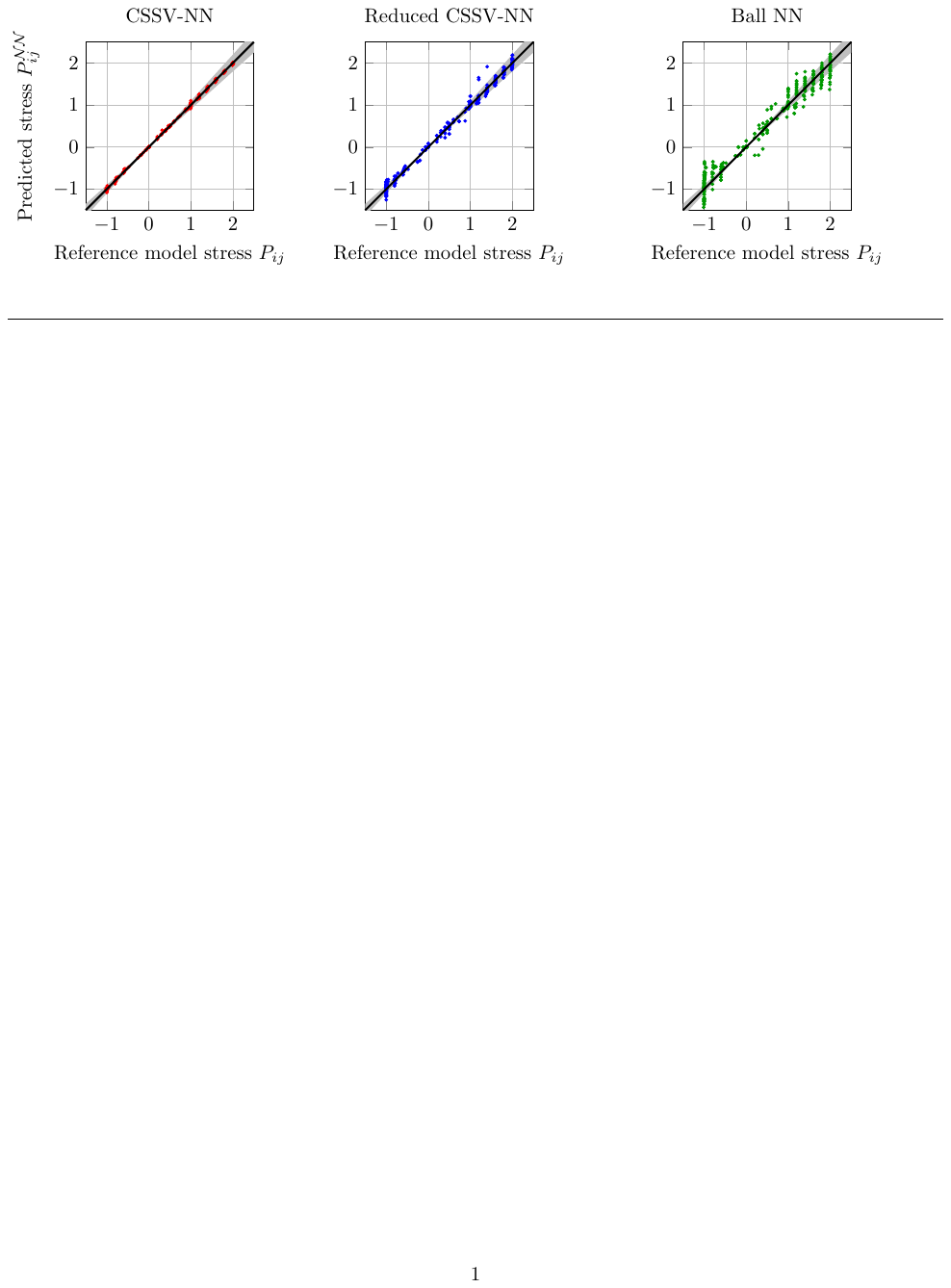}
\caption{Approximation of the Mielke model \eqref{eq:mielke_approx2}: Predicted stress of the CSSV-NN, the reduced CSSV-NN and the Ball NN over the reference model stress. The gray area indicates 10\% error bounds.}
\label{fig_mielke_prediction}
\end{figure}

\begin{figure}[htbp!]
\centering
\caption*{\textbf{Mielke}}
\includegraphics[width=\textwidth]{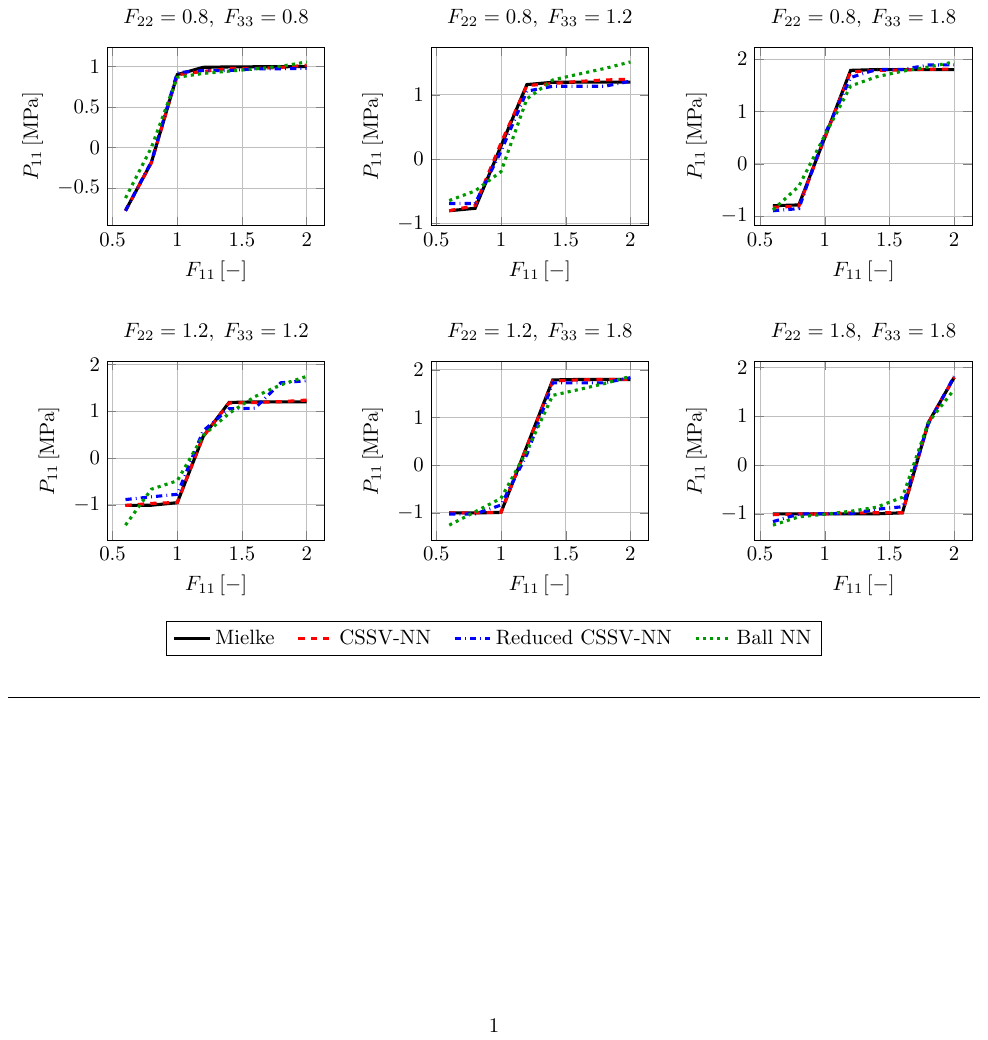}
\caption{Approximation of the Mielke model \eqref{eq:mielke_approx2}: Predicted stress $P_{11}$ of the CSSV-NN, the reduced CSSV-NN and the Ball NN over $F_{11}$ for various, but fixed combinations of $F_{22}$ and $F_{33}$. Plotted are only the training data points.}
\label{fig_mielke_stress11}
\end{figure}

\begin{figure}[htbp!]
\centering
\caption*{\textbf{Mielke}}
\includegraphics[width=\textwidth]{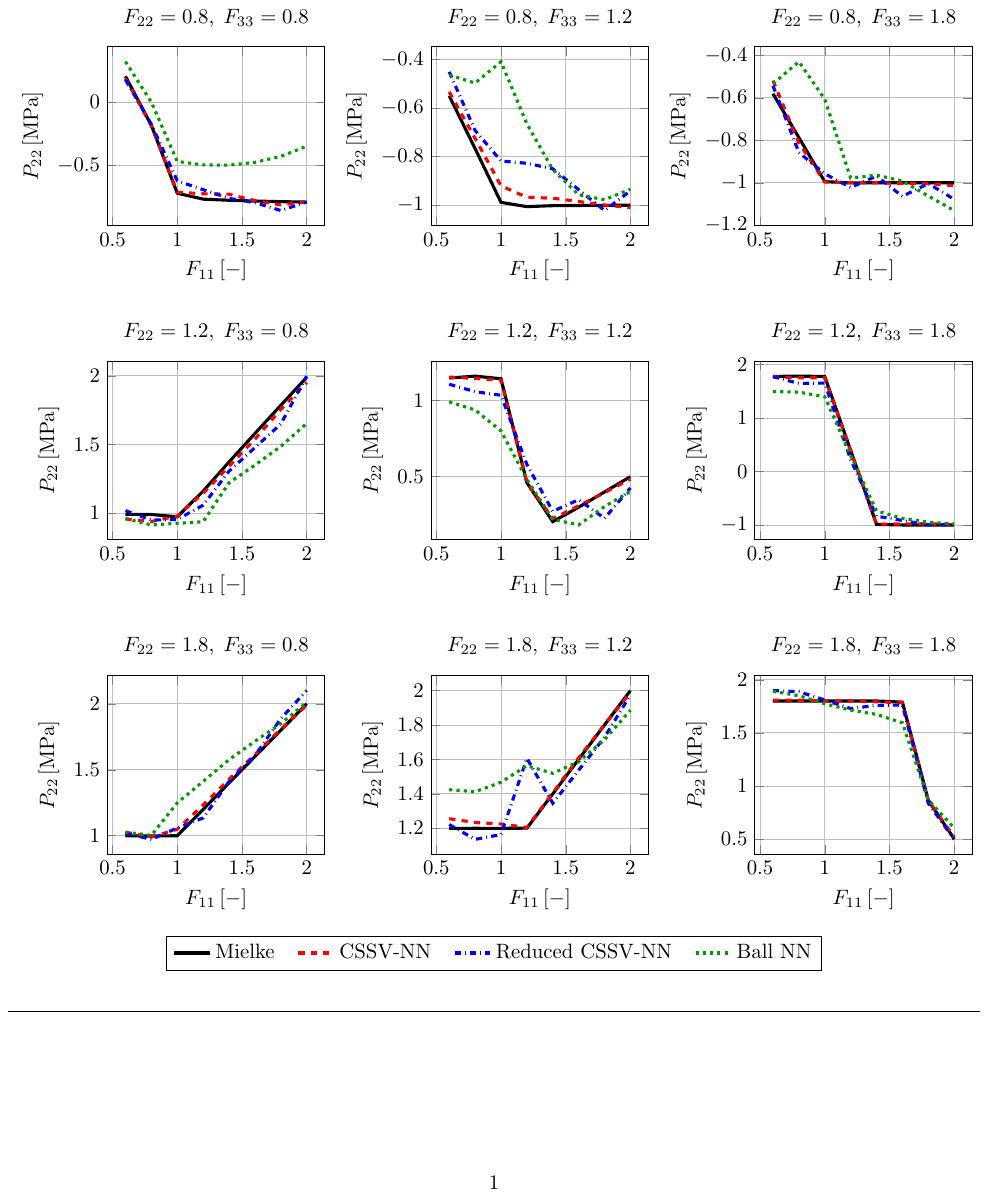}
\caption{Approximation of the Mielke model \eqref{eq:mielke_approx2}: Predicted stress $P_{22}$ of the CSSV-NN, the reduced CSSV-NN and the Ball NN over $F_{11}$ for various, but fixed combinations of $F_{22}$ and $F_{33}$. Plotted are only the training data points.}
\label{fig_mielke_stress22}
\end{figure}

\subsubsection{Compressible additive Mielke-type energy} \label{chp_mielke_sum}
A very similar energy to the one of Mielke is obtained, if the maximum is replaced by the sum of all arguments:
\begin{align}
\Psi_{\textnormal{MT}} = \vert \nu_1 - \nu_2 \nu_3 \vert + \vert \nu_2 - \nu_1 \nu_3 \vert + \vert \nu_3 - \nu_1 \nu_2 \vert. \label{eq:mielke_sum}
\end{align}
This again yields a polyconvex yet non-differentiable energy. A smooth, polyconvex approximation is given by
\begin{align}
\Psi^{\textnormal{approx}}_{\textnormal{MT}} = \dfrac{1}{a} \left( \textnormal{log} \left( \textnormal{cosh} \left( a \, (\nu_1 - \nu_2 \nu_3) \right) \right) + \textnormal{log} \left( \textnormal{cosh} \left( a \, (\nu_2 - \nu_1 \nu_3) \right) \right) + \textnormal{log} \left( \textnormal{cosh} \left( a \, (\nu_3 - \nu_1 \nu_2) \right) \right) \right), \label{eq:mielke_sum_approx} 
\end{align}
wherein $a$ once again controls the smoothness and was set to 10. The dataset was generated in the same way as for the previous example by sampling diagonal deformation gradients of the form
\begin{equation}
\bsym F= \textbf{diag}(\lambda_1,\lambda_2,\lambda_3),
\end{equation}
where each principal stretch $\lambda_i$ was selected from the uniformly discretized range $[0.6:0.2:2.0]$. To avoid redundant permutations, only unique stretch combinations satisfying $\lambda_1 \geq \lambda_2 \geq \lambda_3$ were considered, resulting in 120 distinct deformation states. The training procedure was the same as described in the previous section for the Mielke energy. The UInvar NN was again not investigated here.
The CSSV-NN alone is able to approximate the energy perfectly with a mean squared error of 3.40e-10 (Tab.~\ref{tab:mses_appendix}), cf. Fig.~\ref{fig_mielke_type}. The reduced CSSV-NN only achieved a mean squared error of 1.31e-03 and even exceeds the 10\% error bound for some data points (gray area in Fig.~\ref{fig_mielke_type}). The Ball NN shows large errors for many data points and only achieved a mean squared error of 1.09e-02. The stress-stretch curves (Figs.~\ref{fig_mielke_type_stress11} \& \ref{fig_mielke_type_stress22}) visualize the perfect approximation of the CSSV-NN. Despite the much higher mean squared error compared to the CSSV-NN, the reduced CSSV-NN also captures the material behavior really well with only slight deviations for $P_{11}$ (Fig.~\ref{fig_mielke_type_stress11}) and some smooth approximation of the rapid stress change in 22-direction (Fig.~\ref{fig_mielke_type_stress22} top left and top middle). While the Ball NN captures the stress responses at least qualitatively, it shows some large errors for the $P_{22}$ stresses (Fig.~\ref{fig_mielke_type_stress22}).

These observations once again underline the limited expressiveness of the Ball NN and that the improved version of Ball's polyconvexity criterion is not necessary in the compressible case. Furthermore, at least the results for the Mielke energy indicate that the reduced criterion based on the elementary symmetric polynomials of the signed singular values may also be not necessary and thus the ?-marked relation in Fig.~\ref{fig_crit_overview} seems not to be valid in the compressible case. However, a rigorous mathematical proof is still missing.
\begin{figure}[htbp!]
\centering
\caption*{\textbf{Additive Mielke-type energy}}
\includegraphics[width=\textwidth]{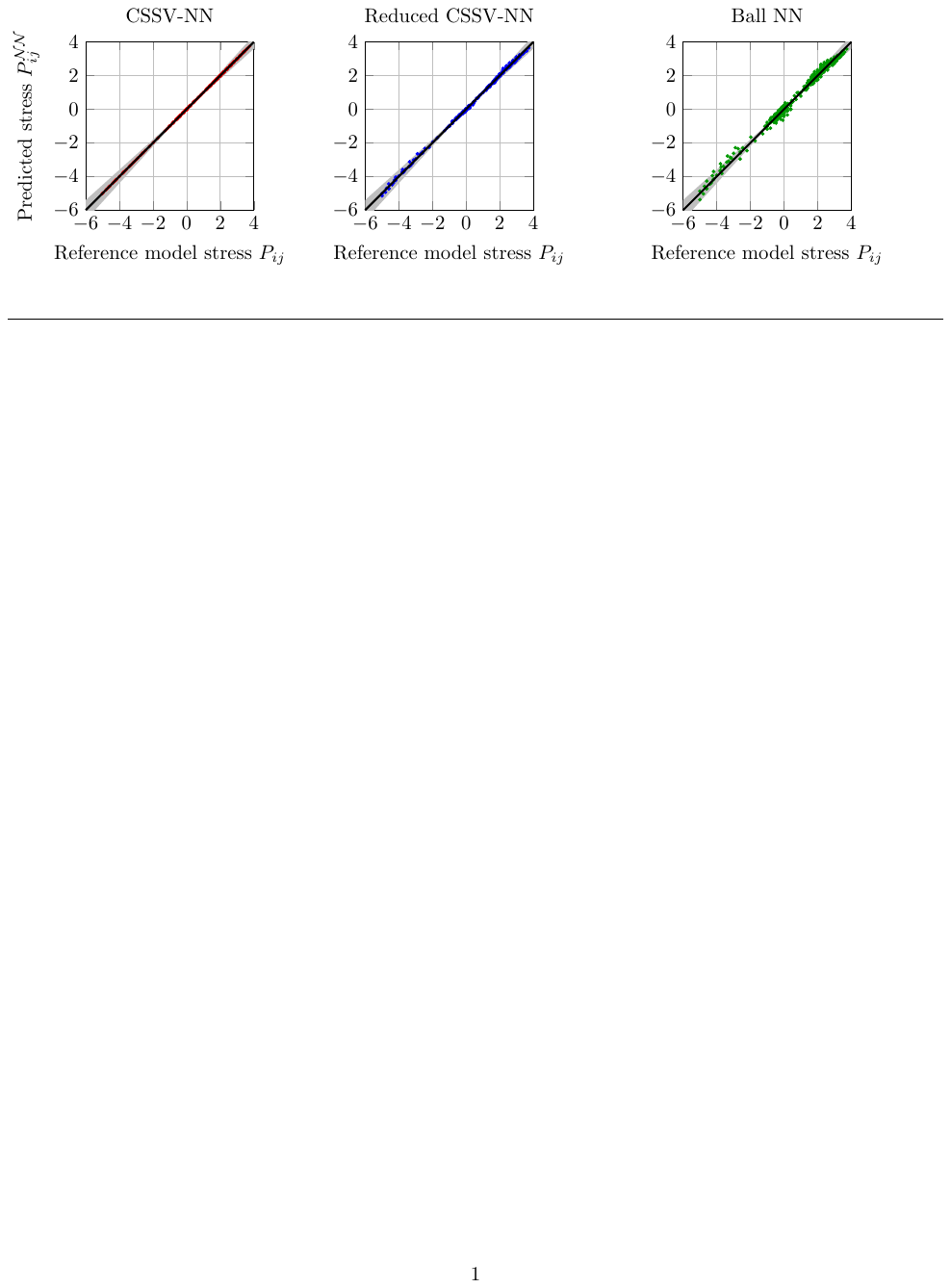}
\caption{Approximation of the additive Mielke-type model \eqref{eq:mielke_sum_approx}: Predicted stress of the CSSV-NN, the reduced CSSV-NN and the Ball NN over the reference model stress. The gray area indicates 10\% error bounds.}
\label{fig_mielke_type}
\end{figure}

\begin{figure}[htbp!]
\centering
\caption*{\textbf{Additive Mielke-type energy}}
\includegraphics[width=\textwidth]{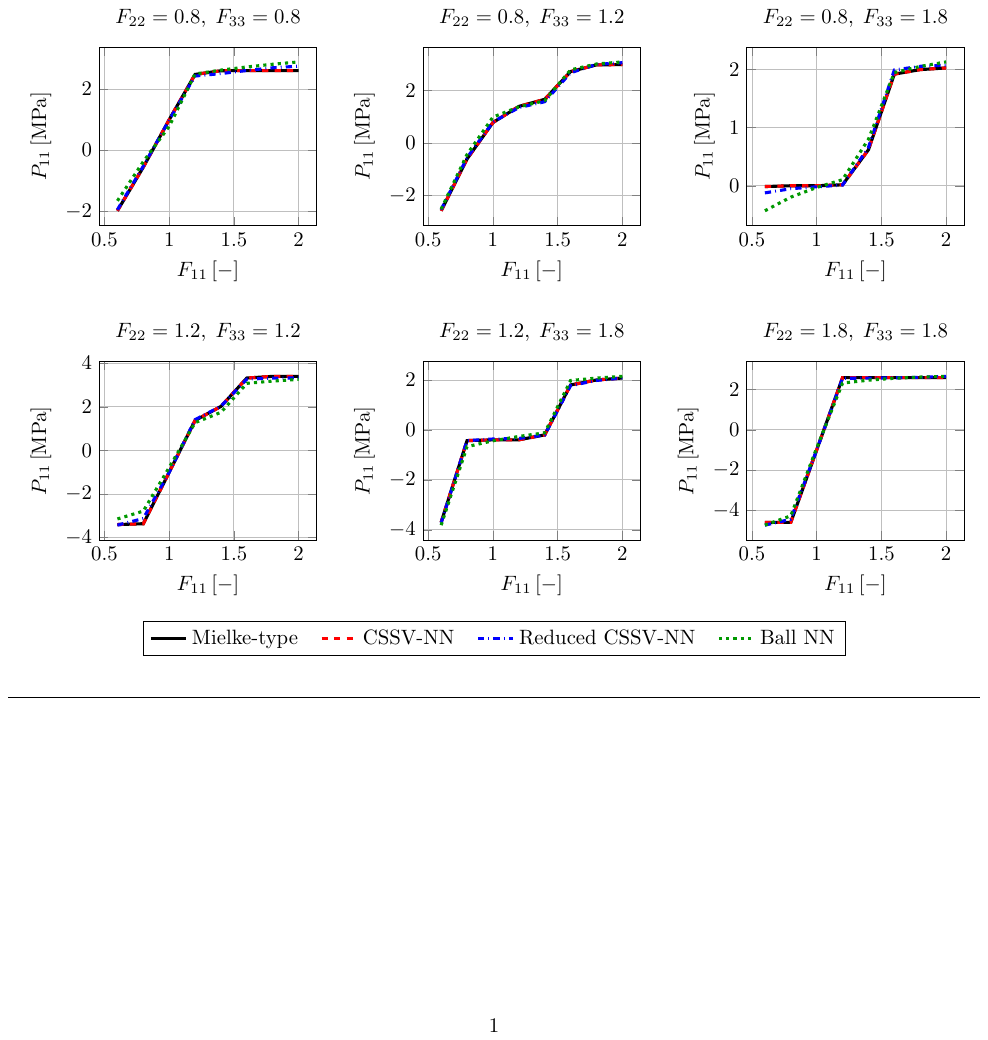}
\caption{Approximation of the additive Mielke-type model \eqref{eq:mielke_sum_approx}: Predicted stress $P_{11}$ of the CSSV-NN, the reduced CSSV-NN and the Ball NN over $F_{11}$ for various, but fixed combinations of $F_{22}$ and $F_{33}$. Plotted are only the training data points.}
\label{fig_mielke_type_stress11}
\end{figure}

\begin{figure}[htbp!]
\centering
\caption*{\textbf{Additive Mielke-type energy}}
\includegraphics[width=\textwidth]{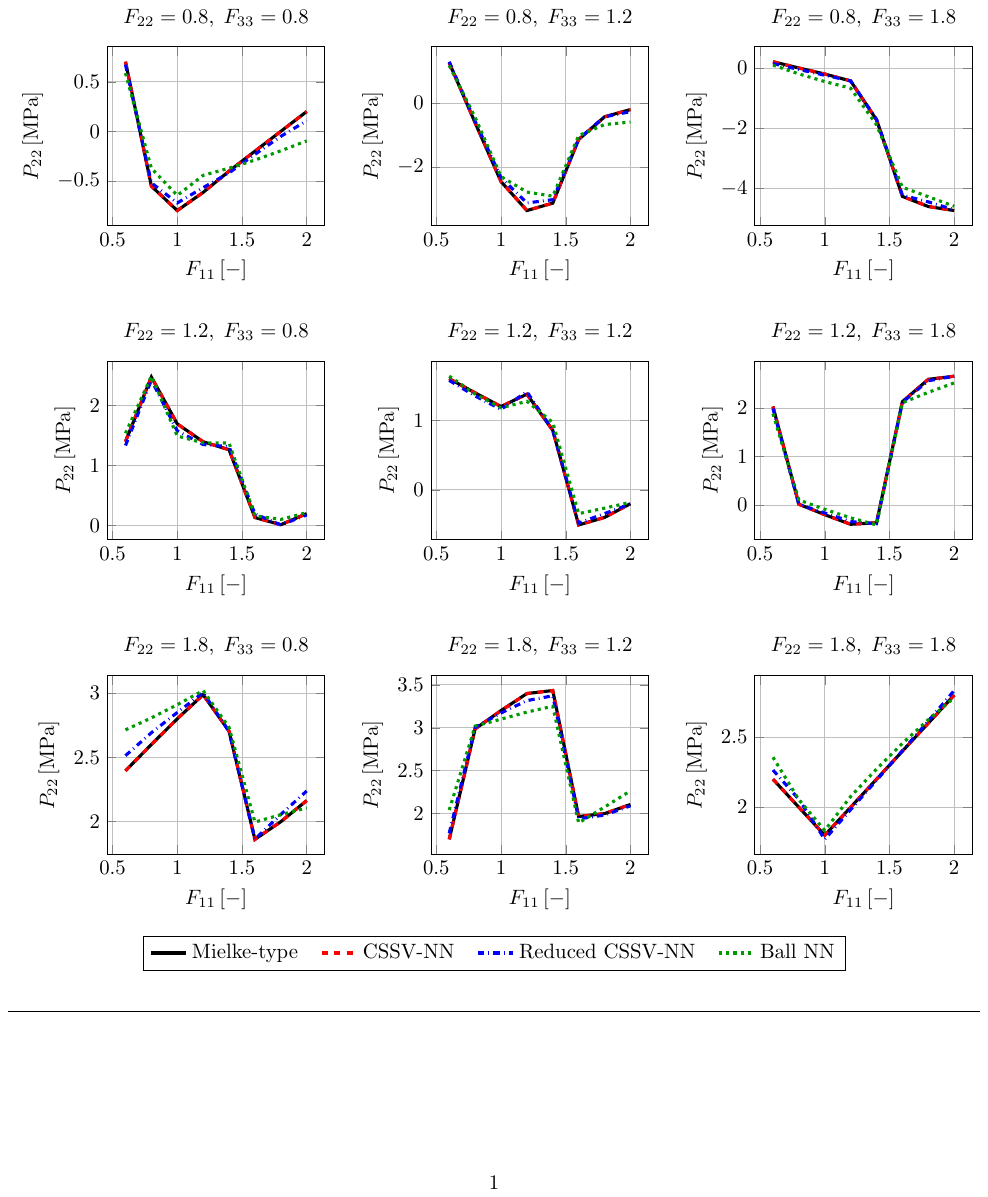}
\caption{Approximation of the additive Mielke-type model \eqref{eq:mielke_sum_approx}: Predicted stress $P_{11}$ of the CSSV-NN, the reduced CSSV-NN and the Ball NN over $F_{11}$ for various, but fixed combinations of $F_{22}$ and $F_{33}$. Plotted are only the training data points.}
\label{fig_mielke_type_stress22}
\end{figure}

\section{Conclusion}

This work provided a comprehensive analysis of sufficient and necessary criteria for frame-indifferent, isotropic polyconvex energies and demonstrated how these criteria directly affect the design and performance of neural network models for hyperelastic materials. By considering both compressible and incompressible formulations, the study also assessed the practical applicability of the different approaches and in the end enables a clear recommendation regarding which models should be used.

The numerical examples show that the inc-CSSV-NN and, in particular, the reduced inc-CSSV-NN reproduce all incompressible material models, the experimental Treloar data and the challenging, incompressible Mielke energy with excellent accuracy. Their ideal performance for every numerical example highlights their high expressiveness and robustness. On the other hand, the reduced CSSV-NN is not able to approximate the compressible Mielke and additive Mielke-type energy perfectly, which indicates that the underlying polyconvexity criterion might not be necessary. However, a formal proof for that remains an open challenge in the compressible and incompressible case, while a proof of the universal approximation theorem for inc-CSSV-NNs was presented. The extension of the CSSV-NN framework to incompressible materials turned out to be especially relevant, as incompressibility plays a central role for many polymers as well as biological materials and can be enforced naturally within this formulation.

In contrast, neural networks based on more restrictive sufficient criteria show clear limitations. The inc-/Ball NN fits classical and experimental data well but fails to approximate the Mielke energy in the compressible as well as incompressible case, substantiating that both -- Ball’s original criterion and the improved version -- are sufficient but not necessary and can significantly restrict expressiveness. In addition, the inc-/Ball NN is computationally more expensive compared to the reduced inc-/CSSV-NN. The inc-UInvar NN performs considerably worse in all settings and does not capture essential characteristics of the underlying energies. Therefore, neither the inc-/Ball NN nor the inc-/UInvar NN should be used when accuracy and flexibility are required.

Altogether, the results demonstrate that neural network models relying on necessary and sufficient criteria for polyconvexity offer clear advantages over classical invariant-based or overly restrictive approaches. Tab.~\ref{tab:overview_models} summarizes the performance of all neural network variants for all investigated material models and Treloar's experimental data. The inc-/CSSV-NN provides the most accurate, expressive and practically applicable framework and is therefore recommended for modeling frame-indifferent, isotropic polyconvex hyperelasticity of compressible and incompressible materials. If high efficiency is required, the reduced inc-CSSV-NN may also be suitable for incompressible materials.

\begin{table}[h]
\caption{Performance of all neural network variants for all investigated incompressible material models, Treloar's experimental data and the compressible Mielke and additive Mielke-type model.}\label{tab:overview_models}
\centering
\renewcommand{\arraystretch}{1.5}

\begin{tabular}{|c|c|c|c|c|}
\hline
  &
 inc-/CSSV-NN &
 Reduced inc-/CSSV-NN &
 inc-/Ball NN &
 inc-/UInvar NN \\ \hline

 Neo--Hooke &
\cellcolor[rgb]{0.20, 0.70, 0.30} $++$ &
\cellcolor[rgb]{0.20, 0.70, 0.30} $++$ &
\cellcolor[rgb]{0.20, 0.70, 0.30} $++$ &
\cellcolor[rgb]{1.00, 0.65, 0.0} $-$ \\ \hline

 Mooney--Rivlin &
\cellcolor[rgb]{0.20, 0.70, 0.30} $++$ &
\cellcolor[rgb]{0.20, 0.70, 0.30} $++$ &
\cellcolor[rgb]{0.20, 0.70, 0.30} $++$ &
\cellcolor[rgb]{0.40, 0.9, 0.40} $+$ \\ \hline

 Gent &
\cellcolor[rgb]{0.20, 0.70, 0.30} $++$ &
\cellcolor[rgb]{0.20, 0.70, 0.30} $++$ &
\cellcolor[rgb]{0.20, 0.70, 0.30} $++$ &
\cellcolor[rgb]{1.00, 0.65, 0.0} $-$ \\ \hline

 Arruda--Boyce &
\cellcolor[rgb]{0.20, 0.70, 0.30} $++$ &
\cellcolor[rgb]{0.20, 0.70, 0.30} $++$ &
\cellcolor[rgb]{0.20, 0.70, 0.30} $++$ &
\cellcolor[rgb]{1.00, 0.65, 0.0} $-$ \\ \hline

 Treloar &
\cellcolor[rgb]{0.20, 0.70, 0.30} $++$ &
\cellcolor[rgb]{0.20, 0.70, 0.30} $++$ &
\cellcolor[rgb]{0.20, 0.70, 0.30} $++$ &
\cellcolor[rgb]{1.00, 0.1, 0.0} $--$ \\ \hline

 inc-Mielke &
\cellcolor[rgb]{0.20, 0.70, 0.30} $++$ &
\cellcolor[rgb]{0.20, 0.70, 0.30} $++$ &
\cellcolor[rgb]{1.00, 0.65, 0.0} $-$ &
\cellcolor[rgb]{1.00, 0.1, 0.0} $--$ \\ \hline

 Mielke &
\cellcolor[rgb]{0.40, 0.9, 0.40} $+$ &
\cellcolor[rgb]{1.00, 0.65, 0.0} $-$ &
\cellcolor[rgb]{1.00, 0.1, 0.0} $--$ &
 \\ \hline

Additive Mielke-type &
\cellcolor[rgb]{0.20, 0.70, 0.30} $++$ &
\cellcolor[rgb]{0.40, 0.9, 0.40} $+$ &
\cellcolor[rgb]{1.00, 0.95, 0.40} o &
 \\ \hline

\end{tabular}\vspace{0.5em}
\caption*{$++:$ Perfect fit, $+:$ Good fit, o$:$ Moderate fit, $-:$ Poor fit, $--:$ Very poor fit}
\end{table}

\section*{Acknowledgements}
Financial support from the German Research Foundation (DFG) via SFB/TRR 188 (278868966), project C01, is gratefully acknowledged. Furthermore, the authors gratefully acknowledge the computing time
provided on the Linux HPC cluster at Technical University Dortmund (LiDO3), partially funded in the course of the Large-Scale Equipment Initiative by the German Research Foundation (DFG) as project 271512359. This work was also funded by the European Union - NextGenerationEU - FINM-MEH - uniri-mzi-25-16. The views and opinions expressed are solely those of the authors and do not necessarily reflect the official stance of the European Union or the European Commission. Neither the European Union nor the European Commission can be held accountable for them.
\begin{figure}[htbp!]
	\centering
	\begin{minipage}{0.35\textwidth}
		\centering
		\includegraphics[width=\linewidth]{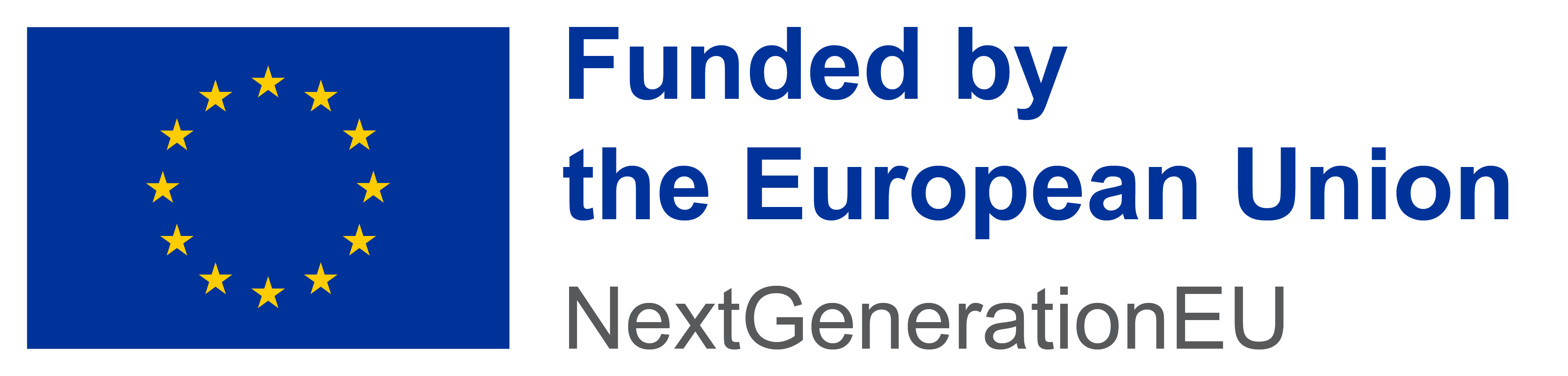}
		\label{fig:subA}
	\end{minipage}
\end{figure}

\bibliographystyle{apalike-ejor}%
\bibliography{Literatur}%

\begin{appendix} 
\section{Proof of Theorem \ref{theo_princstretchpolyconv}: Sufficient polyconvexity criterion based on principal stretches} \label{proof_ball}
\begin{proof}
Let $\Psi$, $\breve\Psi$ and $\Psi_{\textnormal{Ball}}$ be given as in Theorem \ref{theo_princstretchpolyconv}. In order to show the polyconvexity of $\Psi$ using Theorem \ref{theo_singvalpolyconv}, define $\Psi_{\textnormal{ssv}} \colon \R^7 \to \R_\infty$ by
	\begin{equation}
\Psi_{\textnormal{ssv}} (x_1,x_2,x_3,x_4,x_5,x_6,x_7) \coloneqq \begin{cases}
\Psi_{\textnormal{Ball}}(|x_1|,|x_2|,|x_3|,|x_4|,|x_5|,|x_6|,x_7) & \text{if } x_7 \geq 0
\\
\infty& \text{if } x_7 < 0\,.
\end{cases}
\end{equation}
It has to be shown that $\Psi_{\textnormal{ssv}}$ is convex and lower semicontinuous and $\widetilde{\Psi}$ is $\Pi_3$-invariant.
Let $\bsym x,\bsym y \in \R^6 \times[0, \infty)$ and $t \in [0,1]$ be arbitrary, then
\begin{align}
\begin{aligned}\label{eq:134552}
\Psi_{\textnormal{ssv}}( t \, \bsym x +(1-t) \, \bsym y) &= \Psi_{\textnormal{Ball}}( |t \, x_1 +(1-t)\, y_1|, \dots, |t \, x_6 +(1-t)\, y_6|, t\, x_7 +(1-t)\, y_7 ) 
\\
&\leq
\Psi_{\textnormal{Ball}}( t\, |x_1| +(1-t)\, |y_1|, \dots, t\, |x_6| +(1-t)\, |y_6|, t\, x_7 +(1-t)\, y_7 ) 
\\
&\leq t\,\Psi_{\textnormal{Ball}}(|x_1|, \dots,|x_6|, x_7 ) + (1-t)\,\Psi_{\textnormal{Ball}}(|y_1|, \dots,|y_6|, y_7 ) 
\\
&= t\, \Psi_{\textnormal{ssv}}(\bsym x) + (1-t)\,\Psi_{\textnormal{ssv}}(\bsym y) \,,
\end{aligned}
\end{align}
where in the first inequality it is used that the absolute value is convex and $\Psi_{\textnormal{Ball}}$ is non-decreasing in its first six arguments. If $x_7$ or $y_7$ is negative the right-hand side of Eq.~\eqref{eq:134552} 
becomes $\infty$ and the convexity is trivial. Thus, $\Psi_{\textnormal{ssv}}$ is convex.

The function $\Psi_{\textnormal{ssv}}$ is lower semicontinuous on $\R^6 \times[0, \infty)$ because it is the composition of a lower-semicontinuous function $\Psi_{\textnormal{Ball}}$ with the continuous function $|\cdot|$. Since $\R^6 \times[0, \infty)$ is closed, the extension by $\infty$ outside this set preserves the lower semicontinuity.

It remains to show that $\widetilde{\Psi}$ is $\Pi_3$-invariant, i.e.~
\begin{equation}\label{eq:23456}
\widetilde{\Psi}(\bsym \nu) = \widetilde{\Psi}(\bsym{\mathcal{P}} \cdot \bsym \nu) \quad \forall \, \bsym \nu \in \R^3\text{ and } \bsym{\mathcal{P}} \in \Pi_3\,.
\end{equation}
For $\nu_1\nu_2\nu_3 \geq 0$, one can identify $\bsym \nu = \bsym B \cdot \textbf{diag}(\bsym \epsilon) \cdot \bsym \lambda$ and $\bsym{\mathcal{P}} \cdot \bsym \nu = \bsym B' \cdot \textbf{diag}(\bsym \epsilon') \cdot \bsym \lambda$ for some $\bsym B, \bsym B'\in \operatorname{Perm}(3)$ and $\bsym \epsilon, \bsym  \epsilon'\in \lbrace -1,1 \rbrace^3$ with $\epsilon_1 \, \epsilon_2 \, \epsilon_3 = 1$ and
$\lambda_i = |\nu_i|$. Compute 
\begin{align}\label{eq:2456}
	\begin{aligned}
\widetilde{\Psi}(\bsym  \nu) &= \Psi_{\textnormal{ssv}}(m(\bsym \nu)) = \Psi_{\textnormal{Ball}}( |m(\bsym \nu)_1|, \dots,|m(\bsym \nu)_6|, m(\bsym \nu)_7) =
\Psi_{\textnormal{Ball}}( m(\bsym \lambda)_1, \dots,m(\bsym \lambda)_6, m(\bsym \lambda)_7) 
\\
&=
\breve{\Psi}(\bsym B\cdot \bsym \lambda) = \breve{\Psi}(\bsym \lambda) \,,
	\end{aligned}
\end{align}
where in the last equality the $\Perm(3)$-invariance of $\breve{\Psi}$ is used.
This implies Eq.~\eqref{eq:23456} in the case that $\nu_1\nu_2\nu_3 \geq 0$. If $\nu_1\nu_2\nu_3 < 0$ both sides of Eq.~\eqref{eq:23456} are equal to $\infty$. Thus, $\widetilde{\Psi}$ is $\Pi_3$-invariant.

Since $\widetilde{\Psi}$ is $\Pi_3$-invariant, it can be identified with an isotropic and frame-indifferent function. By Eq.~\eqref{eq:23456} and \eqref{eq:6785}, this function can be identified with $\Psi$.
Thus, Eq.~\eqref{eq_psiisoandconv} from Theorem \ref{theo_singvalpolyconv} holds and one obtains the polyconvexity of $\Psi$.
\end{proof}
\end{appendix}
\end{document}